\newcommand{\SM}[1][]{\ensuremath{\mathfrak{M}_{#1}}\xspace} 
\newcommand{\SigSet}{\ensuremath{\mathcal{M}}\xspace} 
\newcommand{\ColSet}{\ensuremath{\mathcal{R}}\xspace} 
\newcommand{\ColIn}[1][C]{\ensuremath{#1^{-}}\xspace} 
\newcommand{\ColOut}[1][C]{\ensuremath{#1^{+}}\xspace} 
\newcommand{\SpeedFun}{\ensuremath{\mathcal{S}}\xspace} 
\newcommand{\Speed}[1]{\ensuremath{\mathcal{S}(#1)}\xspace} 
\newcommand{\Accu}{\text{\ding{90}}\xspace} 
\newcommand{\Void}{\ensuremath{\oslash}\xspace} 
\newcommand{\Sig}[1][]{\ensuremath{\mu_{#1}}\xspace} 
\newcommand{\SigName}[1]{\ensuremath{\textsf{\sf{#1}}\xspace}}			
\newcommand{\SigNameL}[1]{\ensuremath{\overleftarrow{\SigName{#1}}\xspace}}	
\newcommand{\SigNameR}[1]{\ensuremath{\overrightarrow{\SigName{#1}}\xspace}}	
\newcommand{\Col}{\ensuremath{\ColIn\rightarrow \ColOut}\xspace} 
\newcommand{\ColName}[1][]{\ensuremath{\rho_{#1}}\xspace} 
\newcommand{\ColRule}[2]{\ensuremath{\{#1\}\rightarrow\{#2\}}\xspace} 
\newcommand{\SigAt}[2]{\ensuremath{#1}@{#2}\xspace} 
\newcommand{\Space}{\ensuremath{\mathbb{R}}\xspace} 
\newcommand{\Time}{\ensuremath{\mathbb{R^{+}}}\xspace} 
\newcommand{\SpaceTime}{\ensuremath{\Space\times\Time}\xspace} 
\newcommand{\STD}{\ensuremath{\mathbb{D}}\xspace} 
\newcommand{\STDVal}{\ensuremath{\mathcal{V}}\xspace} 
\newcommand{\STDinclu}{\ensuremath{\subseteq}\xspace} 
\newcommand{\Config}[1][]{\ensuremath{c_{#1}}\xspace} 
\newcommand{\TLimit}{\ensuremath{\tilde{t}}\xspace} 
\newcommand{\MaxSpeed}{\ensuremath{\nu_{max}}\xspace} 
\newcommand{\MinSpeed}{\ensuremath{\nu_{min}}\xspace} 
\newcommand{\CausalPast}{\ensuremath{\Gamma^{-}}\xspace} 
\newcommand{\SigWallZero}{\SigName{w$_0$}\xspace}
\newcommand{\SigWallx}{\SigName{w$_x$}\xspace}
\newcommand{\SMsubtraction}{\ensuremath{\SM[sub]}\xspace}	
\newcommand{\SigSubInit}{\SigNameR{init}\xspace}		
\newcommand{\SigSubZigSmall}{\SigNameR{zig}\xspace}		
\newcommand{\SigSubZagSmall}{\SigNameL{zag}\xspace}		
\newcommand{\SigSubZigBig}{\SigNameR{ZIG}\xspace}		
\newcommand{\SigSubZagBig}{\SigNameL{ZAG}\xspace}		
\newcommand{\SigSubWallZero}{\SigName{w$_0$}\xspace}		
\newcommand{\SigSubWalla}{\SigName{w$_a$}\xspace}		
\newcommand{\SigSubWallb}{\SigName{w$_b$}\xspace}		
\newcommand{\SigSubWallr}{\SigName{w$_r$}\xspace}		
\newcommand{\SMmodulo}{\ensuremath{\SM[mod]}\xspace}		
\newcommand{\SigModInit}{\SigNameR{init}\xspace}		
\newcommand{\SigModZigSmall}{\SigNameR{zig}\xspace}		
\newcommand{\SigModZagSmall}{\SigNameL{zag}\xspace}		
\newcommand{\SigModZigBig}{\SigNameR{ZIG}\xspace}		
\newcommand{\SigModZagBig}{\SigNameL{ZAG}\xspace}		
\newcommand{\SigModWallZero}{\SigName{w$_0$}\xspace}		
\newcommand{\SigModWalla}{\SigName{w$_a$}\xspace}		
\newcommand{\SigModWallb}{\SigName{w$_b$}\xspace}		
\newcommand{\SigModWallr}{\SigName{w$_r$}\xspace}		
\newcommand{\SigExDeuxGreen}{\SigName{s$_1$}\xspace}
\newcommand{\SigExDeuxBlack}{\SigName{s$_2$}\xspace}
\newcommand{\SigExDeuxBlue}{\SigName{s$_3$}\xspace}
\newcommand{\SigExDeuxRed}{\SigName{s$_4$}\xspace}
\newcommand{\SigLeft}{\SigNameR{left}\xspace}
\newcommand{\SigRight}{\SigNameL{right}\xspace}
\newcommand{\SigZig}{\SigNameR{zig}\xspace}
\newcommand{\SigZag}{\SigNameL{zag}\xspace}
\newcommand{\Supp}[1]{\ensuremath{\widehat{#1}}\xspace}
\newcommand{\SigZero}{\SigName{S}\xspace}
\newcommand{\SigOne}{\SigNameR{R}\xspace}
\newcommand{\SMeuclidPhi}{\ensuremath{\SM[gcd]^{\varphi}}\xspace}
\newcommand{\SMeuclid}{\ensuremath{\SM[gcd]}\xspace}	
\newcommand{\SigEuclidZigSmall}{\SigNameR{zig}\xspace}	
\newcommand{\SigEuclidZagSmall}{\SigNameL{zag}\xspace}	
\newcommand{\SigEuclidZigBig}{\SigNameR{ZIG}\xspace}	
\newcommand{\SigEuclidZagBig}{\SigNameL{ZAG}\xspace}	
\newcommand{\SigEuclidWallZero}{\SigName{w$_0$}\xspace}	
\newcommand{\SigEuclidWalla}{\SigName{w$_a$}\xspace}	
\newcommand{\SigEuclidWallb}{\SigName{w$_b$}\xspace}	
\newcommand{\SigEuclidStart}{\SigNameR{start}\xspace}	
\newcommand{\SigS}{\SigName{S}\xspace}			
\newcommand{\SigR}{\SigNameR{R}\xspace}			
\newcommand{\SigL}{\SigNameL{L}\xspace}			
\newcommand{\Posi}{\ensuremath{x_0}\xspace} 		
\newcommand{\SMpq}{\ensuremath{\SM[3]^{p/q}}\xspace}	
\newcommand{\SMnu}{\ensuremath{\SM[3]^{\nu}}\xspace}	
\newcommand{\StripName}[1][]{strip#1\xspace}		
\newcommand{\StripMultiName}[1][]{mesh#1\xspace}	
\newcommand{\Strip}[4]{\ensuremath{(#1,#2,#3,#4)}--\ensuremath{\mathcal{S}trip}\xspace}
\newcommand{\StripMulti}[5]{\ensuremath{(#1,#2,#3,#4,#5)}--\ensuremath{\mathcal{M}esh}\xspace}
\newlength\Width
\newlength\ArrowSpacing
\newlength\Height
\newlength\Sep
\tikzstyle{arrow}=[thin] 
\newcommand{\Latin}[1]{\textsl{#1}\xspace}
\newcommand{\ie}{\Latin{i.e.}}
\newcommand{\eg}{\Latin{e.g.}}
\newcommand{\N}{\ensuremath{\mathbb{N}}\xspace} 
\newcommand{\Z}{\ensuremath{\mathbb{Z}}\xspace} 
\newcommand{\Q}{\ensuremath{\mathbb{Q}}\xspace} 
\newcommand{\R}{\ensuremath{\mathbb{R}}\xspace} 
\newcommand{\PowerSet}[1]{\ensuremath{\mathcal{P}(#1)}\xspace} 
\newcommand{\Card}[1]{\ensuremath{|#1|}} 
\newcommand{\Injection}{injective\xspace} 
\newcommand{\Image}[1]{\ensuremath{\mathcal{I}m(#1)}\xspace} 
\newcommand{\Support}[1]{\ensuremath{support(#1)}\xspace} 
\def\EqClass#1{\expandafter\ensuremath{[#1]_{\sim}}\xspace}
\newcommand{\SetQuotient}[2]{\ensuremath{#1/\!\raisebox{-.65ex}{\ensuremath{#2}}}}
\newcommand{\Eq}[1]{Eq.~(\ref{#1})}
\newcommand{\Sec}[1]{Sec.~\ref{#1}}
\newcommand{\Section}[1]{Section~\ref{#1}}
\newcommand{\Subsec}[1]{Subsect.~\ref{#1}}
\newcommand{\Def}[1]{Def.~\ref{#1}}
\newcommand{\Cor}[1]{Coro.~\ref{#1}}
\newcommand{\Lem}[1]{Lem.~\ref{#1}}
\newcommand{\Lemma}[1]{Lemma~\ref{#1}}
\newcommand{\Fig}[1]{Fig.~\ref{#1}}
\newcommand{\Figure}[1]{Figure~\ref{#1}}
\newenvironment{remark}
  {\smallskip\noindent{\bf Remark.~}}
  {\smallskip}
\newtheorem{theorem}{Theorem}
\newtheorem{definition}{Definition}
\newtheorem{corollary}{Corollary}
\newtheorem{proposition}{Proposition}
\newtheorem{lemma}{Lemma}
\titleformat{\paragraph}[runin]{}{}{}{\bfseries}[.]%
\begin{document}
%

\title{Abstract Geometrical Computation 8:\\
  Small Machines, Accumulations \& Rationality\thanks{This work was partially funded by the ANR project AGAPE, ANR-09-BLAN-0159-03.}
}

\author{Florent Becker$^{1}$
  \and Mathieu Chapelle$^{1}$
  \and J\'er\^ome Durand-Lose$^{1}$
  \thanks{corresponding author: \texttt{jerome.durand-lose@univ-orleans.fr}}%
  \and Vincent Levorato$^{1,2}$
  \and Maxime Senot$^{1}$
}

\date{}

\maketitle

$^{1}$ Univ. Orl\'eans, ENSI de Bourges, LIFO \'EA 4022, F-45067 Orl\'eans, France 


$^{2}$ CESI-IRISE, 959 rue de la Bergeresse, F-45160 Olivet, France

\bigskip

\begin{abstract}
In the context of abstract geometrical computation, computing with colored line segments, we study the possibility of having an accumulation with \emph{small} signal machines, \ie, signal machines having only a very limited number of distinct speeds.
The cases of $2$ and $4$ speeds are trivial: we provide a proof that no machine can produce an accumulation in the case of $2$ speeds and exhibit an accumulation with $4$ speeds.

The main result is the twofold case of $3$ speeds.
On the one hand, we prove that accumulations cannot happen when all ratios between speeds and all ratios between initial distances are rational.
On the other hand, we provide examples of an accumulation in the case of an irrational ratio between two speeds and in the case of an irrational ratio between two distances in the initial configuration.

This dichotomy is explained by the presence of a phenomenon computing Euclid's algorithm ($\gcd$): it stops if and only if its input is commensurate (\ie, of rational ratio).
\end{abstract}
\renewcommand{\abstractname}{Keywords}
\begin{abstract}
  Abstract Geometrical Computation;
  Signal Machine;
  Accumulation;
  Unconventional Computing;
  Euclid's Algorithm.
\end{abstract}

%
%
%
\section{Introduction}
\label{sec:introduction}
Imagine yourself with some color pencils and a sheet of paper together with ruler and compass.
There is something drawn in a corner of the paper and you are given rules so as to extend the drawing.
According to the rules and the initial drawing/configuration you might stop soon, have to extend the paper indefinitely or draw forever in a bounded part of the paper as on top of \Fig{fig:most-basic-accumulation}.
Already on this simple setting emerges the usual questions related to dynamical systems: liveness, unbounded orbit or convergence/accumulation.

In this article we concentrate on accumulation in the case where the dynamical system is a signal machine in the context of abstract geometrical computation.
In this setting, one drawing direction is distinguished and used as time axis.
Line segments are enlarged synchronously until they intersect another one.
This goes on until no more collision can happen.

\begin{figure}[hbt]
  \centering
  \subfigure[Most basic accumulation.\label{fig:most-basic-accumulation}]{%
    \includegraphics[width=0.3\textwidth]{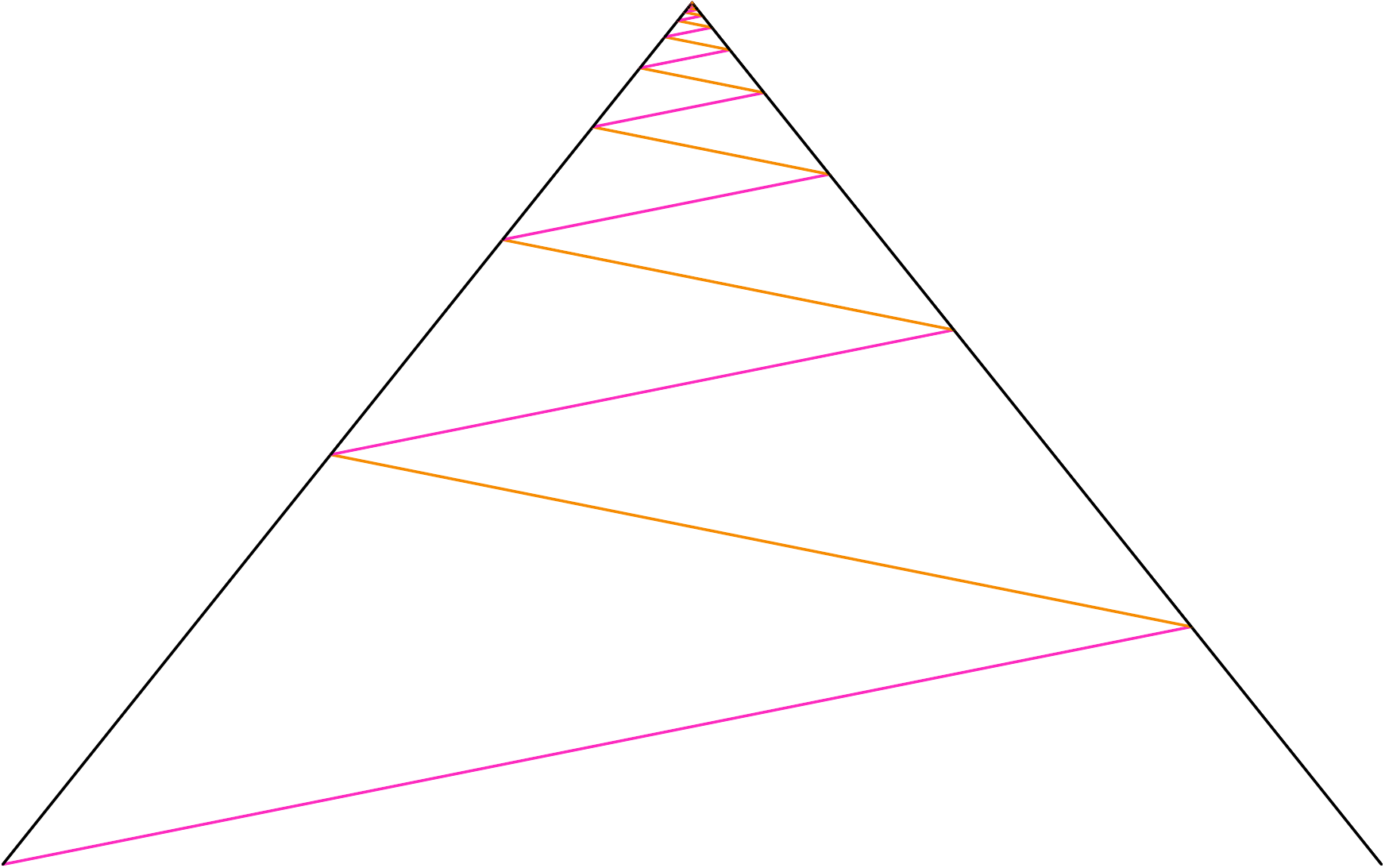}}
  \hfill
  \subfigure[$1$ speed: no collision.\label{fig:1-speed}]{%
    \quad
    \begin{tikzpicture}[x=1.2em,y=.9em]
      \foreach \x/\c in {1/blue,1.3/dotted,2/black,3/blue,3.8/dashed,4.5/red} {
        \draw[\c] (\x,0) +(-3,0)-- (\x,7) ;
      }
    \end{tikzpicture}
    \quad
  }
  \hfill
  \subfigure[$2$ speeds: no accumulation.\label{fig:2-speed}]{%
    \begin{tikzpicture}[x=1.2em,y=.9em]
      \foreach \x/\c in {1/blue,1.3/dotted,2/black,3/blue,3.8/dashed,4.5/red} {
        \draw[\c] (\x,0) +(-3.5,0)-- (\x,7) ;
        \draw[\c] (-\x,0) +(6,0)-- (-\x,7) ;
      }
    \end{tikzpicture}
  }
  \caption{Basic cases.}
\end{figure}

The line segments are the traces of \emph{signals} and their intersections are \emph{collisions} of signals.
Each signal corresponds to some \emph{meta-signal}.
In-coming signals are removed and new ones are emitted according to the meta-signals associated with the incoming signals.
This is called a \emph{collision rule}.
Signals that correspond to the same meta-signal must travel at the same speed, the resulting traces are parallel.
There are finitely many meta-signals so there are finitely many collision rules.

The signals move on a one dimensional Euclidean space orthogonal to the temporal axis (in the figures, space is horizontal and time elapses upwards).
Considering the traces leads to two dimensional drawings called  \emph{space-time diagrams} (as illustrated throughout the article).
Please let us point out that space and time are continuous spaces (the real line) and that signals as well as collisions are dimensionless points.
Computations are exact, there is no noise nor approximation.

Signal machines are very powerful and colorful complex systems. 
Accumulation is easy to achieve and is in fact the cornerstone to hypercomputation in the model \cite{durand-lose09nc}.
In the present article, we investigate the minimum size of a machine so that an accumulation is possible or not.
Already with four meta-signals of different speeds (or directions on the drawing) an accumulation can happen as depicted on \Fig{fig:most-basic-accumulation}.

In fact, we will see in the sequel of this article that the number of meta-signals is not relevant; the relevant measure here is the number of different speeds and in the case of three speeds, their values and the initial positions as explained below.
One speed does not even allow any collision (see \Fig{fig:1-speed}).
With two speeds, the number of collisions is finite and signals have to follow a regular grid which has no accumulation (see \Fig{fig:2-speed}).

When three signals of different speeds are present, in order to make accumulations more likely to occur, we can imagine that each collision generates all signals.
Then, in the generated space-time diagrams, we can exhibit the emulation of Euclid's algorithm to compute some greatest common divisor ($\gcd$).
If the ratios involved are irrational, then the algorithm does not stop and brings forth an accumulation.
On the contrary, if every ratio that could be involved in a $\gcd$ computation is rational, then a global $\gcd$ is generated and some global regular \emph{mesh} emerges.
Whatever the number of meta-signals and whatever the collision rules are, there is no way for the signals to escape this mesh.
The signals have to be on the mesh and the mesh does not have any accumulation point. Hence, the diagram cannot have an accumulation.

\paragraph{State of the art}

Signal machines are one of the (unconventional) models of computation dealing with Euclidean geometry.
To name a few, one can think of: Euclidean abstract machines \cite{dacosta06,huckenbeck89tcs}, piecewise constant derivatives systems \cite{bournez97icalp}, colored universes \cite{jacopini+sontacchi90}\dots

Signal machines were originally introduced as a continuous counterpart of cellular automaton to provide a context for the underlying Euclidean reasoning often found in the discrete cellular automata literature as well as propose an abstract formalization of the concept of \emph{signal} \cite{durand-lose08jac,mazoyer96,mazoyer+terrier99}.

Signal machines are able to compute in the classical Turing understanding. 
This paper is somehow a companion to \cite{durand-lose11tcs} where a Turing-universal signal machine is presented with only $13$ meta-signals and $4$ speeds.
This research takes place in the quest for minimality in order to compute, and thus get unpredictable behavior.
Let us cite \cite{rogozhin96,neary+woods07mcu} for Turing machines, \cite{cook04,ollinger+richard11tcs} for cellular automata, and \cite{margenstern00tcs} for a more general picture.

Being in a continuous setting, signal machines can carry out analog computations in the sense of both the BSS's model \cite{blum+shub+smale89,durand-lose08cie} and Computable analysis \cite{weihrauch00,durand-lose09uc}.

Massive parallelism and the capability to approximate a fractal (with potentially infinitely many accumulations) allows to provide efficient solutions to hard problems: SAT for the class NP \cite{duchier+durand-lose+senot10isaac} and Q-SAT for PSPACE \cite{duchier+durand-lose+senot12tamc}.
From previous work on signal machine, accumulations are known to be easy to generate. They are a powerful tool to accelerate a computation and provide hypercomputation \cite{durand-lose09nc}.
Recently, it has been proved that, starting from a rational machine and configuration, the locations of isolated accumulations have to be \emph{d-c.e} (difference of computably enumerable) real numbers and that any such number can be reached \cite{durand-lose12nc-uc}.

\paragraph{Outline}
Formal definitions of signal machines, their dynamics, space-time diagrams, properties and normalizations are given in \Sec{sec:definitions}, as well as an introductory example: the computation of the remainder of Euclidean division (later embedded inside the $\gcd$ computation).
In \Sec{sec:4-2speeds}, the case of two and four speeds are settled.
\Section{sec:3speeds} studies the case of three speed systems, where the rationality of ratios enters into play.
With an irrational ratio in distances (resp. in speeds), an accumulation is susceptible to happen, this is proven by following the steps of Euclid's algorithm into the space-time diagram and proving that the accumulation occurs in finite time.
Otherwise, if all ratios are rational, then we prove that all generated signals must be on a regular mesh, this mesh has no accumulation, so that the initial space-time diagram cannot have any.
\Section{sec:conclusion} concludes this paper.


%
%
%
\section{Signal machines}
\label{sec:definitions}
This section introduces basic definitions about signal machines and their space-time diagrams,
with several formulations: in terms of {\em machines}, {\em topology} and {\em dynamics}.
\Subsec{subsec:examples} gives two toy examples ---geometrical computation of a substraction and a {\em modulo}--- 
that will be used to implement Euclid's algorithm with signals.
Some notions and geometrical properties such as affine transformations of machines and inclusions of diagrams are given in \Subsec{subsec:properties}.

\subsection{Definitions}
\label{subsec:definitions}

We formalize here the definition of a signal machine, which corresponds to a set of meta-signals, 
a speed function assigning a real speed to each meta-signal and a function describing the result of a meta-signals collision:

\begin{definition}[Signal machine]\label{def:machine}
  A {\em signal machine \SM} is a triplet $\SM=(\SigSet, \SpeedFun, \ColSet)$ where:
  \begin{enumerate}[(i)]
    \item $\SigSet$ is a finite set of meta-signals;
    \item $\SpeedFun:\SigSet\to\R$ is the speed function which assigns a real speed value $\SpeedFun(\mu)$ to each meta-signal $\mu$;
    \item $\ColSet:\PowerSet{\SigSet}\to\PowerSet{\SigSet}$ is the collision function:
	  each set of meta-signals $\ColIn\in\PowerSet{\SigSet}$ such that $\Card{\ColIn}\geq2$ and $\SpeedFun_{\restriction\ColIn}$ is $\Injection$, is mapped to a set of meta-signals \ColOut so that $\SpeedFun_{\restriction\ColOut}$ is $\Injection$.\label{def:item:collisions}
  \end{enumerate}
\end{definition}

\Figure{fig:sm-example} provides an example of a very simple signal machine, and an evolution of this machine, that we call a {\em space-time diagram}.
The meta-signals are listed in \Fig{fig:example_sm-signals}, and collision rules are given by \Fig{fig:example_sm-rules}.
\Figure{fig:example_sm-std} provides an example of a space-time diagram for this machine.

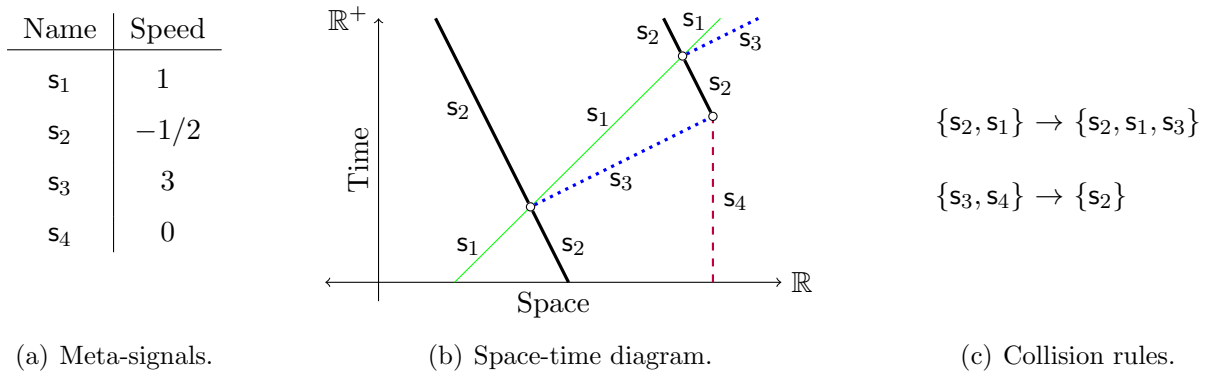
\begin{figure}[hbt]
  \centering\small
  \subfigure[Meta-signals.\label{fig:example_sm-signals}]{%
    \begin{tabular}[b]{c|c}
      Name & Speed\\\hline
      \SigExDeuxGreen & $1$ \rule{0pt}{5mm}\\[2mm]
      \SigExDeuxBlack & $-1/2$\\[2mm]
      \SigExDeuxBlue & $3$\\[2mm]
      \SigExDeuxRed & $0$\\\noalign{\vskip10mm}
    \end{tabular}}\hfill
  \subfigure[Space-time diagram.\label{fig:example_sm-std}]{%
    \begin{tabular}[b]{c}%
    \begin{tikzpicture}
      \draw[<->] (-1.7,0)--node[below] {Space} (4.3,0) node[right,pos=1]{\Space};
      \draw[->] (-1,-.25)--node[above,sloped] {Time} (-1,3.5) node[left,pos=1]{\Time} ;
      \begin{scope}[nodes={inner sep=.1em,minimum size=.1em,draw,circle}]
        \draw (1,1) node (A) {};
        \draw (3.4,2.2) node (B) {};
        \draw (3,3) node (C) {};        
      \end{scope}
      \draw[green] (0,0)--node[black,left]{\SigExDeuxGreen} (A)--node[black,left,pos=.6]{\SigExDeuxGreen} (C)--node[black,left,pos=.9]{\SigExDeuxGreen} (3.5,3.5);
      \draw[thick,dashed,purple] (3.4,0)--node[black,right]{\SigExDeuxRed} (B);
      \begin{scope}[very thick]
        \draw (1.5,0)--node[black,right]{\SigExDeuxBlack} (A)--node[black,left]{\SigExDeuxBlack}(-.25,3.5);
        \draw (B)-- node[black,right]{\SigExDeuxBlack}(C)--node[black,left]{\SigExDeuxBlack}(2.75,3.5);
      \end{scope}
      \begin{scope}[very thick,dotted,blue]
        \draw (A)--node[black,below]{\SigExDeuxBlue}(B);
        \draw (C)--node[black,below,pos=.9]{\SigExDeuxBlue} (4,3.5);
      \end{scope}
    \end{tikzpicture}
   \end{tabular}}\hfill
  \subfigure[Collision rules.\label{fig:example_sm-rules}]{%
    \begin{tabular}[b]{r@{ $\rightarrow$ }l}
      $\{\SigExDeuxBlack, \SigExDeuxGreen\}$ & $\{\SigExDeuxBlack, \SigExDeuxGreen, \SigExDeuxBlue\}$\\[5mm]
      $\{\SigExDeuxBlue, \SigExDeuxRed\}$    & $\{\SigExDeuxBlack\}$\\\noalign{\vskip15mm}
      \end{tabular}}
  \caption{Example of a signal machine and one of its possible evolution.}\label{fig:sm-example}
\end{figure}

Condition ({\it \ref{def:item:collisions}}) of \ColSet definition means that signals can collide only if they 
have distincts speeds, and a collision involves at least two signals. The signals resulting from a collision must also have distinct speeds. 
Since we interpret $\ColSet(\ColIn)=\ColOut$ as a rule, we rather note $\ColIn\rightarrow\ColOut$ instead of $\ColSet(\ColIn)=\ColOut\enspace$.

We call {\em $n$-speed machine} any signal machine having exactly $n$ distinct values for its meta-signals speeds, \ie, \SM is a $n$-speed machine if $\Card{\Image{\SpeedFun}} = n\enspace$.

\paragraph{Configurations}

A {\em configuration} is a function from the real line (space) into the set of meta-signals and collision rules plus two extra values: \Void (for nothing there) and \Accu (for accumulation).
We note \STDVal the set of values, \ie, $\STDVal = \SigSet\cup\ColSet\cup\{\Void\}\cup\{\Accu\}$.
A configuration can be seen as the ``global state'' of the signal machine at a given time, and describes the presence and the disposition of signals and collisions in the space \R.

Any signal or collision must be spatially isolated: there is nothing else but \Void arbitrarily closed.
The accumulation points of non-\Void locations must be \Accu. These are spatial {\em static} accumulations.

\begin{definition}[Configuration]\label{def:configuration}
  A {\em configuration}, \Config, is a function from the real line into meta-signals, 
  rules, \Void and  \Accu (let $\STDVal=\SigSet\cup\ColSet\cup\{\Void,\Accu\}$ 
  so that $\Config:\R\rightarrow\STDVal$) such that:
  \begin{enumerate}[(i)]
    \item all signals and collisions are isolated: \\
	  $\forall x\in\Space,\ \Config(x)\in\SigSet\cup\ColSet\ \Rightarrow \exists \varepsilon > 0,\  
	  \forall y,\ 0<|x-y|<\varepsilon \Rightarrow \Config(y)=\Void\enspace$;
    \item spatial accumulation are marked accordingly: any $x$ that is an accumulation point
	  of $\Config^{-1}(\STDVal\setminus\{\Void\})$ verifies $\Config(x)=\Accu$ ($x$ in \R is
	  an accumulation point of a subset E of \R iff 
	  $\forall\varepsilon$, $1<|E\cap(x-\varepsilon,x+\varepsilon)|$).
  \end{enumerate}
\end{definition}

If there is a signal of speed $s$ at $x$, then, unless there is a collision before, after a duration $\Delta t$, its position is $x+s{\cdot} \Delta t$.
At a collision, all incoming signals are immediately replaced by outgoing signals in the following configurations according to collision rules.

For the next definition, the {\em support} of a configuration \Config will denote the set of non-\Void-valued positions, \ie, 
$\Support{\Config} = \{ x\in\Space~|~\Config[0](x) \neq \Void \}$.

\begin{definition}[Initial configuration]\label{def:initial-configuration}
  An {\em initial configuration} is a configuration $\Config[0]:\Space\rightarrow\STDVal$ so that: 
  \begin{enumerate}[(i)]
    \item the support of \Config[0] is finite, \ie, $\{ x\in\Space~|~\Config[0](x) \neq \Void \}$ is finite;
    \item for all $x\in\Space$, $\Config[0](x) \neq \Accu\enspace$.
  \end{enumerate}
\end{definition}

Accumulations (the \Accu values) are forbidden in the initial configuration.
The case of an initial collision is interpreted by the possibility of having several signals with distinct speeds at the same initial position: 
for an initial collision $\{\Sig[1],\ldots,\Sig[p]\}\rightarrow\{\Sig[1]',\ldots,\Sig[q]'\}$ occuring at position $x$, 
we rather note $[\Sig[1]', \ldots,\Sig[q]']@x$ or $\Sig[1]'@x, \ldots,\Sig[q]'@x$.
In this way, an initial configuration can be expressed only in terms of signals at some positions (the other positions taking the value \Void).
So we can give an initial configuration \Config[0] by a finite set of the form $\Config[0] = \{ \Sig[1]@x_1, \ldots, \Sig[k]@x_k\}\enspace$
where $\Sig[i]@x_i$ means $\Sig[i]$ is initially located at the spatial position $x_i$ \ie $\Config[0](x_i) = \Sig[i]$.

We insist that $[\Sig[1]', \ldots,\Sig[q]']@x = \Sig[1]'@x, \ldots,\Sig[q]'@x$ is just a notation to simplify the writing of collisions when we express a configuration as a set of non-\Void values.
But by definition, a configuration \Config is a {\em function} from \R in \STDVal: for each $x\in\R$, the value $\Config(x)$ is {\em unique} (and is either a meta-signal, a collision, \Accu or \Void).
We will also use this notation of collision in any configuration (and not only for an initial one), and $\Config[t] = \{\Sig[1]'@x, \ldots,\Sig[q]'@x \}$ has to be interpreted by ``any collision producing $\{\Sig[1], \ldots,\Sig[q]'\}$ and occuring at $x$'', \ie, $\Config[t](x) = C$ with $\ColOut = \{\Sig[1]', \ldots,\Sig[q]' \}$.
We allow initial configurations (and only initial configurations) to contain some signals of distinct speeds at the same initial position, 
even if this set of signals doesn't correspond to a collision outcoming set of signals.

\begin{definition}[Rational signal machine]\label{def:rational-machine}
  A signal machine is {\em rational} if all speeds are rational numbers and non-\Void positions in the initial configuration are also rational numbers.
\end{definition}

Since the position of collisions are solutions of rational linear equations systems, they are rational.
In any space-time diagram of a rational signal machine, as long as there is no accumulation, 
the coordinates of all the collisions are rational.

\begin{definition}[Rational-like]\label{def:rational-like}
  A signal machine is {\em rational-like} if all its speeds are two-by-two commensurate, \ie, all ratios between speeds are rationnal.
  A configuration is {\em rational-like} if all distances are two-by-two commensurate.
\end{definition}

This means that a signal machine is rational-like if all its speeds are rational up to a multiplicative coefficient.
In particular, any rational machine is rational-like.

\paragraph{Space-time diagrams}

A space-time diagram can be formulated in a topological way, based on the classical topology of $\R^2$, whose open sets are generated by the Euclidean distance. 
The topological formulation of diagram implies that space and time are considered like a whole object ---the space-time structure--- and there is \emph{a priori} no dynamics. 

We define first a notion that will be used to define accumulation: the notion of {\em causal past} of a point.

\begin{definition}[Causal past and isolated accumulation]\label{def:isolated-accumulation}
  Let \MaxSpeed (maximal right speed) and \MinSpeed (maximal left speed) be the maximum and minimum values taken by the speed function \SpeedFun.
  The value at position $(x,t)$ in the space-time diagram only depends on the values at the position on the \emph{causal past} or \emph{backward light cone}:
  \[\CausalPast(x,t)
    = \left\{\rule{0cm}{1.1em}
    \, (x',t')\, 
    \middle|\, 
    t'<t
    \wedge \MaxSpeed{\cdot}(t'{-}t)< x'-x< \MinSpeed {\cdot}(t'{-}t)
    \,\right\}\enspace.
  \]
\end{definition}

This notion is illustrated in \Fig{fig:cone}.

We can now give the formalization of {\em space-time diagrams}:

\renewcommand{\labelitemi}{$\bullet$}
\begin{definition}[Space-time diagram]\label{def:diagram}
  A {\em space-time diagram} is a map \STD from a time interval $[0, T] \subset \Time$ ($T$ can be infinite)
  to the set of configurations (\ie \STD can be identified to a map $\SpaceTime\to\STDVal$) such that:
  \begin{enumerate}[(i)]
    \item $\forall t\in[0, T] \ \{ x\in\Space\ |\ \Config[t](x) \neq \Void \}$ is finite;
    \item if $\Config[t](x) = \Sig \in \SigSet$ then $\exists t_i, t_f \in [0,T]$ with $t_i<t<t_f$ or 
	  $0=t_i=t<t_f$ or $t_i<t=t_f=T$ such that:
	  \begin{itemize}
	    \item $\forall t'\in~]t_i, t_f[ \ \Config[t'](x + \Speed{\Sig}(t-t')) = \Sig$,
	    \item $t_i=0$ or $\Config[t_i](x_i) \in \ColSet$ and $\Sig\in(\Config[t_i](x_i))^{+}$ where $x_i = x + \Speed{\Sig}(t_i - t)$,
	    \item $t_f=T$ or $\Config[t_f](x_f) \in \ColSet$ and $\Sig\in(\Config[t_f](x_f))^{-}$ where $x_f = x + \Speed{\Sig}(t_f - t)$;
	  \end{itemize}
    \item if $\Config[t](x) = \Col \in \ColSet$ then $\exists\ \varepsilon > 0\ \forall t'\in[t-\varepsilon, t+\varepsilon]\ 
	  \forall x'\in[x-\varepsilon, x+\varepsilon]$
	  \begin{itemize}
	    \item $\Config[t'](x')\in\ColIn\cup\ColOut\cup\{\Void\}$,
	    \item $\forall \Sig \in \ColIn\ :(\ \Config[t'](x') = \Sig\ ) \Leftrightarrow (\ t<t' \text{ and }x' = x + \Speed{\Sig}(t'-t)\ )$,
	    \item $\forall \Sig \in \ColOut\ :(\ \Config[t'](x') = \Sig\ ) \Leftrightarrow (\ t'<t \text{ and }x' = x + \Speed{\Sig}(t'-t)\ )$;
	  \end{itemize}
    \item if $\Config[t](x) = \Accu$ then
	  \begin{itemize}
	    \item $\exists\ \varepsilon > 0\ \forall (x',t') \notin \CausalPast(x,t)$ s.t. 
		  $|x-x'| < \varepsilon,\ |t-t'|<\varepsilon$ we have $\Config[t'](x') = \Void$,
	    \item $\forall\ \varepsilon > 0,\ |\{\ (x',t')\in\CausalPast(x,t)\ |\ t-\varepsilon<t'<t \wedge\Config[t'](x')\in\ColSet\ \}| = \infty\enspace$.
	  \end{itemize}
  \end{enumerate}
\end{definition}

As done for configurations, we can define the {\em support of a diagram \STD} by 
$\Support{\STD} = \{ (x,t)\in\SpaceTime ~|~ \STD(x,t)\neq\Void \}$.

\paragraph{Equivalent diagrams}

To formalize the idea that two geometrical computations are ``the same'', we define the notion of {\em equivalent diagrams}.
Intuitively, two diagrams are equivalent if they have the same structure, \ie, the same causality links between their respective collisions and signals, independently of their positions or the meta-signal names.
We formalize such an invariancy of structure with the notion of homeomorphism: a diagram obtained by a continuous transformation will keep the same structure.

\begin{definition}[Equivalent diagrams]\label{def:diagram-equivalency}
  Let \STD and $\STD'$ two space-time diagrams with respective value sets
  $\STDVal = \SigSet\cup\ColSet\cup\{\Void,\Accu\}$ and $\STDVal' = \SigSet'\cup\ColSet'\cup\{\Void,\Accu\}$.
  We say that \STD and $\STD'$ are {\em equivalent} if:
  \begin{enumerate}[(i)]
   \item there is an homeomorphism $h:\SpaceTime\to\SpaceTime$, and
   \item there is a bijection $\Phi:\STDVal\to\STDVal'$ so that $\Phi[\SigSet]=\SigSet'$, $\Phi[\ColSet]=\ColSet'$, $\Phi(\Accu)=\Accu$ and $\Phi(\Void)=\Void$ and so that 
         $\forall\Sig\in\SigSet, \forall\ColName\in\ColSet,\ \, \Sig\in\ColIn\ (resp. \ColOut) \Rightarrow \Phi(\Sig)\in\Phi(\ColName)^-\ (resp. \Phi(\ColName)^+)$, \label{def:diagram-equivalency_bijection}
  \end{enumerate}
  satisfying $\forall (x,t)\in\SpaceTime,\ \STD'(x,y)=\Phi(\STD(h(x,y)))$.
\end{definition}

The second condition means that $\Phi$ induces a bijection from the sets of meta-signals of \STD into the set of meta-signals of $\STD'$
and from the set of rules of \STD to the set of rules of $\STD'$ so that the place of meta-signals into the rules is kept.

\paragraph{Dynamics}

We give a presentation of a machine evolution in terms of dynamics.
This definition has been introduced by \cite{durand-lose12nc-uc}, in which it has been used to characterize the exact coordinates of isolated accumulations.

\begin{definition}[Dynamics]\label{def:dynamics}
  Considering a configuration \Config, the \emph{time to the next collision}, 
  $\Delta(\Config)$, is equal to the minimum of the positive real numbers $d$ such that:
  \[
  \exists x_1,x_2 \in\R, 
  \exists \mu_1, \mu_2\in\SigSet
  \left\{
    \begin{array}{l@{}}
      x_1+d{\cdot}\SpeedFun(\mu_1)=  x_2+d{\cdot}\SpeedFun(\mu_2)\\
      \Config(x_1) = \mu_1 \vee ( \Config(x_1) = \ColIn \rightarrow \ColOut \wedge \mu_1\in \ColOut )\\
      \Config(x_2) = \mu_2 \vee ( \Config(x_2) = \ColIn \rightarrow \ColOut \wedge \mu_2\in \ColOut )
    \end{array}
  \right.
  \enspace .
  \]
  It is $+\infty$ if there is no such $d$.
  
  Let $\Config[t]$ be the configuration at time $t$; for $t'$ between $t$ and $t+\Delta(\Config[t])$, the configuration at $t'$ is defined as follows.
  First, signals are set according to $\Config[t'](x')=\mu$ iff $\Config[t](x) = \mu \vee ( \Config[t](x) = \ColIn\rightarrow\ColOut \wedge \mu\in\ColOut )$ 
  where  $x= x'+(t{-}t'){\cdot}\SpeedFun(\mu)$. There is no collision to set ($t'$ is before the next collision).
  Then (static) accumulations are set:  $\Config[t'](x')=\Accu$ iff $x'$ is an accumulation point 
  of $\Config[t']^{-1}(\SigSet)$. It is \Void everywhere else.

  For the configuration at $t'=t+\Delta(\Config[t])$, collisions are set first:  
  $\Config[t'](x')=\ColIn\rightarrow\ColOut$ iff for all 
  $\mu\in\ColIn$, $\Config[t](x_\mu) = \mu \vee ( \Config[t](x_\mu) = \ColIn\rightarrow\ColOut \wedge \mu\in\ColOut )$ 
  where  $x_\mu= x'+(t{-}t'){\cdot}\SpeedFun(\mu)$.
  Then meta-signals are set (where there is not already a collision), and finally (static) accumulations.

  The {\em sequence of collision time} is defined by: $t_0=0$, $t_{n+1}=t_n+\Delta(\Config[t_n])$.
  This sequence is finite if there is an $n$ such that $\Delta(\Config[t_n])=+\infty$.
  Otherwise, since it is non-decreasing, it admits a limit.
  If the sequence is finite or its limit is infinite, then the whole space-time diagram is defined.
  These cases are of no interest here since there are no non-static accumulations.

  Only the last case is considered from now on: there is a finite limit, say $\TLimit$.
  The configuration at $\TLimit$ is defined as follows.
  First (dynamic) accumulations are set:  $\Config[\TLimit](x)=\Accu$ iff $\forall\varepsilon>0\ $
  then there exists $x'$ and $t'$ such that $|x-x'|<\varepsilon$, $\TLimit-\varepsilon<t'<\TLimit$ and $\Config[t'](x')\in\ColSet$.
  Then collisions are set: $\Config[\TLimit](x)=\ColIn\rightarrow\ColOut$ iff for all 
  $\mu\in\ColIn$, $\exists \varepsilon$, $\forall\varepsilon'$, $0{<}\varepsilon'{<}\varepsilon$, holds 
  $\Config[\TLimit-\varepsilon'](x'-\varepsilon'{\cdot}\SpeedFun(\mu)) = \mu$.
  Then meta-signals are set:  $\Config[\TLimit](x)=\mu$ iff 
  $\exists \varepsilon$, $\forall\varepsilon'$, $0{<}\varepsilon'{<}\varepsilon$, then 
  $\Config[\TLimit-\varepsilon'](x'-\varepsilon'{\cdot}\SpeedFun(\mu)) = \mu$.
  Finally, static accumulations are set.
\end{definition}

At each time, a position is set only if it is not already set. At the end every unset position is assigned by \Void. The dynamics is uniform in both space and time.
Please note that this definition does not always define an extension to the computation 
(when there are infinitely many signals, $\Delta(\Config)$ is an infimum that could be equal to zero), nevertheless it does, in the cases considered here.
\Figure{fig:space-time-diagram} illustrates the sequence of collision times for the previously given in example.

\begin{figure}[hbt]
  \subfigure[Causal past or backward light cone.\label{fig:cone}]{%
    \Width 0.9em
    \begin{tikzpicture}[x=\Width,y=\Width]
      \draw[white,fill=gray!30] (0,10) -- (6,4) -- (0,0) -- (0,10);
      \draw[white,fill=gray!30] (15,10) -- (6,4) -- (10,0) -- (15,0) -- (15,10);
      \draw[<->] (0,-1) -- node[below]{\small Space} (15,-1);
      \draw[->] (-1,0) -- node[sloped,above]{\small Time} (-1,10);
      \draw (0,10) -- node[above=-.2em,sloped]{\small max left speed} (5,5) -- (10,0);
      \draw (15,10) -- node[above=-.2em,sloped]{\small max right speed} (6,4) -- (0,0);
      \draw (6,4) node[right]{$(x,t)$};
      \draw (6,4) node[inner sep=.1em,minimum size=.1em,draw,circle,fill=black]{};
      \draw (5.4,1.45) node{\normalsize causal past};
      \draw (5.2,.2) node{$\CausalPast(x,t)$};
    \end{tikzpicture}}
  \subfigure[First collision times of the diagram of {\Fig{fig:example_sm-std}}.\label{fig:space-time-diagram}]{\qquad%
    \small\newcommand{\XT}{-1}
    \begin{tikzpicture}
      \draw[<->] (\XT.75,0) -- node[below] {Space}  (4.75,0);
      \draw[->] (\XT,-.25) --   (\XT,3.5) node[above] {Time};
      \begin{scope}[nodes={inner sep=.1em,minimum size=.1em,draw,circle}]
        \draw (1,1) node (A) {};
        \draw (3.4,2.2) node (B) {};
        \draw (3,3) node (C) {};        
      \end{scope}
      \draw[green] (0,0) -- (A) -- (C) -- (3.5,3.5);
      \draw[thick,dashed,purple] (3.4,0) -- (B);
      \begin{scope}[very thick]
        \draw (1.5,0) -- (A) -- (-.25,3.5);
        \draw (B) -- (C) -- (2.75,3.5);
      \end{scope}
      \begin{scope}[very thick,dotted,blue]
        \draw (A) -- (B);
        \draw (C) -- (4,3.5);
      \end{scope}
      \begin{scope}[dotted,left]
        \draw (\XT,0) node [above left] {$t_0$};
        \foreach \t/\l in {1/1,2.2/2,3/3} 
        \draw (4.5,\t) -- (\XT.25,\t) node {$t_{\l}$};
      \end{scope}
    \end{tikzpicture}\qquad}\quad
  \caption{Example of a space-time diagram and causal past.}\label{fig:def+cone}
\end{figure}
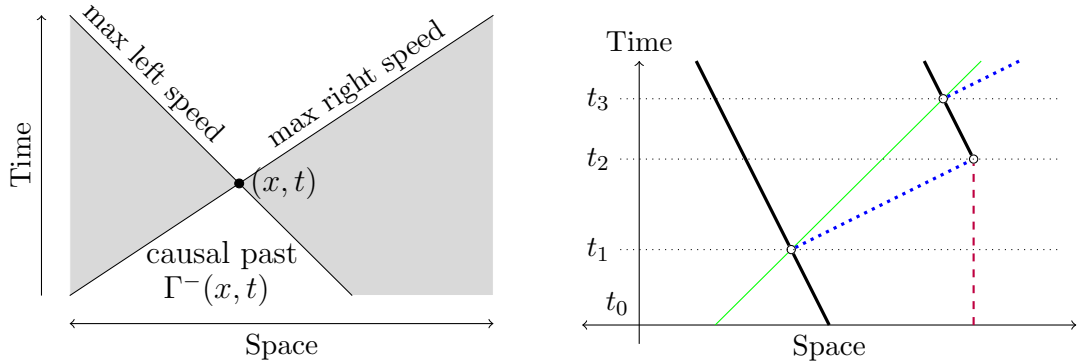

\paragraph{Notations}

In the whole paper, \SM will always designate a signal machine, and the function (resp. the parameters) of the machine can be indicated as index (resp. exponent). 
For instance, $\SM[2]^{a,b}$ will designate a $2$-speed machine, having its speed values equal to $a$ and $b$.
A signal name will be noted in \SigName{sf} font, and the sign of its speed will be indicated by a right or left overlined arrow.
Configurations at time $t$ will always be noted by \Config[t], and some parameters can be expressed as exponent, \eg, $\Config[0]^{x}$ stands for an initial configuration in which the value $x$ is used for some positions.
The notations \STD and \STDVal respectively represent diagrams and diagrams values, \ie, \STDVal is the set $\SigSet\cup\ColSet\cup\{\Void\}\cup\{\Accu\}$.

%
%
%
\subsection{A 3-speed toy example: computing the modulo}
\label{subsec:examples}
\newcounter{rulesnumberssub}
\newcounter{rulesnumbersmod}
\setcounter{rulesnumberssub}{0}
\setcounter{rulesnumbersmod}{0}

When considering \emph{small} signal machines, that is with few distinct speeds, one natural question is whether such machines can still process meaningful computations.
As a matter of fact, we may see in \Sec{subsec:2speeds} that allowing only two distinct speeds is too restrictive.

In this section, we provide two detailed (yet informal) examples of 3-speed machines illustrating our computational model on simple practical algorithms.
Those two machines geometrically compute the subtraction and \emph{modulo} operators, respectively, between two positive reals.

The notions of euclidean division, modulo and greatest common divisor (which will be used in \Sec{sec:3speeds_accu}), usually defined above integers, are generalized here to real numbers in a natural way.
More precisely, for $a, b\in\R$, the \emph{euclidean division (or integer division) of $a$ by $b$} is the unique $n \in \Z$ (given by $\lfloor\frac{a}{b}\rfloor$) so that $a = b \cdot n + r$, where $r \in \R$ and $0 \leq r < |b|$.
The remainder $r$ defines the value of \emph{$a$ modulo $b$}, noted by $a \bmod b$.
We say that \emph{the real number $b$ divides the real number $a$} if $a \bmod b = 0$.
The \emph{greatest common divisor of two reals $a$ and $b$}, denoted by $\gcd(a,b)$, is the greatest real which divides both $a$ and $b$.
By now, we will only consider positive real numbers, since divisions, modulo and $\gcd$ of any reals can easily be deduced from those of their absolute values.

\paragraph{Geometrical encoding of a value}

There may be several ways to geometrically encode values in a signal machine.
In this article, we choose to encode any value $x \neq 0$ (integer or real) by the distance between two stationary signals \SigWallZero and \SigWallx (where \SigName{w} stands for ``\SigName{wall}''), the stationary signal \SigWallZero being unique and common to every encoded value.

\paragraph{A 3-speed machine for the subtraction}

We first illustrate our computational model by constructing a 3-speed machine \SMsubtraction computing a single subtraction $a - b$ between two positive values $a$ and $b$.
For the sake of comprehension, we suppose $a > b$.

We define meta-signals and collision rules in order to implement the following idea.
The two positive values $a$ and $b$ are encoded by the distance between a common stationary meta-signal \SigSubWallZero and respective stationary meta-signals \SigSubWalla and \SigSubWallb, as explained above.
Using several temporary meta-signals of speed $-1$ and $1$ (\SigSubZigSmall, \SigSubZagSmall, \SigSubZigBig, \SigSubZagBig), we geometrically copy the distance between \SigSubWallZero and \SigSubWallb, which represents the value $b$, and shift the signal \SigSubWalla to the left by this exact distance (which is then renamed \SigSubWallr as being the result of the subtraction operation).
The initial configuration is set to $\Config[0]^{a,b} = \{ \SigAt{\SigSubInit}{-1}, \SigSubWallZero@0, \SigSubWallb@b, \SigSubWalla@a \}$.

Using basic geometry notions and observing that the signals shifting operation defines a parallelogram on the machine's diagram, one can easily prove that the position of the signal \SigSubWallr is such that the distance between \SigSubWallZero and \SigSubWallr corresponds exactly to $r = a - b$.
The definition of this machine is given in \Fig{fig:SMsubtraction_rules}, and we give a run example of this machine for values $a = 11$ and $b = 3$ in \Fig{fig:SMsubtraction_run-example}.

\begin{figure}[hbt]%
  \centering
  \subfigcapskip 1em
  \subfigure[Meta-signals and collisions rules of \SMsubtraction.\label{fig:SMsubtraction_rules}]{%
    \footnotesize%
    \begin{tabular}[b]{c}
      \begin{tabular}[b]{r|c}
		\bf Meta-signals & \bf Speeds \\
		\hline
		\SigSubInit, \SigSubZigSmall, \SigSubZigBig  &  $1$ \rule{0pt}{5mm} \\[.2em]
		\SigSubWallZero, \SigSubWalla, \SigSubWallb, \SigSubWallr  &  $0$ \\[.2em]
		\SigSubZagSmall, \SigSubZagBig  &  $-1$ \\[.2em]
      \end{tabular}
      \\[1em]%
      \begin{tabular}[b]{@{\scriptsize}>{(\refstepcounter{rulesnumberssub}\therulesnumberssub)}l r@{ $\rightarrow$ }l}
		\multicolumn{3}{c}{\bf Collision rules} \\
		\hline
		\label{rules:SMsubtraction_1} & \{ \SigSubInit, \SigSubWallb \} & \{ \SigSubZagSmall, \SigSubWallb, \SigSubZigBig \}\rule{0pt}{5mm} \\[.2em] 
		\label{rules:SMsubtraction_2} & \{ \SigSubWallZero, \SigSubZagSmall \} & \{ \SigSubWallZero, \SigSubZigSmall \} \\[.2em] 
		\label{rules:SMsubtraction_3} & \{ \SigSubZigSmall, \SigSubWallb \} & \{ \SigSubZigBig \} \\[.2em] 
		\label{rules:SMsubtraction_4} & \{ \SigSubZigBig, \SigSubWalla \} & \{ \SigSubZagBig \} \\[.2em] 
		\label{rules:SMsubtraction_5} & \{ \SigSubWallb, \SigSubZagBig \} & \{ \SigSubZagBig, \SigSubWallb \} \\[.2em] 
		\label{rules:SMsubtraction_6} & \{ \SigSubZigBig, \SigSubZagBig \} & \{ \SigSubWallr \} \\[.2em] 
		\label{rules:SMsubtraction_7} & \{ \SigSubZigSmall, \SigSubZagBig \} & \{ \SigSubWallr \} \\[.2em] 
		\label{rules:SMsubtraction_8} & \{ \SigSubZigSmall, \SigSubWallb, \SigSubZagBig \} & \{ \SigSubWallr \} \\[.2em] 
      \end{tabular}%
    \end{tabular}}%
  \qquad
  \subfigure[A run of \SMsubtraction, computing $11 - 3 = 8$.\label{fig:SMsubtraction_run-example}]{\qquad%
    \begin{tikzpicture}[x=4.5mm,y=4.5mm,font=\footnotesize]
\definecolor{couleurinit}{RGB}{155,20,157}
\tikzstyle{drawinit}=[draw=couleurinit]
\tikzstyle{nodeinit}=[pos=0.5, below, sloped]
\definecolor{couleurwall_0}{RGB}{0,0,0}
\tikzstyle{drawwall_0}=[draw=couleurwall_0]
\tikzstyle{nodewall_0}=[pos=0.5, left]
\definecolor{couleurwall_a}{RGB}{244,8,8}
\tikzstyle{drawwall_a}=[draw=couleurwall_a]
\tikzstyle{nodewall_a}=[pos=0.5, left]
\definecolor{couleurwall_b}{RGB}{0,0,255}
\tikzstyle{drawwall_b}=[draw=couleurwall_b]
\tikzstyle{nodewall_b}=[pos=0.5, right]
\definecolor{couleurwall_result}{RGB}{155,20,157}
\tikzstyle{drawwall_result}=[draw=couleurwall_result]
\tikzstyle{nodewall_result}=[pos=0.5, left]
\definecolor{couleurZIG}{RGB}{0,255,0}
\tikzstyle{drawZIG}=[draw=couleurZIG]
\tikzstyle{nodeZIG}=[pos=0.5, above, sloped]
\definecolor{couleurZAG}{RGB}{0,255,0}
\tikzstyle{drawZAG}=[draw=couleurZAG]
\tikzstyle{nodeZAG}=[pos=0.5, above, sloped]
\definecolor{couleurzig}{RGB}{10,83,5}
\tikzstyle{drawzig}=[draw=couleurzig]
\tikzstyle{nodezig}=[pos=0.5, above, sloped]
\definecolor{couleurzag}{RGB}{10,83,5}
\tikzstyle{drawzag}=[draw=couleurzag]
\tikzstyle{nodezag}=[pos=0.5, above, sloped]
\definecolor{black}{RGB}{0,0,0}
\tikzstyle{nodearrow_b}=[pos=0.5, below]
\tikzstyle{nodearrow_a}=[pos=0.5, below]
\tikzstyle{nodearrow_r}=[pos=0.5, above]

\draw[drawwall_b] (3.000000,0.000000) -- node[nodewall_b]{\SigSubWallb} (3.000000,4.000000);
\draw[drawinit] (-1.000000,0.000000) -- node[nodeinit]{\SigSubInit} (3.000000,4.000000);
\draw[drawzag] (3.000000,4.000000) -- node[nodezag]{\SigSubZagSmall} (0.000000,7.000000);
\draw[drawwall_0] (0.000000,0.000000) -- node[nodewall_0]{\SigSubWallZero} (0.000000,7.000000);
\draw[drawwall_b] (3.000000,4.000000) -- node[nodewall_b]{\SigSubWallb} (3.000000,10.000000);
\draw[drawzig] (0.000000,7.000000) -- node[nodezig]{\SigSubZigSmall} (3.000000,10.000000);
\draw[drawwall_a] (11.000000,0.000000) -- node[nodewall_a]{\SigSubWalla} (11.000000,12.000000);
\draw[drawZIG] (3.000000,4.000000) -- node[nodeZIG]{\SigSubZigBig} (11.000000,12.000000);
\draw[drawZAG] (11.000000,12.000000) -- node[nodeZAG]{\SigSubZagBig} (8.000000,15.000000);
\draw[drawZIG] (3.000000,10.000000) -- node[nodeZIG]{\SigSubZigBig} (8.000000,15.000000);
\draw[drawwall_0] (0.000000,7.000000) -- node[nodewall_0]{\SigSubWallZero} (0.000000,16.000000);
\draw[drawwall_result] (8.000000,15.000000) -- node[nodewall_result]{\SigSubWallr} (8.000000,16.000000);

\draw[arrows=<->] (0.000000,-0.500000) -- node[nodearrow_a]{$a$} (11.000000,-0.500000);
\draw[arrows=<->] (0.000000,-1.000000) -- node[nodearrow_b]{$b$} (3.000000,-1.00000);
\draw[arrows=<->] (0.000000,16.500000) -- node[nodearrow_r]{$a-b$} (8.000000,16.500000);
\end{tikzpicture}}%
  \caption{Computation of the subtraction operator, using three distinct speeds.}\label{fig:SMsubtraction_machine}
\end{figure}
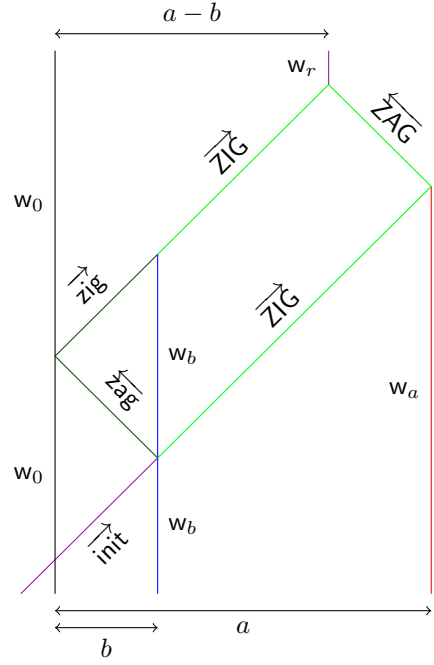

\paragraph{A 3-speed machine for the modulo}

We now construct a 3-speed machine \SMmodulo computing the mathematical operation $a \bmod b$ between two positive values $a$ and $b$.
This mathematical operation corresponds basically to successive (possibly zero) subtractions of $b$ to $a$ until the result is strictly smaller than $b$.
We therefore consider and adapt the 3-speed machine \SMsubtraction defined above which computes a single subtraction $a - b$.

We define meta-signals and collision rules in order to implement the following idea.
As for the \SMsubtraction 3-speed machine, the two positive values $a$ and $b$ are encoded by the distance between a stationary meta-signal \SigModWallZero and respective stationary signals \SigModWalla and \SigModWallb.
We reuse most of the meta-signals and collision rules defined for the \SMsubtraction 3-speed machine, and adapt rules~\ref{rules:SMsubtraction_3}, \ref{rules:SMsubtraction_5}, \ref{rules:SMsubtraction_6} and \ref{rules:SMsubtraction_7} of \SMsubtraction so that the machine repeats the subtraction operation as long as the result $r = a - i \times b$ is still greater or equal to $b$ (where $i \geq 0$ is the number of subtractions processed so far), that is as long as the shift of the meta-signal \SigModWallr has not \emph{crossed} the meta-signal \SigModWalla.
Note that the collisions rules handling the end of the computation must consider the two possible cases, where either $r = a \bmod b > 0$ (rule~\ref{rules:SMmodulo_8}) or $r = a \bmod b = 0$ (rule~\ref{rules:SMmodulo_6}, then rule~\ref{rules:SMmodulo_9}).
The initial configuration of this machine is set to $\Config[0]^{a,b} = \{ \SigModInit@-1, \SigModWallZero@0, \SigModWallb@b, \SigModWalla@a \}$.

As for the previous machine, using basic geometry notions and observing that the signals shifting operation defines a parallelogram on the machine's diagram, one can easily prove that the position of the final signal \SigSubWallr is such that the distance between \SigSubWallZero and \SigSubWallr corresponds exactly to $r = a \bmod b$.
The definition of this machine is given in \Fig{fig:SMmodulo_rules}, and we give a run example of this machine for $a = 11$ and $b = 3$ in \Fig{fig:SMmodulo_run-example}.

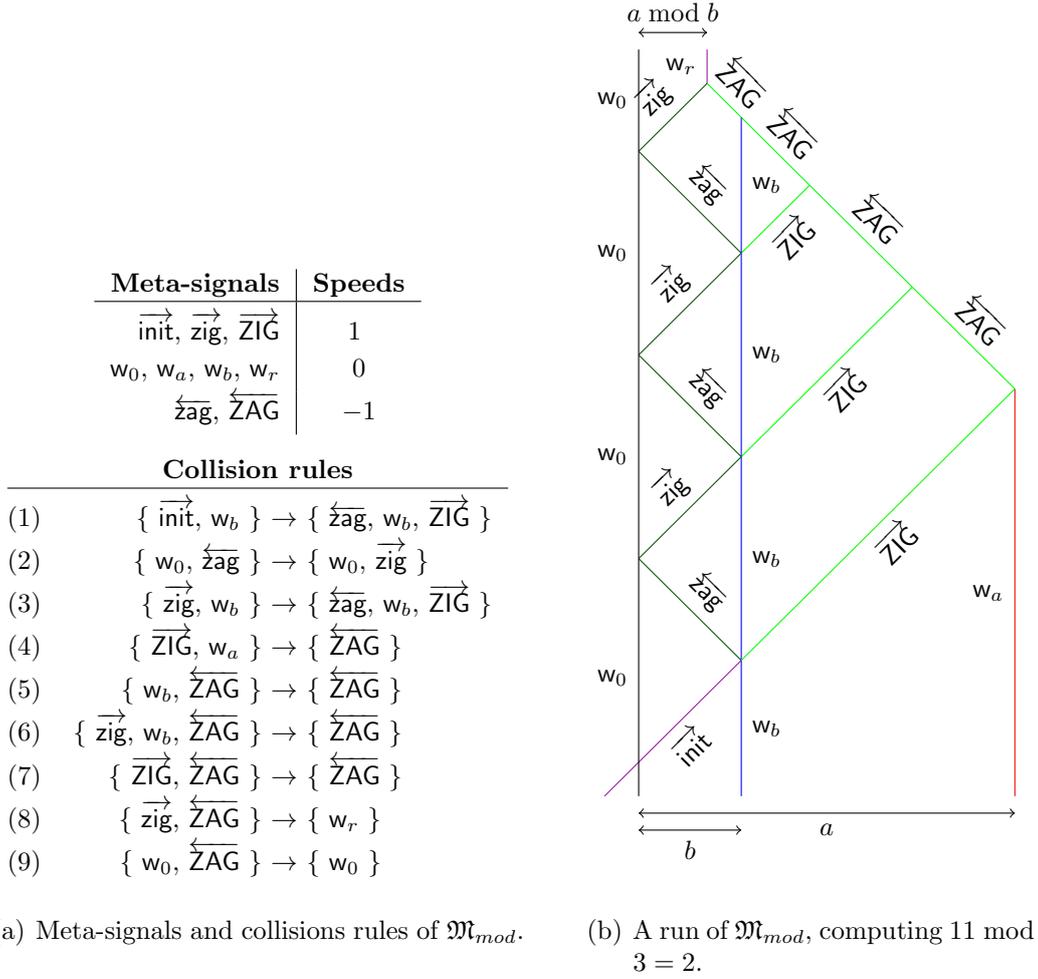
\begin{figure}[hbt]%
  \centering
  \subfigcapskip 1em
  \subfigure[Meta-signals and collisions rules of \SMmodulo.\label{fig:SMmodulo_rules}]{%
    \footnotesize%
    \begin{tabular}[b]{c}
      \begin{tabular}[b]{r|c}
		\bf Meta-signals & \bf Speeds \\
		\hline
		\SigModInit, \SigModZigSmall, \SigModZigBig  &  $1$ \rule{0pt}{5mm} \\[.2em]
		\SigModWallZero, \SigModWalla, \SigModWallb, \SigModWallr  &  $0$ \\[.2em]
		\SigModZagSmall, \SigModZagBig  &  $-1$ \\[.2em]
      \end{tabular}
      \\[1em]%
      \begin{tabular}[b]{@{\scriptsize}>{(\refstepcounter{rulesnumbersmod}\therulesnumbersmod)}l r@{ $\rightarrow$ }l}
		\multicolumn{3}{c}{\bf Collision rules} \\
		\hline
		\label{rules:SMmodulo_1} & \{ \SigModInit, \SigModWallb \} & \{ \SigModZagSmall, \SigModWallb, \SigModZigBig \}\rule{0pt}{5mm} \\[.2em] 
		\label{rules:SMmodulo_2} & \{ \SigModWallZero, \SigModZagSmall \} & \{ \SigModWallZero, \SigModZigSmall \} \\[.2em] 
		\label{rules:SMmodulo_3} & \{ \SigModZigSmall, \SigModWallb \} & \{ \SigModZagSmall, \SigModWallb, \SigModZigBig \} \\[.2em] 
		\label{rules:SMmodulo_4} & \{ \SigModZigBig, \SigModWalla \} & \{ \SigModZagBig \} \\[.2em] 
		\label{rules:SMmodulo_5} & \{ \SigModWallb, \SigModZagBig \} & \{ \SigModZagBig \} \\[.2em] 
		\label{rules:SMmodulo_6} & \{ \SigModZigSmall, \SigModWallb, \SigModZagBig \} & \{ \SigModZagBig \} \\[.2em] 
		\label{rules:SMmodulo_7} & \{ \SigModZigBig, \SigModZagBig \} & \{ \SigModZagBig \} \\[.2em] 
		\label{rules:SMmodulo_8} & \{ \SigModZigSmall, \SigModZagBig \} & \{ \SigModWallr\} \\[.2em] 
		\label{rules:SMmodulo_9} & \{ \SigModWallZero, \SigModZagBig \} & \{ \SigModWallZero\} \\[.2em] 
      \end{tabular}%
    \end{tabular}}%
  \qquad
  \subfigure[A run of \SMmodulo, computing $11 \bmod 3 = 2$.\label{fig:SMmodulo_run-example}]{%
    \begin{tikzpicture}[x=4.5mm,y=4.5mm,font=\footnotesize]
\definecolor{couleurinit}{RGB}{155,20,157}
\tikzstyle{drawinit}=[draw=couleurinit]
\tikzstyle{nodeinit}=[pos=0.5, below, sloped]
\definecolor{couleurwall_0}{RGB}{0,0,0}
\tikzstyle{drawwall_0}=[draw=couleurwall_0]
\tikzstyle{nodewall_0}=[pos=0.5, left]
\definecolor{couleurwall_a}{RGB}{244,8,8}
\tikzstyle{drawwall_a}=[draw=couleurwall_a]
\tikzstyle{nodewall_a}=[pos=0.5, left]
\definecolor{couleurwall_b}{RGB}{0,0,255}
\tikzstyle{drawwall_b}=[draw=couleurwall_b]
\tikzstyle{nodewall_b}=[pos=0.5, right]
\definecolor{couleurwall_result}{RGB}{155,20,157}
\tikzstyle{drawwall_result}=[draw=couleurwall_result]
\tikzstyle{nodewall_result}=[pos=0.5, left]
\definecolor{couleurZIG}{RGB}{0,255,0}
\tikzstyle{drawZIG}=[draw=couleurZIG]
\tikzstyle{nodeZIG}=[pos=0.5, below, sloped]
\definecolor{couleurZAG}{RGB}{0,255,0}
\tikzstyle{drawZAG}=[draw=couleurZAG]
\tikzstyle{nodeZAG}=[pos=0.5, above, sloped]
\definecolor{couleurzig}{RGB}{10,83,5}
\tikzstyle{drawzig}=[draw=couleurzig]
\tikzstyle{nodezig}=[pos=0.5, above, sloped]
\definecolor{couleurzag}{RGB}{10,83,5}
\tikzstyle{drawzag}=[draw=couleurzag]
\tikzstyle{nodezag}=[pos=0.5, above, sloped]
\tikzstyle{nodearrow_b}=[pos=0.5, below]
\tikzstyle{nodearrow_a}=[pos=0.5, below]
\tikzstyle{nodearrow_r}=[pos=0.5, above]

\draw[drawwall_b] (3.000000,0.000000) -- node[nodewall_b]{\SigModWallb} (3.000000,4.000000);
\draw[drawinit] (-1.000000,0.000000) -- node[nodeinit]{\SigModInit} (3.000000,4.000000);
\draw[drawzag] (3.000000,4.000000) -- node[nodezag]{\SigModZagSmall} (0.000000,7.000000);
\draw[drawwall_0] (0.000000,0.000000) -- node[nodewall_0]{\SigModWallZero} (0.000000,7.000000);
\draw[drawwall_b] (3.000000,4.000000) -- node[nodewall_b]{\SigModWallb} (3.000000,10.000000);
\draw[drawzig] (0.000000,7.000000) -- node[nodezig]{\SigModZigSmall} (3.000000,10.000000);
\draw[drawwall_a] (11.000000,0.000000) -- node[nodewall_a]{\SigModWalla} (11.000000,12.000000);
\draw[drawZIG] (3.000000,4.000000) -- node[nodeZIG]{\SigModZigBig} (11.000000,12.000000);
\draw[drawzag] (3.000000,10.000000) -- node[nodezag]{\SigModZagSmall} (0.000000,13.000000);
\draw[drawwall_0] (0.000000,7.000000) -- node[nodewall_0]{\SigModWallZero} (0.000000,13.000000);
\draw[drawZAG] (11.000000,12.000000) -- node[nodeZAG]{\SigModZagBig} (8.000000,15.000000);
\draw[drawZIG] (3.000000,10.000000) -- node[nodeZIG]{\SigModZigBig} (8.000000,15.000000);
\draw[drawwall_b] (3.000000,10.000000) -- node[nodewall_b]{\SigModWallb} (3.000000,16.000000);
\draw[drawzig] (0.000000,13.000000) -- node[nodezig]{\SigModZigSmall} (3.000000,16.000000);
\draw[drawZAG] (8.000000,15.000000) -- node[nodeZAG]{\SigModZagBig} (5.000000,18.000000);
\draw[drawZIG] (3.000000,16.000000) -- node[nodeZIG]{\SigModZigBig} (5.000000,18.000000);
\draw[drawzag] (3.000000,16.000000) -- node[nodezag]{\SigModZagSmall} (0.000000,19.000000);
\draw[drawwall_0] (0.000000,13.000000) -- node[nodewall_0]{\SigModWallZero} (0.000000,19.000000);
\draw[drawZAG] (5.000000,18.000000) -- node[nodeZAG]{\SigModZagBig} (3.000000,20.000000);
\draw[drawwall_b] (3.000000,16.000000) -- node[nodewall_b]{\SigModWallb} (3.000000,20.000000);
\draw[drawZAG] (3.000000,20.000000) -- node[nodeZAG]{\SigModZagBig} (2.000000,21.000000);
\draw[drawzig] (0.000000,19.000000) -- node[nodezig]{\SigModZigSmall} (2.000000,21.000000);
\draw[drawwall_0] (0.000000,19.000000) -- node[nodewall_0]{\SigModWallZero} (0.000000,22.000000);
\draw[drawwall_result] (2.000000,21.000000) -- node[nodewall_result]{\SigModWallr} (2.000000,22.000000);

\draw[arrows=<->] (0.000000,-0.500000) -- node[nodearrow_a]{$a$} (11.000000,-0.500000);
\draw[arrows=<->] (0.000000,-1.000000) -- node[nodearrow_b]{$b$} (3.000000,-1.00000);
\draw[arrows=<->] (0.000000,22.500000) -- node[nodearrow_r]{$a \bmod b$} (2.000000,22.500000);
\end{tikzpicture}}%
  \caption{Computation of the modulo (\SMmodulo), using three distinct speeds.}\label{fig:SMmodulo_machine}
\end{figure}

%
%
%
\subsection{Some geometrical properties}
\label{subsec:properties}
We give some properties and relations between signal machines, which will be useful for characterizing diagrams having accumulations, relatively to diagrams of some other machines.
Notions and properties presented below define informally relations of embedding and equivalence between signal machines, based on \Def{def:diagram-equivalency} of equivalent diagrams.
Indeed, two signal machines generating equivalent diagrams are intuitively {\em equivalent}. We also introduce a notion of inclusion between diagrams.

\paragraph{Transformations under affine functions}

We show in this paragraph that every signal machine can be transformed into an equivalent signal machine 
whose speed values include $0$ and $1$ (or any two other distinct real numbers).
In fact, we show that applying an affine function to speed values does not change the space-time diagram structure:

\begin{lemma}\label{lemma:affine-trans}
  Let \SM be a signal machine and $f:\R\to\R$ an affine function of strictly positive ratio.
  Let \SM[f] be the signal machine obtained by applying $f$ to all speeds of \SM, \ie, the speed function of \SM[f] is $f\circ\SpeedFun$ (where \SpeedFun is the one of \SM).
  Then \SM[f] generates space-time diagrams topologicaly equivalent to the ones generated by the machine \SM.
\end{lemma}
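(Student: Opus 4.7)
The plan is to exhibit an explicit homeomorphism $h : \SpaceTime \to \SpaceTime$ that transports every space-time diagram of \SM into a space-time diagram of \SM[f], together with a canonical bijection $\Phi$ on the value set. Write $f(x) = \alpha x + \beta$ with $\alpha > 0$, and define $h(x,t) = (\alpha x + \beta t,\, t)$. Since $h$ is linear with determinant $\alpha > 0$ and fixes every horizontal slice $\{t = \mathrm{const}\}$, it is a bicontinuous bijection of \SpaceTime. Let $\Phi$ be the identity on $\SigSet \cup \ColSet \cup \{\Void, \Accu\}$ viewed as a bijection between the value sets of \SM and \SM[f]: the two machines share the same meta-signals and the same rules, and only their speed functions differ, so in particular condition (\ref{def:diagram-equivalency_bijection}) of \Def{def:diagram-equivalency} holds trivially.

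The key computation is that $h$ sends trajectories of \SM to trajectories of \SM[f] with the correct speed. A signal of speed $s$ starting at $x_0$ traces $\{(x_0 + s t,\, t) : t \geq 0\}$, which $h$ maps onto $\{(\alpha x_0 + (\alpha s + \beta) t,\, t) : t \geq 0\}$: precisely the trajectory in \SM[f] of the same meta-signal (now of speed $f(s) = \alpha s + \beta$) starting at $\alpha x_0$. Consequently, if signals of speeds $s_1, \dots, s_k$ meet at $(x,t)$ in \SM, their $h$-images meet at $h(x,t)$ in \SM[f], and the incoming set of meta-signals (hence the collision rule to apply, via $\Phi$) is unchanged. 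At $t=0$, $h$ restricts to the horizontal dilation $x \mapsto \alpha x$, sending valid initial configurations of \SM to valid initial configurations of \SM[f].

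It remains to verify that the candidate diagram $\STD' = \Phi \circ \STD \circ h^{-1}$ satisfies each of the four clauses of \Def{def:diagram}. Clauses (i) and (ii) follow directly from the trajectory computation above, since $h$ preserves horizontal slices and $\Phi$ is the identity on non-\Void values. Clause (iii) is a local statement about collisions: because $h$ is an affine homeomorphism, small $\varepsilon$-neighbourhoods of $(x,t)$ correspond to small neighbourhoods of $h(x,t)$, and the characterizations of incoming and outgoing signal segments transport coherently by the trajectory identity. The main obstacle is clause (iv): accumulation points are preserved by any homeomorphism, but the causal-past condition of \Def{def:isolated-accumulation} requires genuine care. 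Here the monotonicity hypothesis $\alpha > 0$ is crucial: it guarantees that $f$ preserves the order of speeds, so the extremal speeds of \SM[f] are $f(\MaxSpeed)$ and $f(\MinSpeed)$, and a short direct calculation shows that $h$ sends the causal past of $(x,t)$ onto the causal past of $h(x,t)$ computed in \SM[f]. The converse direction---every diagram of \SM[f] comes from one of \SM---follows by applying the same construction to $f^{-1}$, which again has positive ratio $1/\alpha$, yielding the desired topological equivalence in the sense of \Def{def:diagram-equivalency}.
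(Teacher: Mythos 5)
Your proof is correct and follows the same overall strategy as the paper's: exhibit an explicit affine homeomorphism of \SpaceTime together with the identity bijection on the value set, and check that it carries trajectories of \SM onto trajectories of \SM[f]. The one genuine difference is the choice of homeomorphism. The paper decomposes $f(x)=a x+b$ into its translation part (which drifts positions, giving $\STD'(x,t)=\STD(x-bt,t)$) and its scaling part (which rescales time), arriving at $h(x,t)=(x-bt,\,at)$; since this $h$ fixes the slice $t=0$ pointwise, it relates the diagrams of \SM and \SM[f] started from the \emph{same} initial configuration. Your $h(x,t)=(\alpha x+\beta t,\,t)$ instead fixes time and shears/dilates space, so at $t=0$ it acts as $x\mapsto\alpha x$: it relates the diagram of \SM from \Config[0] to the diagram of \SM[f] from the dilated configuration, not from \Config[0] itself. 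This yields a formally weaker statement (equivalence of the two families of diagrams rather than of the two diagrams sharing a common initial configuration), but it still proves the lemma as stated and suffices for every later use of it, since dilation by $\alpha>0$ preserves finiteness and rationality of ratios of the initial configuration (compare \Lem{lemma:config_affine-transformation}). Your verification of the clauses of \Def{def:diagram}, in particular the causal-past computation for accumulations using $\alpha>0$ to see that extremal speeds map to extremal speeds, is sound, and the inverse direction via $f^{-1}$ correctly closes the equivalence.
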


\begin{proof}
  Let $a\in\R^+$ and $b\in\R$ so that $f(x) = a\cdot x + b$ for all $x\in\R$.
  Let \STD be a space-time diagram of \SM and $\STD'$ the diagram generated by 
  \SM[f] with the same initial configuration.

  Adding the constant $b$ to all speeds drifts progressively all positions and leaves all dates unchanged.
  It can easily be checked that for all $(x,t)\in\SpaceTime$, $\STD'(x,t) = \STD(x-b\cdot t,t)\enspace$.
  In the case of a signal \Sig located in $x_0$ at a given time $t$, its new position (in \STD)
  after a time duration $\Delta t$ is given by $x_1= \Delta t \cdot \SpeedFun(\Sig) + x_0$ (unless it collides before).
  In $\STD'$, its new position is given by 
  $x_1' = \Delta t\cdot (\SpeedFun(\Sig)+b) + x_0 
	= \Delta t \cdot \SpeedFun(\Sig) + x_0 + \Delta t \cdot b 
	= x_1 + \Delta t \cdot b $ and we have indeed $\STD'(x,t) = \STD(x-b\cdot t,t)\enspace$.
  In the case of a collision happening at coordinates $(x, t)$ between two signals \Sig[1] and \Sig[2] 
  (if more than two signals collide, the argument is the same but with a system of equations instead of one), 
  we know that there is a time $t_0$ so that $t$ is solution of the equation 
  $(t-t_0)\cdot\SpeedFun(\Sig[1]) + x_1 = (t-t_0)\cdot\SpeedFun(\Sig[2]) + x_2\enspace$,
  where $x_1$ (resp. $x_2$) is the spatial position of \Sig[1] (resp. \Sig[2]) at time $t_0$.
  Adding $b$ to speeds does not change the equation ($(t-t_0)\cdot b = \Delta t\cdot b$ appears on both side of the equation), 
  so that the time $t$ of the collision remains the same in the drifted diagram $\STD'$.
  After drifting, the new location $x'$ of the collision is given by 
  $x' = (t-t_0)\cdot(\SpeedFun(\Sig[1]) + b) + x_1 = x + \Delta t \cdot b \enspace$, so we also obtain in the case of a collision
  $\STD'(x,t) = \STD(x-b\cdot t,t)\enspace$.
  As position of all signals and collisions are drifted, the position of an accumulation will also be drifted.

  We show in the same way that multiplying all speeds by $a$ modifies all dates but 
  keep the spatial position values (because $a > 0$). We have for all $(x,t)\in\SpaceTime$, 
  $\STD'(x,t) = \STD(x,a\cdot t)\enspace$.

  Finally, applying $f$ to speeds is equivalent to the condition that 
  for all $(x,t)\in\SpaceTime$, $\STD'(x,t) = \STD(x-b\cdot t,a\cdot t)\enspace$.
  
  Since adding $b$ to speeds only drifts all positions and multiplying speeds by $a$ contracts (or distends) uniformly all times, parallel signals, colliding signals and 
  simultaneous collisions in \STD still are in $\STD'$.

  The function $h:\SpaceTime\to\SpaceTime$ defined by $h(x,t) = (x-b\cdot t,a\cdot t)$ is a homeomorphism
  (both components of $h$ are continuous and the bijectivity is easily checked).
  So there exists a homeomorphism $h$ so that for all $(x,t)\in\SpaceTime$, $\STD'(x,t)=\STD(h(x,t))$, \ie, $\STD$ and $\STD'$ are equivalent by \Def{def:diagram-equivalency}.
\end{proof}

It follows that, given a signal machine \SM whose speed values include $a$ and $b$ (with $a < b$), we can always transform \SM into an equivalent machine \SM[f] so that speed values of \SM[f] include the values $c$ and $d$ with $c < d$.
Indeed, the function $f:\R\to\R$ given by $f(x)=\frac{d-c}{b-a}\cdot x + \frac{cb-ad}{b-a}$ is an affine function of strictly positive ratio (since $c<d$ and $a<b$), and $f$ verifies $f(a)=c$ and $f(b)=d$.
We obtain by the previous lemma that \SM and \SM[f] generate equivalent topological space-time diagrams.

We show the same way that we can apply affine functions to initial configurations without changing the structure of the generated diagram:

\begin{lemma}\label{lemma:config_affine-transformation}
  Let $f:\R\to\R$ be an affine function of strictly positive ratio and \STD a diagram generated by a machine \SM from a configuration \Config[0].
  Then the diagram $\STD'$ generated by \SM from the  configuration $\Config[0]'$ defined by $\Config[0]'(x)=\Config[0](f(x))$ for all $x\in\R$,
  is equivalent to the diagram \STD.
\end{lemma}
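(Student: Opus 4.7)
The plan is to mirror the structure of the proof of \Lem{lemma:affine-trans}. Write $f(x) = a \cdot x + b$ with $a > 0$ and $b \in \R$; note that $f^{-1}$ is well defined and is also an affine function of strictly positive ratio. The transformation $\Config[0]'(x) = \Config[0](f(x))$ amounts to placing in $\Config[0]'$ every signal of $\Config[0]$ initially at some position $y$ at the new position $f^{-1}(y) = (y-b)/a$. Since \SM is unchanged, that signal keeps the same speed. The goal is therefore to exhibit a homeomorphism $h:\SpaceTime\to\SpaceTime$ and check the equivalence condition of \Def{def:diagram-equivalency} with $\Phi = \mathrm{Id}_{\STDVal}$ (both diagrams share the same sets of meta-signals, rules, \Void and \Accu).

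The candidate is $h(x,t) = (f(x), a\cdot t) = (a\cdot x + b,\, a\cdot t)$. First I would verify it on signal trajectories: a meta-signal \Sig of speed $s$ initially at $y_0$ in $\Config[0]$ traces in \STD the line $\{(y_0 + s\cdot t',\, t') : t' \geq 0\}$, while its counterpart in $\Config[0]'$, starting at $f^{-1}(y_0)$, traces in $\STD'$ the line $\{((y_0-b)/a + s\cdot t,\, t) : t \geq 0\}$. Applying $h$ to the latter gives $(a((y_0-b)/a + st) + b,\, at) = (y_0 + s\cdot at,\, at)$, which is exactly a point on the \STD trajectory (at time $t' = a\cdot t$). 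So up to the reparametrization $t' = a\cdot t$ of time and the drift $y_0 \mapsto a\cdot y_0 + b$ of positions, the two signals coincide.

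Next I would handle collisions. If two (or more) signals of speeds $s_i$ starting at $y_i$ collide in \STD at $(x_c, t_c)$, then the corresponding signals in $\STD'$, starting at $(y_i-b)/a$ with the same speeds, satisfy the collision equation $(y_i-b)/a + s_i\cdot t = (y_j-b)/a + s_j\cdot t$, whose solution is $t = t_c/a$ at spatial position $(x_c - b)/a$; equivalently, the \STD' collision occurs at $h^{-1}(x_c, t_c)$. Because \SM is the same in both runs, the incoming/outgoing signal sets at each collision are identical, so the dynamics described in \Def{def:dynamics} proceed step by step in exact correspondence through $h$. The sequence of collision times is rescaled by $1/a$, and since $a > 0$, finite limits remain finite limits.

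Finally, because $h$ is an affine homeomorphism (both components are affine with nonzero leading coefficients), it maps accumulation points of $\STD^{-1}(\STDVal\setminus\{\Void\})$ bijectively to accumulation points of $(\STD')^{-1}(\STDVal\setminus\{\Void\})$, so the \Accu-values also match through $h$. The identity bijection $\Phi$ trivially satisfies condition~\ref{def:diagram-equivalency_bijection} of \Def{def:diagram-equivalency}, and we obtain $\STD'(x,t) = \Phi(\STD(h(x,t)))$ for all $(x,t)\in\SpaceTime$, which is the required equivalence. The main routine work, analogous to the end of the proof of \Lem{lemma:affine-trans}, is the bookkeeping verification that no spurious collision or accumulation is introduced or lost by the rescaling; this follows from the fact that $h$ is bijective, continuous and sends signal lines to signal lines of the same speed, so the deterministic dynamics of \Def{def:dynamics} transport identically under $h$.
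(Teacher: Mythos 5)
Your proof is correct and takes essentially the route the paper intends: as in \Lem{lemma:affine-trans}, you exhibit an explicit affine homeomorphism of \SpaceTime, take $\Phi$ to be the identity, and check signals, collisions and accumulations transport through it. Note that your formula $\STD'(x,t)=\STD(a\cdot x+b,\,a\cdot t)$ is the correct one for the stated definition $\Config[0]'(x)=\Config[0](f(x))$, whereas the paper's one-line justification $\STD'(x,t)=\STD(x-b,\tfrac{1}{a}\cdot t)$ fails already for a stationary signal, so your more careful verification actually repairs a slip in the paper's own argument.
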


Indeed, for all $(x,t)\in\SpaceTime$, we have $\STD'(x,t)=\STD(x-b,\frac{1}{a}\cdot t)$.

\paragraph{Notion of support}

We will define now a notion of \emph{support}, both for machines and diagrams.
Intuitively, given a signal machine \SM, the \emph{support signal machine \Supp{\SM}} of \SM will be defined by considering the set of distinct speed values of \SM.
As we are looking for accumulations, all the collision rules will be set to produce the maximal number of outcoming signals.
This follows the intuitive idea that accumulations occur more easily when collisions create a lot of signals. 
Then, any space-time diagram of \SM will be embedded into a \emph{support space-time diagram} of \Supp{\SM}, with the property of keeping the existence of accumulations.

Let $\SM=(\SigSet, \SpeedFun, \ColSet)$ be a signal machine. 
We define on \SigSet the binary relation $\sim$ so that for all $\mu, \sigma \in \SigSet$, $\mu \sim \sigma \Leftrightarrow \SpeedFun(\mu) = \SpeedFun(\sigma)$. 
$\sim$ is clearly an equivalence relation.
An equivalence class for $\sim$ contains exactly all meta-signals of \SigSet having the same speed.
We write $\EqClass{\Sig}$ to designate the equivalence class of the meta-signal \Sig.
Since each class is finite, we can choose a system of representants $\{\Sig[i]\}_{1\leq i \leq n}$, 
where $n$ is the number of equivalence classes, \ie, the number of distinct speed values of \SM.

\begin{definition}\label{def:support-machine}
  Let $\SM=( \SigSet, \SpeedFun, \ColSet)$ be a signal machine. 
  The {\em support machine \Supp{\SM} of \SM} is the machine $\Supp{\SM}= (\SigSet', \SpeedFun', \ColSet')$ so that:
    \begin{enumerate}[(i)]
      \item $\SigSet' = \SetQuotient{\SigSet}{\sim}$ ;
      \item $\SpeedFun':\SigSet'\to\R$ is defined by $\SpeedFun'(\EqClass{\Sig})=\SpeedFun(\Sig)$ for all $\mu$; 
      \item $\ColSet' = \{\ C^- \rightarrow C^+ \ |\ C^- \subseteq \SigSet', \Card{C^-} \geq 2 \text{ and } C^+= \SigSet' \ \}$.
    \end{enumerate}
\end{definition}

For each distinct speed value, we choose only one meta-signal of the original machine \SM having this speed.
The set of collision rules is defined as follow: for each possible collision, the set of outcoming signals
is $\SigSet'$ (the whole set of meta-signals). Such rules can be defined since all meta-signals in $\SigSet'$ have distinct speeds.
Note that if \SM is a $n$-speed machine, then \Supp{\SM} is also a $n$-speed machine.

We can extend the canonical surjection $\Sig\mapsto\EqClass{\Sig}$ into a surjection $\Pi:\ColSet\to\ColSet'$.
For each $C = \{\Sig[i]^-\}_{i \in I} \rightarrow \{\Sig[j]^+\}_{j \in J} \in \ColSet$, we define 
$\Pi(C) = C'\in\ColSet'$ where $C' = \{\EqClass{\Sig[i]^-}\}\rightarrow\SigSet'$ 
(since the set of outcoming meta-signals of $C'$ is the whole set of meta-signals $\SigSet'$, $C'$ is indeed in \ColSet' defined previously).
We can also extend this surjection to the set of initial configurations.
Given an initial configuration \Config[0] of the machine \SM, 
we write \Supp{\Config[0]} for the configuration \Config[0] in which every meta-signal \Sig is replaced by the meta-signal \EqClass{\Sig}.
Clearly, \Supp{\Config[0]} is an initial configuration of the machine \Supp{\SM}.

We can now define the notion of {\em support space-time diagram}:

\begin{definition}[Support diagram]
  Let \STD be a space-time diagram of the signal machine \SM started from the  initial configuration \Config[0].
  We define \Supp{\STD}, the {\em support diagram of \STD}, as the space-time diagram of \Supp{\SM} executed on the initial configuration \Supp{\Config[0]}.
\end{definition}

This notion of support diagram of another diagram has to be carefully distinguished from the notion of support of a diagram: 
the {\em support diagram} of a diagram \STD is a diagram (generated by a support machine) whereas the {\em support} of the diagram \STD is a set 
(given by $\Support{\STD} = \{ (x,t)\in\SpaceTime ~|~ \STD(x,t)\neq\Void \}$.

\begin{figure}[hbt]
  \centering
  \Width 1cm
  \newcommand{\XT}{-1}
  \subfigure[A space-time diagram.]{%
    \begin{tikzpicture}[x=\Width,y=\Width]
      \draw[<->] (\XT.75,0) -- node[right,pos=1] {\Space}  (4.75,0);
      \draw[->] (\XT,-.25) --   (\XT,3.5) node[above] {\Time};
      \begin{scope}
       \clip (-.25,0) rectangle (4.75,3.31) ;
      \begin{scope}[nodes={inner sep=.07em,minimum size=.1em,draw,circle}]
        \draw (1,1) node (A) {};
        \draw (3.4,2.2) node (B) {};
        \draw (3,3) node (C) {};        
      \end{scope}
      \draw[green] (0,0) -- (A) -- (C) -- (3.5,3.5);
      \draw[thick,dashed,purple] (3.4,0) -- (B);
      \begin{scope}[very thick]
        \draw (1.5,0) -- (A) -- (-.25,3.5);
        \draw (B) -- (C) -- (2.75,3.5);
      \end{scope}
      \begin{scope}[very thick,dotted,blue]
        \draw (A) -- (B);
        \draw (C) -- (4,3.5);
      \end{scope}
    \end{scope}
    \end{tikzpicture}\quad}
  \subfigure[The corresponding support diagram.]{%
    \begin{tikzpicture}[x=\Width,y=\Width]
    \begin{scope}
       \clip (-.25,0) rectangle (4.75,3.31) ;
       \begin{scope}[nodes={inner sep=.07em,minimum size=.1em,draw,circle}]
        \draw (1,1) node (A) {};
        \draw (3.4,2.2) node (B) {};
        \draw (3,3) node (C) {};
        \draw (3.4,3.4) node (D) {}; 
        \draw (3.4,3.2) node (E) {};    
      \end{scope}
	\draw[thick] (0,0) -- (A) -- (C) -- (D) ;
	\draw[thick] (3.4,0) -- (B) ;
        \draw[thick] (1.5,0) -- (A) -- (-.25,3.5) ;
        \draw[thick] (B) -- (C) -- (2.75,3.5) ;]
        \draw [thick](A) -- (B);
        \draw[thick] (E) -- (4,3.5);
        \draw[thick] (E) -- (3.51,3.31);
        \draw[thick] (A) -- (1,3.5);
        \draw[thick] (E) -- (3.2,3.5);
        \draw[thick] (B) -- (E);
        \draw[thick] (E) -- (D);
        \draw[thick] (C) -- (E);
        \draw[thick] (B) -- (4.75,2.9);
        \draw[thick] (B) -- (4.75,2.9);
        \draw[thick] (C) -- (3,3.55);
        \draw[thick] (B) -- (4.75,3.55) ;
        \draw[thick] (D) -- (3.4,3.5) ;
        \draw[thick] (D) -- (3.35,3.5) ;
        \draw[thick] (D) -- (3.6,3.5) ;
      \end{scope}
      \draw[<->] (\XT.75,0) -- node[right,pos=1] {\Space}  (4.75,0);
      \draw[->] (\XT,-.25) -- (\XT,3.5) node[above] {\Time};
    \end{tikzpicture}}
  \caption{A space-time diagram and its support diagram.\label{fig:diagram-support_example}}
\end{figure}
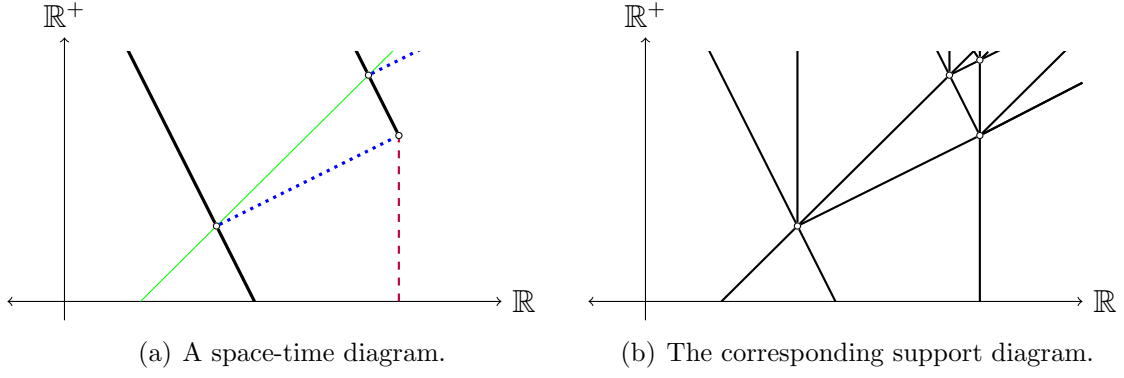

\paragraph{Inclusion of diagrams}

A relation of inclusion can be defined between diagrams (resp. configurations): a diagram (resp. configuration) is said to be included in another diagram (resp. configuration) 
if all its non-\Void positions are also non-\Void positions for the second diagram (resp. configuration).

\begin{definition}[Diagrams inclusion]\label{def:diagram-inclusion}
  Let $\STD$ and $\STD'$ be two space-time diagrams, respectively defined on $\Space\times [0;T]$ and $\Space\times [0;T']$.
  We say that {\em $\STD'$ is included in $\STD$}  (or that  {\em $\STD'$ is supported by $\STD$}) 
  if $\Support{\STD'}\cap\Space\times [0;inf(T,T')]\subseteq\Support{\STD}\cap\Space\times [0;inf(T,T')]$.
\end{definition}

We note $\STD'\STDinclu\STD$ if $\STD'$ is included in \STD.
The restriction of supports to the set $\Space\times [0;inf(T,T')]$ is necessary to compare supports of diagrams only on a space-time area on which they are both defined.

We can show that support diagrams ``bound'' the structures of diagrams in the sense that any diagram is included in its support diagram:

\begin{lemma}\label{lemma:diagram_support-inclusion}
  For any diagram $\STD$, we have $\STD\STDinclu\Supp{\STD}$.
\end{lemma}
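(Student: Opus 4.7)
The plan is to prove the inclusion by induction on the sequence of collision times $(t_n)_n$ of \STD, showing that at each step every signal and collision of \STD up to time $t_n$ corresponds to a non-\Void position in $\Supp{\STD}$. The base case is immediate: at $t_0=0$, \Config[0] and $\Supp{\Config[0]}$ share the same support by construction of the support machine and configuration, since each signal \Sig in \Config[0] is replaced by its equivalence class $\EqClass{\Sig}$ at the same location.

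For the inductive step, assume the claim holds up to time $t_n$. Between $t_n$ and $t_{n+1}$ no collision occurs in \STD, so every signal \Sig present at $t_n$ travels along a straight trajectory of speed $\SpeedFun(\Sig)$. By the induction hypothesis the corresponding signal $\EqClass{\Sig}$ is present in $\Supp{\STD}$ at $t_n$ at the same position, with the same speed. The subtlety is that $\Supp{\STD}$ may contain extra signals (issued from earlier collisions which, in $\Supp{\SM}$, produce all meta-signal classes) and hence extra collisions inside the interval $(t_n,t_{n+1})$. However, at any such extra collision the rules of $\Supp{\SM}$ emit one meta-signal of every speed class, so a fresh instance of $\EqClass{\Sig}$ is re-emitted along the same line. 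Thus the whole trajectory of \Sig in \STD remains non-\Void in $\Supp{\STD}$. At $t_{n+1}$ a collision $\ColIn\to\ColOut$ of \STD occurs at some point $(x,t_{n+1})$; by the previous argument all the signals $\{\EqClass{\Sig}:\Sig\in\ColIn\}$ meet at $(x,t_{n+1})$ in $\Supp{\STD}$, triggering a collision there (possibly involving additional incoming classes), which preserves the induction. For a point where $\STD(x,t)=\Accu$, the causal past of $(x,t)$ contains infinitely many collisions arbitrarily close to it; each of them is, by the preceding, a collision of $\Supp{\STD}$ at the same position, so $\Supp{\STD}(x,t)=\Accu$ as well, as required by \Def{def:diagram}.

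The main obstacle is a rigorous handling of the extra collisions in $\Supp{\STD}$: they occur at times that need not belong to the sequence $(t_n)_n$ of \STD, so a naïve induction on $(t_n)_n$ does not directly cover them. The clean way is either to interleave the collision times of \STD and $\Supp{\STD}$ into a single increasing sequence, or to argue trajectory-by-trajectory as sketched above, tracing each signal segment of \STD back to its origin (the initial configuration or a collision) and showing that in $\Supp{\STD}$ there is always a signal of the same speed class at the same position and time, possibly broken into several successive instances across extra collisions. The essential ingredient that makes the argument go through is precisely the design of \Supp{\SM}: every collision rule produces one meta-signal of every speed class, so no signal trajectory of \STD is ever genuinely interrupted in $\Supp{\STD}$, which immediately yields $\STD\STDinclu\Supp{\STD}$.
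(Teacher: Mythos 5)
Your proposal is correct and follows essentially the same route as the paper: trace each signal of \STD back to its origin (initial configuration or collision), observe that the corresponding class $\EqClass{\Sig}$ is emitted at the same point with the same speed because every rule of $\Supp{\SM}$ outputs all meta-signal classes, and conclude that collisions and accumulations of \STD sit at non-\Void positions of $\Supp{\STD}$. You are in fact somewhat more careful than the paper's own proof, which asserts that \Sig and $\EqClass{\Sig}$ ``take the same positions'' without noting that the trajectory in $\Supp{\STD}$ may be cut into several successive instances of $\EqClass{\Sig}$ by extra collisions --- the point your trajectory-by-trajectory argument handles explicitly.
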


\begin{proof}
  Let $\STD$ be a diagram of a signal machine $\SM=( \SigSet, \SpeedFun, \ColSet)$
  and $\Supp{\STD}$ be the support diagram of $\STD$, generated by $\Supp{\SM}= (\SigSet', \SpeedFun', \ColSet')$, 
  the support machine of \SM.
  Suppose that $\STD$ and $\Supp{\STD}$ are respectively defined on $\Space\times[0;T]$ and $\Space\times[0;T']$.
  Taking the support diagram do not remove any signal or collision of the previous diagram and can only add new object.
  For each signal (resp. collision) in $\STD\cap\Space\times [0;inf(T,T')]$ occuring at $(x,t)$, there exists a signal (resp. a collision)
  in $\Supp{\STD}\cap\Space\times [0;inf(T,T')]$ occuring at $(x,t)$.
  Indeed, for a collision $C = \{\Sig[i]^-\}_{i \in I} \rightarrow \{\Sig[j]^+\}_{j \in J}$ so that 
  $\STD(x, t)=C$, we have $\Supp{\STD}(x,t) = \Pi(C) = \{\EqClass{\Sig[i]^-}\}_{i \in I} \rightarrow \SigSet'\enspace$.
  $\Pi(C)$ has the same position $(x, t)$ that $C$, but it outputs more signals, corresponding exactly to all meta-signals 
  of the support machine.
  For a signal \Sig so that $\STD(x,t) = \Sig$, either $t=0$ and we 
  have $\Supp{\STD}(x,t) = \Supp{\STD[0]}(x) = \EqClass{\Sig}\enspace$, or \Sig was created by a collision $C$
  at $(x_0, t_0)$. In this case, we have $\Supp{\STD}(x_0, t_0) = \Pi(C)$ and the set of outcoming signals of $\Pi(C)$
  is $\SigSet'$, so $\Pi(C)$ generates the signal $\EqClass{\Sig}$. 
  As \Sig and $\EqClass{\Sig}$ are both generated in $(x_0, t_0)$ in their respective diagrams and as their speeds
  are equal by definition of $\EqClass{\Sig}$, 
  they have the same motion equations and will take the same positions.
  In particular we have 
  $\Supp{\STD}(x, t) = \EqClass{\Sig}\enspace$.
  So in both cases, if $\STD(x,t) = \Sig$ then $\Supp{\STD}(x, t) = \EqClass{\Sig}\enspace$.
  
  Assume that \STD contains an accumulation at $(x, t)$: $\STD(x,t) = \Accu$.
  There is an infinite number of signals and collisions in $\CausalPast(x, t)$ (the causal past of $(x, t)$ in \STD).
  As for every signal (resp. collision) occuring in $\CausalPast(x, t)$ in the diagram \STD, 
  there is a signal (resp. a collision) exactly at the same position in the support diagram \Supp{\STD}, 
  there must also be an infinite number of signals and collisions in $\Supp{\CausalPast}(x, t)$, 
  the causal past of $(x, t)$ in \Supp{\STD}.
  So we have $\Supp{\STD}(x, t) = \Accu\enspace$.
  
  Finally, we have $\forall(x,t)\ \STD(x,t)\neq\Void \Rightarrow \Supp{\STD}\neq\Void$, \ie, $\Support{\STD}\subseteq\Support{\Supp{\STD}}$.
\end{proof}

More generally, we can show that the relation of inclusion between diagrams keeps the existence of accumulations:

\begin{lemma}\label{lemma:accu-diagram_inclusion}
  Let \STD and $\STD'$ be two diagrams so that $\STD'\STDinclu\STD$.
  Then:\\
  $\forall (x,t)\in\SpaceTime\ \STD(x,t)\neq\Accu\ \Rightarrow \forall (x,t)\in\SpaceTime\ \STD'(x,t)\neq\Accu$.
\end{lemma}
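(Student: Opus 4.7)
I would argue by contrapositive: assume $\STD$ contains no accumulation anywhere and show that $\STD'$ cannot either. Suppose toward contradiction that $\STD'(x,t)=\Accu$ for some $(x,t)\in\SpaceTime$. Applying condition~(iv) of \Def{def:diagram} to $\STD'$, I extract a sequence of collision points $(x_n,t_n)$ of $\STD'$ lying in $\CausalPast(x,t)$, with $t_n\nearrow t$ and $(x_n,t_n)\to(x,t)$. Each such point is non-\Void in $\STD'$, so the inclusion $\STD'\STDinclu\STD$ forces $\STD(x_n,t_n)\neq\Void$; hence $\STD(x_n,t_n)$ is a signal, a collision, or $\Accu$ of $\STD$. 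If any $(x_n,t_n)$ is already an accumulation of $\STD$, the proof is complete, so I may assume the contrary.

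The crucial step is to rule out the possibility that $\STD(x_n,t_n)\in\SigSet$ holds for infinitely many $n$. Because $\STD'(x_n,t_n)\in\ColSet$, \Def{def:machine} ensures that the collision of $\STD'$ at $(x_n,t_n)$ involves at least two incoming signals whose speeds are distinct; their trajectories near the point are two distinct straight segments terminating at $(x_n,t_n)$, and by support inclusion both segments lie inside $\Support{\STD}$. The assumed absence of accumulations in $\STD$ implies via \Def{def:dynamics} that the collisions of $\STD$ are locally finite around $(x_n,t_n)$, so the support of $\STD$ in a small neighborhood is a finite union of signal trajectories together with isolated collision dots. If $\STD(x_n,t_n)$ were a single signal of $\STD$, only one of the two incoming lines could match its trajectory; the other would have to coincide near $(x_n,t_n)$ with a second $\STD$-signal of different speed passing through the same point. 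But two signals of distinct speeds crossing at the same point constitute a collision by \Def{def:machine}, contradicting $\STD(x_n,t_n)\in\SigSet$.

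Consequently all but finitely many $(x_n,t_n)$ are collisions of $\STD$, giving an infinite sequence of collisions of $\STD$ converging to $(x,t)$. Invoking the dynamical characterization of accumulations in \Def{def:dynamics}, the existence of such a sequence forces $\STD(x,t)=\Accu$, contradicting the assumption that $\STD$ contains no accumulation. I expect the main obstacle to be exactly the middle step above: translating the purely support-level fact that two segments of $\STD'$ meet at $(x_n,t_n)$ into the pointwise conclusion that $(x_n,t_n)$ is itself a collision of $\STD$, which rests on the local finiteness of the collisions of an accumulation-free diagram and on the collision axiom of \Def{def:machine} that prevents two signals of distinct speeds from sharing a point without constituting a collision.
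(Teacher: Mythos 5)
Your proof is correct in substance, but it is a more detailed (and in one respect more honest) version of what the paper does. The paper's own proof stays entirely at the level of supports: it asserts that the absence of accumulations in \STD means every neighbourhood meets \Support{\STD} in only finitely many non-\Void points, transfers this finiteness through $\Support{\STD'}\subseteq\Support{\STD}$, and concludes. Read literally that assertion is false --- a single signal trace already contributes a continuum of non-\Void points --- and the real content, namely that infinitely many \emph{collisions} of $\STD'$ converging to a point must force infinitely many collisions of \STD nearby, is exactly the step you isolate and argue: the two incoming traces of an $\STD'$-collision have distinct slopes and both lie in \Support{\STD}, so a locally finite union of \STD-trajectories can only cover them if two \STD-signals of distinct speeds meet. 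So your middle argument fills in the gap the paper glosses over rather than diverging from it. One small imprecision on your side: the \STD-signal covering the second incoming trace need not pass through $(x_n,t_n)$ itself; it may terminate at a collision of \STD strictly before that point. Since any such collision lies on the trace, hence arbitrarily close to $(x_n,t_n)$, you still obtain infinitely many collisions of \STD converging to $(x,t)$ and the contradiction with \Lem{lemma:accu-dynamics} goes through, so the argument survives with the weaker conclusion ``a collision of \STD at or arbitrarily near $(x_n,t_n)$''. (Both your proof and the paper's also silently assume the accumulation of $\STD'$ occurs at a time where \STD is still defined, i.e.\ within $[0;\inf(T,T')]$ in the sense of \Def{def:diagram-inclusion}.)
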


That is, if $\STD'$ is included in $\STD$ and $\STD$ have no accumulation, then neither does $\STD'$.

\begin{proof}
  We have $\Support{\STD'}\STDinclu\Support{\STD}$.
  Since \STD doesn't contain any accumulation, we have $\forall(x,t)\in\SpaceTime$, $\STD(x,t)\neq\Accu$.
  For all $(x,t)$ and for all neighbourhood $\mathcal{V}_{(x,t)}\subseteq\SpaceTime$, 
  there is only a finite number of non-\Void positions in $\mathcal{V}_{(x,t)}$, \ie $\mathcal{V}_{(x,t)}\cap\Support{\STD}$ is finite.
  From $\Support{\STD'}\STDinclu\Support{\STD}$, we deduce that $\mathcal{V}_{(x,t)}\cap\Support{\STD'}$ is also finite.
  So for all $(x,t)\in\SpaceTime$, $\STD'(x,t)\neq\Accu$ \ie $\STD'$ have no accumulation.
\end{proof}

Combination of \Lem{lemma:diagram_support-inclusion} and \ref{lemma:accu-diagram_inclusion} implies the following corollary: 

\begin{corollary}\label{lemma:accu-transfert}
  Let \STD be a space-time diagram of a machine \SM.
  If \Supp{\STD} doesn't contain any accumulation, then neither does \STD.
\end{corollary}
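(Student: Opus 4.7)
The plan is to derive this corollary by directly chaining the two immediately preceding results. The statement follows from Lemma~\ref{lemma:diagram_support-inclusion} and Lemma~\ref{lemma:accu-diagram_inclusion} applied in succession, with essentially no additional work required.

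First, I would invoke Lemma~\ref{lemma:diagram_support-inclusion} to obtain the inclusion $\STD\STDinclu\Supp{\STD}$ for the given space-time diagram \STD. This tells us that every non-\Void position of \STD is also a non-\Void position of its support diagram \Supp{\STD} (on their common time domain), which is exactly the hypothesis needed for the second lemma.

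Next, I would apply Lemma~\ref{lemma:accu-diagram_inclusion} in the role with $\STD'=\STD$ (the smaller diagram) and with \Supp{\STD} playing the role of the larger containing diagram. By hypothesis of the corollary, \Supp{\STD} contains no \Accu value, so the lemma's conclusion gives that \STD contains no \Accu value either, which is exactly the desired statement.

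There is no real obstacle here: the corollary is essentially just the transitive application of the two preceding lemmas, packaged in a form convenient for later use (notably in \Section{sec:3speeds}, where one will want to rule out accumulations in an arbitrary three-speed machine by ruling them out in the associated support machine, which has the maximal set of collision rules). The only minor point to be careful about is the time-domain restriction in \Def{def:diagram-inclusion}: since \Supp{\STD} is defined as the evolution of \Supp{\SM} from \Supp{\Config[0]}, it is defined on a time interval containing that of \STD (or at least their intersection is non-trivial), so the inclusion of supports is meaningful on the region where \STD lives, and this suffices to propagate the absence of accumulation back to \STD.
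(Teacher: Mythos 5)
Your proof is correct and matches the paper exactly: the paper states this corollary as an immediate combination of Lemma~\ref{lemma:diagram_support-inclusion} (giving $\STD\STDinclu\Supp{\STD}$) and Lemma~\ref{lemma:accu-diagram_inclusion} (transferring the absence of \Accu from the containing diagram to the included one), which is precisely your chaining. No further comment is needed.
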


We can also define the notion of {\em inclusion of configurations}: 
a configuration $\Config'$ is included in a configuration $\Config$ if $\Support{\Config'}\subseteq\Support{\Config}$.

Please note that by definition of support machines, if two initial configurations $\Config[0]$ and $\Config[0]'$ of the same support machine, producing respectively the diagrams $\STD$ and $\STD'$ (which are support diagrams since the machine is a support machine), are such that any signal in $\Config[0]'$ is also in $\Config[0]$ at the same position, it follows that $\STD'$ is included in $\STD$.

\paragraph{Topological accumulations and dynamics}

The following lemma links accumulation with dynamics:

\begin{lemma}\label{lemma:accu-dynamics}
  There is a (dynamic) accumulation at $(x,t)$ if and only if 
  there exists a sequence $(C_n)_{n\in\N}$ of collisions ordered by dates so that $\lim\limits_{n \to \infty} (x_n,t_n) = (x,t)$, 
  where $(x_n, t_n)$ are the coordinates of the collision $C_n$.
\end{lemma}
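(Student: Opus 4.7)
The plan is to prove both directions using the characterization of dynamic accumulation extracted from Definition \ref{def:dynamics}: $\STD(x,t) = \Accu$ if and only if for every $\varepsilon > 0$ there is a collision at some coordinates $(x', t')$ with $|x - x'| < \varepsilon$ and $t - \varepsilon < t' < t$. This is exactly the existential clause used in the dynamics to place an \Accu value.

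For the forward direction, I will construct the sequence $(C_n)$ inductively. The initial collision $C_1$ is furnished directly by the definition applied with $\varepsilon = 1$. Having produced collisions $C_1, \ldots, C_n$ with strictly increasing dates $t_1 < \cdots < t_n < t$, I apply the accumulation property with $\varepsilon_{n+1} = \min\bigl(1/(n+1),\, t - t_n\bigr) > 0$ to obtain a collision $C_{n+1}$ at $(x_{n+1}, t_{n+1})$ with $t_n < t_{n+1} < t$ and $|x_{n+1} - x| < 1/(n+1)$. The enforced bounds $|x_n - x| < 1/n$ and $t - 1/n < t_n < t$ force the sequence to converge to $(x,t)$, and the dates are strictly increasing by construction.

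For the converse direction, suppose a sequence of collisions $(C_n)$ ordered by dates converges to $(x,t)$. Strict monotonicity of $(t_n)$ together with $t_n \to t$ forces $t_n < t$ for every $n$. Given any $\varepsilon > 0$, convergence provides an index $N$ such that $|x_N - x| < \varepsilon$ and $t - \varepsilon < t_N < t$; the collision $C_N$ then witnesses the defining condition of Definition \ref{def:dynamics}, so $\STD(x,t) = \Accu$.

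The main subtlety is in the inductive step of the forward direction: the construction must be guaranteed to produce a collision \emph{strictly later} than the previous one, and moreover there must be infinitely many collisions available so that the induction does not starve. Both facts follow at once from the ``$\forall \varepsilon, \exists$'' pattern of the definition, because if only finitely many collisions $(x_1', t_1'), \ldots, (x_k', t_k')$ lay in some backward neighborhood of $(x,t)$, taking $\varepsilon$ smaller than every gap $t - t_i'$ would make the required $\varepsilon$-neighborhood empty of collisions, contradicting the accumulation hypothesis. Once this observation is in place, both implications reduce to standard analytic manipulations with $\varepsilon$-neighborhoods.
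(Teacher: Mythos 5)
Your proof is correct and follows essentially the same route as the paper's (which is only a few lines long): both directions are direct unwindings of the definition of a dynamic accumulation, the forward one by extracting a sequence of collisions from the ``$\forall\varepsilon\,\exists$'' clause and the converse by reading that clause off the convergent sequence. Your version merely supplies the explicit inductive choice of $\varepsilon_n$ and the monotonicity bookkeeping that the paper leaves implicit.
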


\begin{proof}
  In the case of a dynamic accumulation at $(x,t)$, \Def{def:isolated-accumulation} provides the sequence $(C_n)_{n\in\N}$:
  it is the sequence of collisions accumulating in \CausalPast(x,t), the causal past of $(x,t)$. 
  Since for all $n$, $C_n\in\CausalPast(x,t)$, we have $\lim\limits_{n \to \infty} (x_n,t_n)=(x,t)$.
  For the other implication, \Def{def:dynamics} implies directly that $(x,t)$ is a dynamic accumulation.
\end{proof}

%
%
%
\section{Cases of 2 and 4 speeds}
\label{sec:4-2speeds}
This paper deals with the link between accumulation and the numer of speeds of a signal machine.
We address in this section two cases that provide some bounds on the number of speeds that allow or forbid accumulations:
they can be generated by $4$-speed machines, whereas $2$-speed machines are unable to produce accumulations.
The case of $1$-speed signal machines is not detailed since it is trivial (no collision can occur between signals having the same speed), 
and the border case of $3$-speed machines will be handled in \Sec{sec:3speeds}.
\subsection{Case of 4 speeds}
\label{subsec:4speeds}
The case of $4$ speeds is directly settled by an accumulation example with a $4$-speed machine, 
and we prove here formally that the diagram of \Fig{fig:most-basic-accumulation} in \Sec{sec:introduction} contains an accumulation.

\paragraph{Zeno's paradoxes}
Accumulations can be seen as a variant of the famous Zeno's paradox called the {\em dichotomy paradox}. 
This paradox is a characterization of continuous spaces, in which distances can be divided infinitely in smaller non-zero parts.
Yet, such a distance can be runned in a finite time even though an infinite number of (smaller and smaller) distances have to be runned.
Accumulations can be understood in this paradox meaning: 
an accumulation is the realization of an infinite number of steps ---mainly collisions--- during a finite time.
For instance, the accumulation given below corresponds to an infinite number of back-and-forth (with two collisions at each step) 
between two signals so that the distance between them gets smaller and smaller.

\paragraph{A simple example of accumulation with 4 speeds}
To provide the simple accumulation of \Fig{fig:accu4} with only $4$ distinct speeds, we consider the signal machine \SM[4] defined by \Fig{fig:accu4-rules},
in which collision rules define a bounce of \SigZig (resp. \SigZag) on \SigRight (resp. \SigLeft).

\begin{figure}[hbt]%
  \centering\small
  \subfigcapskip 1em
  \subfigure[Meta-signals and rules of {\SM[4]}.\label{fig:accu4-rules}]{%
    \begin{tabular}[b]{c}
      \begin{tabular}[b]{c|c}
	\bf Meta-signal & \bf Speed\\\hline
	\SigZig   & $4$\rule{0pt}{5mm}\\[.2em]
	\SigLeft  & $1/2$\\[.2em]
	\SigRight & $-1/2$\\[.2em]
	\SigZag   & $-4$\\[.2em]
      \end{tabular}\\[1em]%
      \begin{tabular}[b]{r@{ $\rightarrow$ }l}
	\multicolumn{2}{c}{\bf Collision rules}\\\hline
	\{ \SigLeft, \SigZag \}  & \{ \SigLeft, \SigZig \}\rule{0pt}{5mm}\\[.2em]
	\{ \SigZig, \SigRight \} & \{ \SigZag, \SigRight \}
      \end{tabular}%
    \end{tabular}}\quad%
  \subfigure[Accumulating with $4$ speeds.\label{fig:accu4}]{\qquad%
  \includegraphics[width=0.35\textwidth]{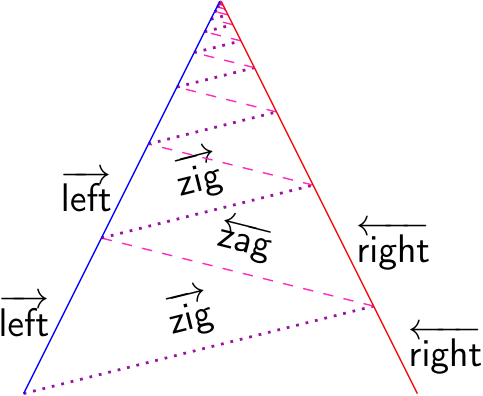}\qquad}%
  \caption{Meta-signals, rules and an accumulation of the machine {\SM[4]}.}\label{fig:accu4-machine}
\end{figure}

This machine allows to generate an accumulation, when started from a well-chosen but very simple initial configuration:

\begin{lemma}\label{lemma:ex-accu4}
  The signal machine \SM[4] generates an accumulation at coordinates $(0, 2)$
  when started from $\Config[0]=\{\ \SigAt{\SigLeft}{-1},\ \SigAt{\SigZig}{-1},\ \SigAt{\SigRight}{1}\ \}$,
\end{lemma}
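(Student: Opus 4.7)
The strategy is to exhibit an explicit infinite sequence of collisions whose coordinates converge to $(0,2)$ and then invoke Lemma~\ref{lemma:accu-dynamics}.

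I would first observe that both collision rules of \SM[4] preserve \SigLeft and \SigRight: whenever one of them meets a \SigZig (resp. \SigZag), the outgoing set contains the same slow signal with the same speed. Consequently, along the whole (prospective) computation \SigLeft travels on the line $x=-1+t/2$ and \SigRight on $x=1-t/2$, and these two trajectories would cross precisely at the point $(0,2)$. This identifies $(0,2)$ as the unique candidate accumulation point for any back-and-forth dynamics taking place between them.

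Next, I would set up a recursion on ``bouncing cycles.'' Let $t_n$ be a time at which \SigZig sits at the current position of \SigLeft, and let $d_n$ denote the gap between \SigLeft and \SigRight at that instant; take $t_0=0$, $d_0=2$. Starting from such a state, closing-speed arguments---$9/2$ between \SigZig and \SigRight, and $9/2$ again between \SigZag and \SigLeft---show that \SigZig meets \SigRight first (because \SigZig is strictly faster than \SigLeft and emanates from the same point), after a duration $2d_n/9$, producing a \SigZag at gap $7d_n/9$ from \SigLeft; then this \SigZag meets \SigLeft after a further $14d_n/81$, producing a fresh \SigZig at \SigLeft's new position. A direct computation of the positions at these two events yields
\[
  d_{n+1} \;=\; \tfrac{49}{81}\,d_n, \qquad t_{n+1}-t_n \;=\; \tfrac{32}{81}\,d_n,
\]
while \SigLeft advances by $\tfrac{16}{81}\,d_n$ during the cycle. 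Summing the geometric series gives $\sum_n \tfrac{32}{81}d_n = 2$ and $\sum_n \tfrac{16}{81}d_n = 1$, so $t_n \to 2$ and the position of \SigLeft tends from $-1$ to $0$. Hence the infinite sequence of collision coordinates, including the intermediate bounces off \SigRight, converges to $(0,2)$, and Lemma~\ref{lemma:accu-dynamics} delivers the accumulation.

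The only delicate point is to confirm that this bouncing pattern is the only possible evolution. Since no meta-signals other than \SigLeft, \SigRight, \SigZig, \SigZag are ever produced by the two rules, and since the speed inequalities above dictate at each step which of \SigLeft, \SigRight is hit first, the recursion faithfully describes the whole diagram up to time $t=2$, so no additional case analysis is required.
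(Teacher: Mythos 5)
Your proposal is correct and follows essentially the same route as the paper: an explicit recursion over the bounce collisions, a geometric decay ratio, summation of the resulting series to locate the limit at $(0,2)$, and an appeal to Lemma~\ref{lemma:accu-dynamics}. The only difference is bookkeeping — you group two bounces into one cycle and track the gap $d_n$ (ratio $49/81$), whereas the paper tracks the signed position $x_n$ of each single collision (ratio $-7/9$) using the symmetry of \SigLeft and \SigRight about $x=0$ — and your numerical values all agree with the paper's.
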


\begin{proof}
  We consider the sequence $(C_n)_{n\in\N}$ of consecutive collisions on the diagram
  generated by \SM[4] from the configuration \Config[0]
  and we note $(x_n, t_n)$ the corresponding (spatial and temporal) coordinates.
  We also define $\delta_n$ as the duration between two consecutive collisions $C_n$ and $C_{n+1}$, \ie,
  $\delta_n = t_{n+1} - t_n\enspace$.
  
  We compute coordinates of each $C_n$ and show that the sequence of collisions happening during 
  the computation is an alternation of the two collision rules defined previously, \ie, for all $n \geq 0$ we have:
  \[C_n = 
    \begin{cases}
      \ \ColRule{\SigZig,\SigRight}{\SigZag,\SigRight} & \text{ if } n \text{ is odd} \textit{ (right-bounce)}\\[0.3em]
     \ \ColRule{\SigLeft,\SigZag}{\SigLeft,\SigZig} & \text{ if } n \text{ is even} \textit{ (left-bounce)}
    \end{cases}\enspace .
  \]
  We suppose by convention that $C_0$ of coordinate $(x_0, t_0)$ is so that $x_0 = -1$ (initial positions of \SigLeft and \SigZig) and $t_0 = 0$,
  and that $C_0$ is a right-bounce.
  This can be interpreted as a collision happening in the initial configuration and produces the same configuration as \Config[0].
  Note that positions of \SigLeft and \SigRight are always opposite since they have opposite speeds and they are initially
  disposed symetrically to $0$ (collision rules ensure that they keep their motions after each collision).

  Suppose that a configuration at a time $t_n$ is $\Config[t_n]=\{\ \SigAt{\SigLeft}{x_n},\ \SigAt{\SigZig}{x_n},\ \SigAt{\SigRight}{-x_n}\ \}$.
  Any configuration coming from a left-bounce collision $C_n$ of coordinates $(x_n, t_n)$ verifies this displaying of signals.
  In particular, this is the case for the initial configuration \Config[0] with $x_0= -1$.
  It is clear from the disposition of signals that the next collision $C_{n+1}$ to happen is a right-bounce \ie \ColRule{\SigZig,\ \SigRight}{\SigZag,\ \SigRight}
  because \SigZig (of speed $4$) is moving right and is located on the left of \SigRight (of speed $-\frac{1}{2}$) which is moving left.
  Signals \SigLeft and \SigZig cannot collide since \SigLeft is on the left of \SigZig, which moves to the right faster than \SigLeft.
  The collision between \SigZig and \SigRight is deduced from the respective dynamics, and its coordinates $(x,t$) satisfy:
  $\begin{cases}
     x = 4\cdot (t-t_n) + x_n & \text{(motion of \SigZig)}\\
     x = -\frac{1}{2}\cdot (t-t_n) - x_n & \text{(motion of \SigRight)}
  \end{cases}\enspace.$

  At the time $t_{n+1}$, the next collision happens and both signals \SigZig and \SigRight occupy the same 
  position after the duration $\delta_{n}=t_{n+1}-t_n$ which satisfies $4\cdot \delta_{n}+x_n=-\frac{1}{2}\cdot \delta_{n} - x_n$. 
  The delay $\delta_{n}$ and the position $x_{n+1}$ are given by:
  \begin{equation}
    \delta_{n} = -\frac{4}{9}\cdot x_n \quad\text{ and }\quad  x_{n+1}=-\frac{7}{9}\cdot x_n \enspace.\label{eq:space-right-bounce}
  \end{equation}
  Since incoming signals of the collision $C_{n+1}$ have been replaced by outcoming signals,
  the configuration at the time $t_{n+1}$ is given by:
  $\Config[t_{n+1}]=\{\ \SigAt{\SigLeft}{-x_{n+1}},\ \SigAt{\SigZag}{x_{n+1}},\ \SigAt{\SigRight}{x_{n+1}}\ \}$.

  For the symmetric case, when $\Config[t_n]=\{\ \SigAt{\SigLeft}{-x_{n}},\ \SigAt{\SigZig}{x_{n}},\ \SigAt{\SigRight}{x_{n}}\ \}$
  and $C_n$ is a right-bounce, the next collision $C_{n+1}$ is a left-bounce and we obtain by 
  the same way its position $x_{n+1}$ and the duration $\delta_n$ between $C_n$ and $C_{n+1}$:
  \begin{equation}
    \delta_{n} = \frac{4}{9}\cdot x_n \quad \text{and} \quad x_{n+1} = -\frac{7}{9}\cdot x_n \label{eq:left-bounce}
  \end{equation}

  So, starting from the initial configuration \Config[0] and applying successively the previous computations 
  for each configuration after each collision, we obtain an alternation of right-bounces and left-bounces (starting with a right-bounce).
  By \Eq{eq:space-right-bounce} and \Eq{eq:left-bounce}, 
  sequences $(x_n)_{n\in\N}$ and $(\delta_n)_{n\in\N}$ satisfy:\\
  \[
    \left\{
    \begin{array}{l@{}}
      x_0 = -1\ ; \ x_{n+1} = -\frac{7}{9}\cdot x_n \\[0.5em]
      \delta_n \ =\ \left(-1\right)^{n+1}\cdot\frac{4}{9}\cdot x_n
    \end{array}
    \right. \text{ which gives }\forall n\in\N,
    \left\{
    \begin{array}{l@{}}
      x_n = (-1)^{n+1}\cdot\left(\frac{7}{9}\right)^n \\[0.5em]
      \delta_n = \frac{4}{9}\cdot\left(\frac{7}{9}\right)^{n}
    \end{array}
    \right.\enspace.
  \]
  Finally, from the relation $t_{n+1} = t_n + \delta_n$ and $t_0 = 0 $, we obtain for all $n\in\N\,$:
  \[
    t_{n+1}\ = \ \sum \limits_{i=0}^{n} \delta_{i}
      \ =\  \sum \limits_{i=0}^{n} \frac{4}{9} \times \left(\frac{7}{9}\right)^{i}
      \ =\  \frac{4}{9} \times \sum \limits_{i=1}^{n} \left(\frac{7}{9}\right)^{i} \enspace.
  \]
  The sequence $(t_n)_{n\in\N}$ is infinite, strictly increasing and positive.
  It admits a limit which is given by the sum of the geometrical sequence 
  of ratio $\frac{7}{9}~<~1\,$:
  \[
    \lim\limits_{n \to \infty} t_n =\ \frac{4}{9} \times \lim\limits_{n \to \infty}\sum \limits_{i=0}^{n} \left(\frac{7}{9}\right)^{i}%
      \ = \ \frac{4}{9} \times \frac{1}{1-\frac{7}{9}}
      \ = \ 2 \enspace.
  \]
  The sequence $(x_n)_{n\in\N}$ of spatial positions also admits a limit:
  \[
    \lim\limits_{n \to \infty} x_n\ =\ \lim\limits_{n \to \infty} \left|\left(-1\right)^{n}\cdot\left(\frac{7}{9}\right)^{n}\right|
      \ =\ \lim\limits_{n \to \infty}\left(\frac{7}{9}\right)^{n} 
      \ =\ 0\enspace.
  \]
  So the machine \SM[4] runned from initial configuration $c_0$ produces a sequence of successive collisions at coordinates $(x_n, t_n)$ 
  satisfying $\lim\limits_{n\to\infty} (x_n,t_n) = (0,2)$.
  According to \Def{def:dynamics} and \Lem{lemma:accu-dynamics}, we have $\Config[\TLimit](0)=\Config[2](0)=\Accu$, \ie, the point $(0,2)$ is an accumulation.
\end{proof}

Example of the machine \SM[4] gives us the following:

\begin{corollary}\label{coro:accu4}
  Accumulations can be generated by $4$-speed signal machines.
\end{corollary}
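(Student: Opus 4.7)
The plan is to observe that Corollary~\ref{coro:accu4} is an immediate consequence of Lemma~\ref{lemma:ex-accu4}. First, I would check that the machine \SM[4] defined in Figure~\ref{fig:accu4-rules} is indeed a $4$-speed machine in the sense of Section~\ref{subsec:definitions}: its speed function takes values in $\{-4,-1/2,1/2,4\}$, so $\Card{\Image{\SpeedFun}} = 4$. Second, Lemma~\ref{lemma:ex-accu4} exhibits a concrete initial configuration $\Config[0]=\{\SigAt{\SigLeft}{-1},\SigAt{\SigZig}{-1},\SigAt{\SigRight}{1}\}$ from which \SM[4] produces an accumulation at $(0,2)$. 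Together these two observations produce a $4$-speed signal machine together with an initial configuration whose space-time diagram contains an accumulation, which is precisely what the corollary asserts.

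Since the existence of a single example suffices to prove that accumulations \emph{can} be generated by $4$-speed machines, the entire proof reduces to an application of Lemma~\ref{lemma:ex-accu4}. There is no technical obstacle at this stage; the hard work (tracking the alternation of left- and right-bounces, solving the two linear systems that determine each successive collision point, and summing the geometric series of delays $\delta_n = \frac{4}{9}(7/9)^n$ to get a finite limit) has already been carried out in the proof of the lemma. The corollary itself is therefore essentially a statement-level repackaging of the lemma, emphasizing that the relevant parameter is the number of distinct speeds rather than the number of meta-signals or rules.
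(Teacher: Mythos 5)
Your proposal is correct and matches the paper exactly: the corollary is stated immediately after Lemma~\ref{lemma:ex-accu4} as its direct consequence, with no further argument needed beyond noting that \SM[4] has four distinct speeds and the lemma supplies the witnessing configuration and accumulation point. Nothing is missing.
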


\begin{remark}
  As soon as a signal machine contains four meta-signals, all having distinct speeds so that two of them can bounce alternatively between the two other ones, 
  then this signal machine can generate accumulations in kind of the one described above (with proper initial positions for three of these four meta-signals).
  This suggests that the ability to accumulate is not so rare for signal machines having a sufficient number of speed values.
\end{remark}

%
%
%
\subsection{Case of 2 speeds}
\label{subsec:2speeds}

We now consider the case of $2$-speed machines, and prove that no matter what are these two speeds, no accumulation can occur in such machine.

\paragraph{Reduction to a normalized machine}
Let $\SM^{a,b} = (\SigSet, \SpeedFun, \ColSet)$ be a signal machine with meta-signals having only two  distinct speeds $a$, $b$ $\in \R$ with $a<b$.
We define a function $g : \R \to \R $ by $g(x) = \frac{x}{b-a} - \frac{a}{b-a}$.
We have $g(a) = 0$ and $g(b) = 1$ (in fact, $g$ is the function $f$ defined in \Subsec{subsec:properties}, with $c = 0$ and $d = 1$).

We normalize the machine $\SM^{a,b}$ into the machine $\SM^{0,1} = (\SigSet, \SpeedFun', \ColSet)$ where $\SpeedFun'= g\circ\SpeedFun$.
Speed values of $\SM^{0,1}$ are $0$ and $1$. 
As $g$ is an affine function with a strictly positive ratio (because $a<b$) and $\SM^{0,1} = \SM[g]^{a,b}$, we obtain by \Lem{lemma:affine-trans} that $\SM^{0,1}$ produces space-time diagrams equivalent to the ones generated by $\SM^{a,b}$.
In particular, if $\SM^{a,b}$ produces accumulations, then $\SM^{0,1}$ will also produce accumulations.

\paragraph{Accumulating is impossible with only 2 speeds}
We can now give the proof that no signal machine having only two distinct speeds can produce an accumulation.
By \Lem{lemma:affine-trans} and \Cor{lemma:accu-transfert}, it is enough to prove that no accumulation can occur in any support diagrams of the signal machine having only two meta-signals, one of speed $0$ and one of speed $1$.

Hence we can consider, without any loss of generality, a signal machine \SM[2] having 
only two meta-signals: \SigZero of speed $0$ and \SigOne of speed $1$. 
As \SM[2] contains only two meta-signals, there is only one collision rule to be 
defined in the machine and to have \SM[2] to be a support machine, we necessarily have:
$\ColRule{\SigOne,\SigZero}{\SigZero,\SigOne}\enspace.$

\begin{lemma}\label{lemma:max-accu2}
  Let \Config[0] an initial configuration for \SM[2]. 
  Let $i$ be the number of signals \SigOne in \Config[0] and $j$ the number of signals \SigZero.
  Then the diagram generated by \SM[2] starting from \Config[0] contains at most $i\times j$ collisions. 
\end{lemma}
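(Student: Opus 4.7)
The plan is to track each meta-signal as a ``particle'' whose entire space-time trajectory is a straight ray, and then to count collisions as intersections of such rays. First I would use the hypothesis that \SM[2] is a support $2$-speed machine to note that the only possible collision rule is $\{\SigOne,\SigZero\}\to\{\SigZero,\SigOne\}$. Hence every collision in the diagram has exactly one incoming \SigOne and one incoming \SigZero and produces exactly one of each, so the multiset of meta-signals present at any time is invariant: at every moment of the evolution there are $i$ signals \SigOne and $j$ signals \SigZero.

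Next I would construct a canonical particle trajectory for each of the $i+j$ initial signals. At every collision at $(x,t)$, the unique incoming \SigOne and the unique outgoing \SigOne share the same speed and the same space-time point, so they are identified as two segments of a single particle. The same identification is done for \SigZero. Applying this matching iteratively from $t=0$, each initial \SigOne gives rise to a straight ray of slope $1$ (the concatenation of slope-$1$ segments sharing endpoints) and each initial \SigZero gives rise to a straight vertical ray in $\SpaceTime$.

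Counting is then straightforward. Every collision of the diagram lies on exactly one \SigOne trajectory and one \SigZero trajectory, again because each side of the unique rule contains one of each. Conversely, a slope-$1$ ray and a vertical ray intersect in $\SpaceTime$ in at most one point. Summing over the $i\cdot j$ particle pairs (\SigOne particle, \SigZero particle) gives the desired upper bound of $i\cdot j$ on the total number of collisions.

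The main delicate point is verifying that the particle identification across collisions is well-defined, in particular at initial positions that may already carry both a \SigOne and a \SigZero, and also at the very first collision along a given trajectory. Since any collision set (incoming or outgoing) contains at most one signal per speed class, the matching between incoming and outgoing signals of the same speed is forced and hence canonical. As a by-product, the finite bound $i\cdot j$ immediately rules out any dynamic accumulation in the diagrams of \SM[2], since an accumulation would require infinitely many collisions in a bounded region.
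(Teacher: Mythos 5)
Your proof is correct. It rests on the same two observations as the paper's --- the unique rule \ColRule{\SigOne,\SigZero}{\SigZero,\SigOne} regenerates both signals, so each initial signal extends to a full straight ray in \SpaceTime, and a slope-$1$ ray meets a vertical ray in at most one point --- but you organize them differently. The paper first \emph{rearranges} the initial configuration into the worst case (every \SigOne to the left of every \SigZero) and then counts exactly $i\times j$ collisions there; the claim that this rearrangement maximizes the count is asserted rather than fully argued. Your version skips that reduction entirely: you bound the number of collisions in an \emph{arbitrary} configuration by charging each collision to the unique pair (\SigOne-ray, \SigZero-ray) through it and noting each pair intersects at most once. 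This is a genuinely cleaner route --- it proves the upper bound directly, handles coincident initial positions without a special case, and needs no appeal to a maximizing arrangement --- at the cost of not exhibiting that the bound $i\times j$ is actually attained (which the lemma does not require, but the paper's subsequent remark does use). Your closing observation that the finite bound already precludes accumulations is exactly how the paper uses the lemma in Proposition~\ref{prop:no-accu2}.
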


\begin{proof}
  Any finite initial configuration (see \eg \Fig{fig:2speeds_random}) can be rearranged
  in an initial configuration that maximize the number of possible collisions.
  Indeed, the maximum number of collisions is obtained if each signal \SigOne collides
  with each signal \SigZero. This is possible only if each \SigOne is initially disposed at the left of each signal \SigZero. 
  So for computing the exact upper-bound of the number of collisions, we consider an initial configuration similar to the one displayed \Fig{fig:2speeds_max-collision}, \ie, the position of each signal \SigOne is strictly lower than the positions of any signal \SigZero.


  \begin{figure}[hbt]
    \centering
    \subfigure[Diagram from a random initial configuration.\label{fig:2speeds_random}]{\quad%
      \includegraphics[height=4cm]{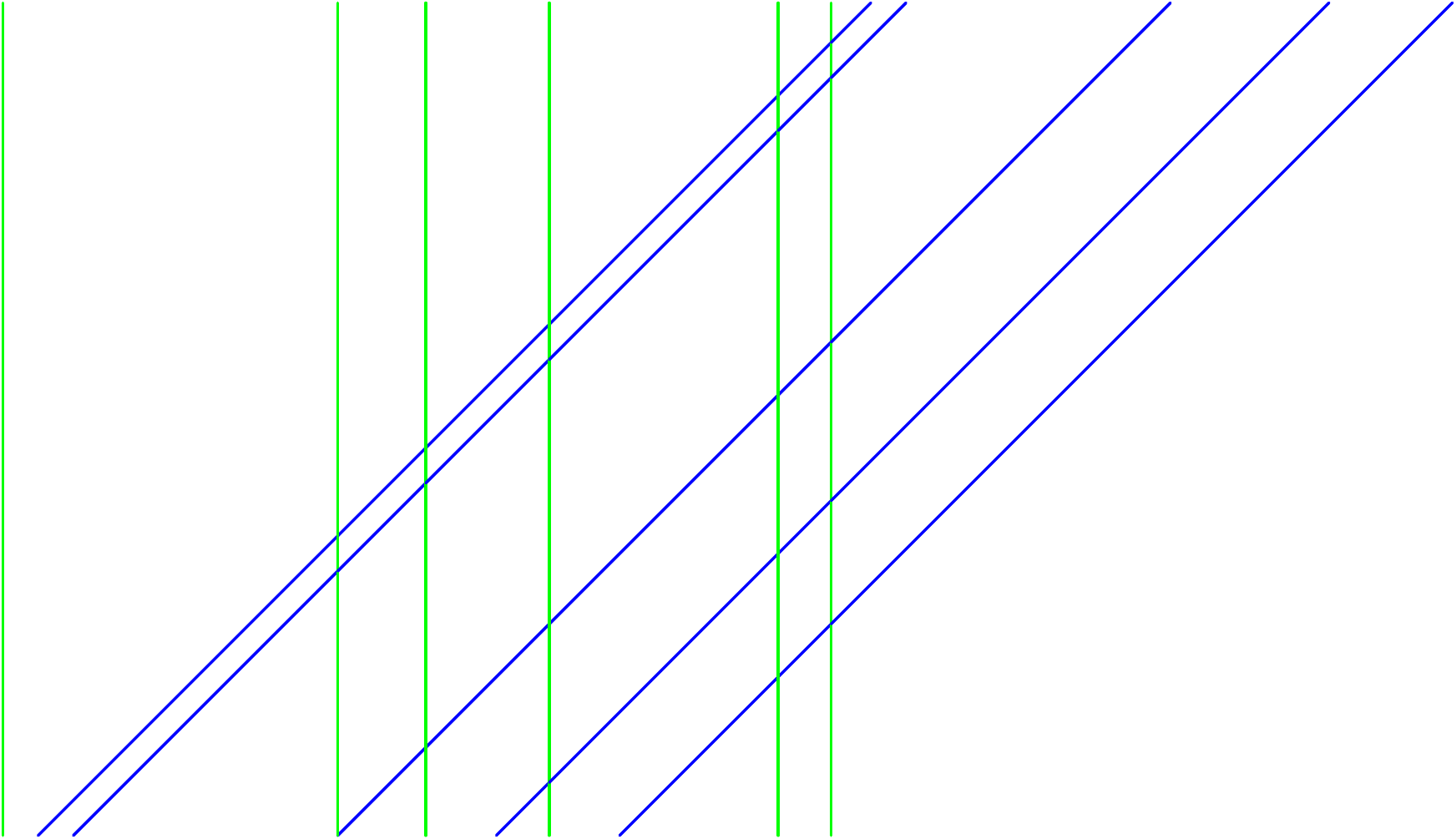}\quad}\qquad%
    \subfigure[Bounding the number of collisions.\label{fig:2speeds_max-collision}]{\quad%
      \includegraphics[height=4cm]{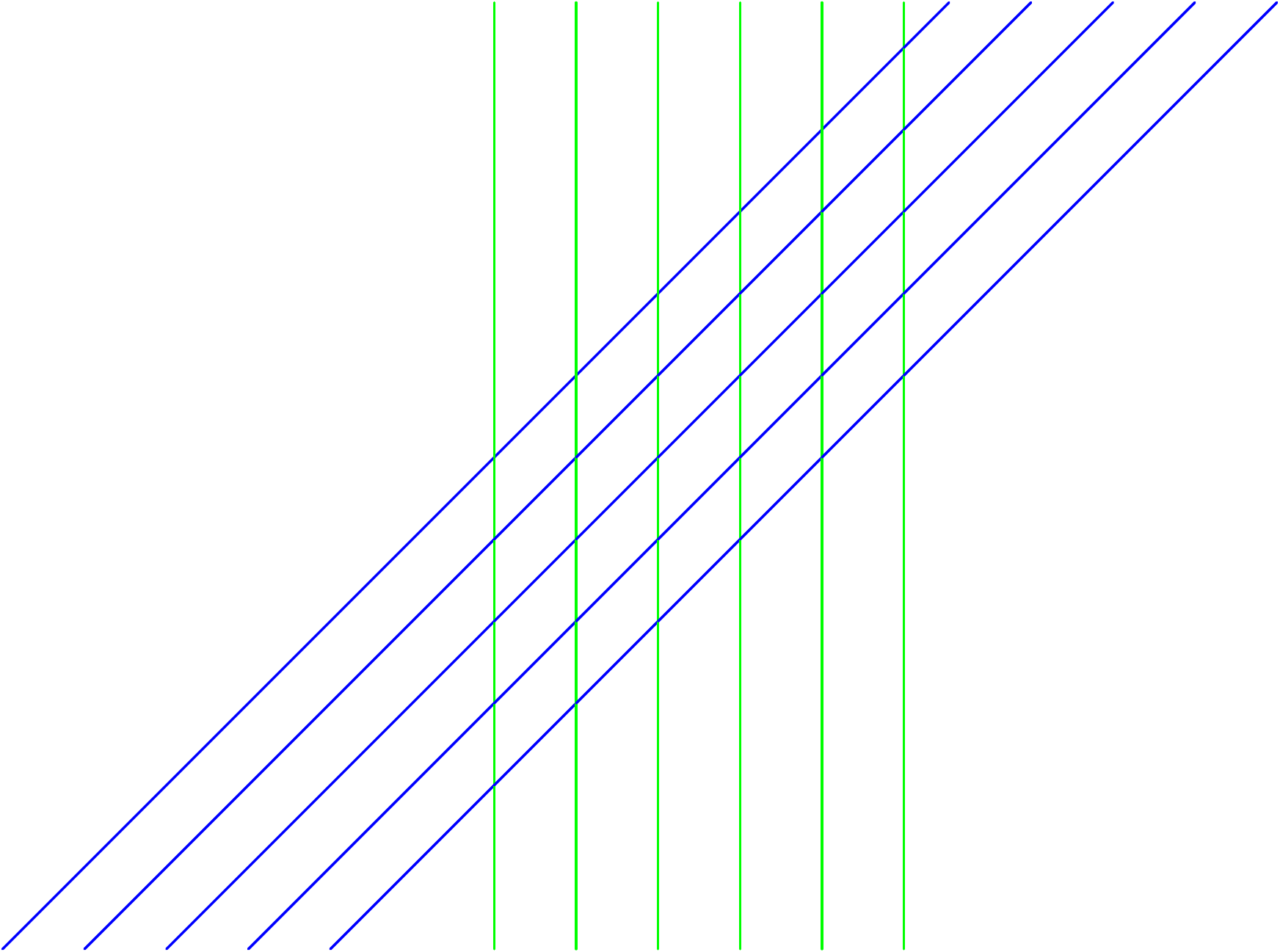}\quad}%
    \caption{Diagrams of the machine \SM[2].}\label{fig:2speeds_no-accu}
  \end{figure}

  Since the unique collision rule outputs all incoming signals, each signal \SigOne is not annihilated after colliding the first (\ie, the left-most) signal \SigZero 
  encountered but \SigOne continues and collides with all the signals \SigZero, that is the $j$ signals \SigZero present in the initial configuration. 
  After colliding the last (\ie, the right-most) signal \SigZero, \SigOne propagates to the right \emph{ad infinitum} and does not collide again.

  Thus each signal \SigOne generates $j$ collisions.
  As the number of signals \SigOne is $i$ and as each one collides with all signals \SigZero, the total number of collisions is $i\times j\enspace$.
\end{proof}

\begin{remark}
  Given $n$ signals in the initial configuration, the maximum number of collisions $i \times j$
  is obtained by taking $i=\lfloor\frac{n}{2}\rfloor$ signals \SigOne and $j=n-i$ signals \SigZero
  in the initial configuration so that each signal \SigOne is initially located at the left of any signal \SigZero.
\end{remark}

\begin{proposition}\label{prop:no-accu2}
  Accumulations cannot be generated by a signal machine (rational-like or not) having only two distinct speeds and starting from a finite configuration.
\end{proposition}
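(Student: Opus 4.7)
The plan is to chain together the reductions established earlier in this section and then invoke the combinatorial bound of \Lem{lemma:max-accu2}. Start with an arbitrary $2$-speed machine $\SM^{a,b}$ with $a<b$ and a finite initial configuration \Config[0]. By \Lem{lemma:affine-trans} applied to the affine function $g(x)=\tfrac{x-a}{b-a}$, the diagram produced by $\SM^{a,b}$ is equivalent to the one produced by the normalized machine $\SM^{0,1}$; equivalent diagrams have homeomorphic space-time structure, and in particular the existence of an accumulation is preserved. So it suffices to forbid accumulations for machines whose two speeds are $0$ and $1$.

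Next, pass to the support machine via \Cor{lemma:accu-transfert}: if the support diagram $\Supp{\STD}$ of $\STD$ contains no accumulation, then neither does $\STD$. The support of a $2$-speed machine with speeds $0$ and $1$ is precisely the machine \SM[2] with meta-signals \SigZero and \SigOne and single collision rule $\ColRule{\SigOne,\SigZero}{\SigZero,\SigOne}$ introduced just above. Hence it suffices to prove that no diagram of \SM[2] started from a finite initial configuration has an accumulation.

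Now apply \Lem{lemma:max-accu2}: if the (finite) initial configuration contains $i$ signals \SigOne and $j$ signals \SigZero, then the total number of collisions in the entire space-time diagram of \SM[2] is at most $i\cdot j$, which is a finite integer. Finally, \Lem{lemma:accu-dynamics} characterizes accumulations as points that are limits of an infinite sequence of collisions ordered by date. Since a diagram containing only finitely many collisions admits no such infinite sequence, no accumulation point can exist. Chaining these three reductions back, $\SM^{a,b}$ on \Config[0] has no accumulation either.

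The argument is essentially routine once the preparatory lemmas are in place; the only point requiring a small amount of care is the passage to the support machine, because \SM[2] has a single collision rule that preserves both incoming signals, so one must verify (as is done implicitly in the proof of \Lem{lemma:max-accu2}) that signals are not annihilated and that the counting argument of $i\cdot j$ crossings really exhausts all collisions. Once this is granted, finiteness of the collision set together with \Lem{lemma:accu-dynamics} concludes the proof.
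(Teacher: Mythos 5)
Your proof is correct and follows essentially the same route as the paper's: normalization of the two speeds to $0$ and $1$ via \Lem{lemma:affine-trans}, passage to the support machine \SM[2] via \Cor{lemma:accu-transfert}, and the finite bound of $i\cdot j$ collisions from \Lem{lemma:max-accu2}. The only cosmetic difference is that you invoke \Lem{lemma:accu-dynamics} explicitly for the last step, where the paper simply notes that infinitely many collisions are a necessary condition for an accumulation.
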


\begin{proof}
  As explained before, any $2$-speed machine \SM can be normalized into a $2$-speed machine $\SM^{0,1}$ (with speeds $0$ and $1$). 
  This machine has for support machine the machine \SM[2].
  By \Lem{lemma:max-accu2}, \SM[2] runned from a finite initial configuration can produce at most a finite number of collisions, 
  and as a necessary (but no sufficient) condition for having accumulation is to contain an infinite number of signals/collisions in the diagram, it follows that \SM[2] cannot generate accumulation.
  \Lemma{lemma:accu-transfert} and \Cor{lemma:affine-trans} imply that neither $\SM^{0,1}$ nor \SM can produce accumulation.
  We conclude that no accumulation can be generated by any $2$-speed machine.
\end{proof}

It has to be underlined that in the case of two signals, the rationality of speeds or initial positions doesn't play any role 
(the normalization of speeds $a$ and $b$ to speeds $0$ and $1$ can be done for any numbers $a$ and $b$, rationals or not, 
and initial positions do not need to be rational in the proof of \Lem{lemma:max-accu2}).
But the hypothesis of finitude of the initial configuration is necessary.

Obviously, it is always possible to create an accumulation with only two speeds if we consider {\em infinite initial configurations}: 
it is enough to place stationary signals such that a (static) accumulation is present in the initial configuration 
\eg an accumulation in $0$ by placing a stationary signal at each position $-1/n$ (this infinite initial configuration remains rational).
Then a right signal \SigOne (having speed $1$) at position $-2$ will collide all stationary signals before time $2$, 
producing a (dynamic) accumulation at coordinates $(0, 2)$, as illustrated by \Fig{fig:accu2-infinite_configuration}
(wihere the rule used is \ColRule{\SigOne,\SigZero}{\SigOne}).

\begin{figure}[hbt]
  \centering
  \includegraphics[width=.3\textwidth]{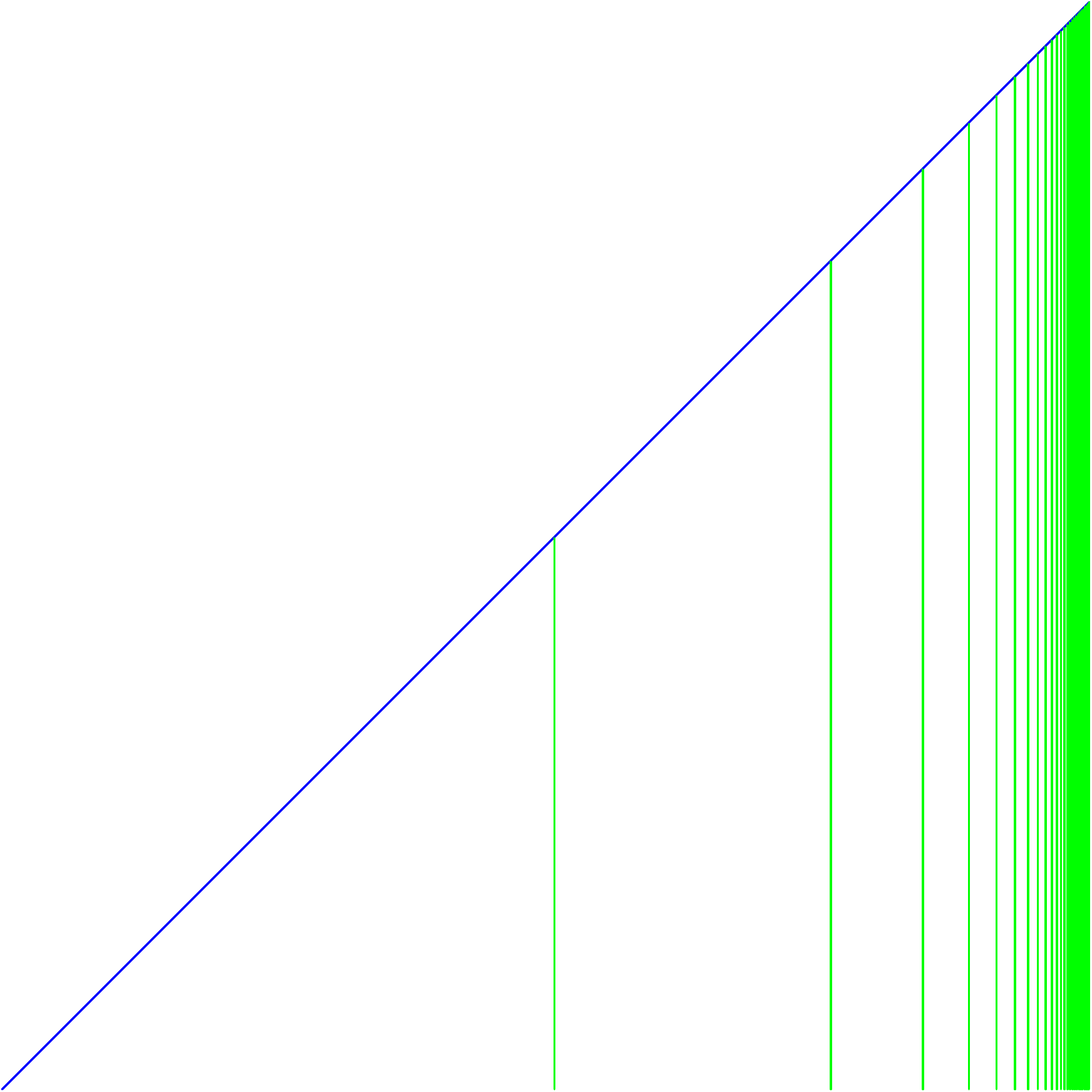}%
  \caption{Accumulation with $2$ speeds and an infinite initial configuration.}\label{fig:accu2-infinite_configuration}
\end{figure}

%
%
%
\section{Case of 3 speeds}
\label{sec:3speeds}
We showed in the previous section that building an accumulation is easy with $4$ speeds, whereas it is impossible with only $2$ distinct speeds. 
In this section, we study the border case ---accumulations with $3$ speeds--- and we show that two sub-cases have to be distinguished:
one treating rational-like signal machines and the other one dealing with fully irrational machines.
The main result is that a $3$-speed machine \SM {\em could} produce accumulations if and only if \SM involves an irrational ratio
(there exists an irrational ratio either between two speed values or between two initial positions).
First, we exhibate a simple signal machine and an irrational configuration from which the machine produces an accumulation:
this is done by implementing a geometrical version of Euclid's algorithm, and by executing an infinite run of this algorithm.
We also provide, with a slight modification of this machine, an example of accumulation from a rational initial configuration
and with $3$ speeds, one of them having an irrational value.
In a second part, we prove that any diagram of a $3$-speed rational machine is included in some {\em regular} diagrams called {\em \StripMultiName[es]};
none of these \StripMultiName[es] contain accumulation, and so neither does any diagram of a $3$-speed rational machine.
\subsection{Case of irrational machines}
\label{sec:3speeds_accu}
%
%
Euclid's algorithm is based on the computation of the remainder of two numbers, and to implement it on signal machines,
we use the geometrical computation of the modulo of two reals values, as mentionned in \Subsec{subsec:examples}:
starting with a configuration that encodes some values $a$ and $b$, 
we can obtain after some number of collisions the value of $a\bmod b$, also encoded between two vertical signals.
We describe below how we can compute the {\em greatest common divisor} of two values by iterating the process of a modulo computation,
and we use the properties of Euclid's algorithm to build an accumulation.

\paragraph{A $3$-speed machine implementing Euclid's algorithm}

To provide an accumulation with three speeds, we define a simple $3$-speed machine, \SMeuclid.
This machine computes the greates common divisor of two values by implementing and following the steps of Euclid's algorithm.

Euclid's algorithm, starting from two real numbers $a$ and $b$ ($b\leq a$),
defines the sequences $(a_n)_{n\in\N}$ and $(b_n)_{n\in\N}$ by the following recursion:
\begin{center}
  \begin{tabular}{c c}
    $\begin{cases}
      a_0 = a \\
      b_0 = b 
    \end{cases}$ 
    \text{ \ et \ } & 
    $\begin{cases}
      a_{n+1} = b_n \\
      b_{n+1} = a_n \bmod b_n
    \end{cases}\enspace.$ 
  \end{tabular}
\end{center}

This recursion also provides the sequence $(r_n)_{n\in\N}$ of remainders of the successive euclidean divisions, given by $r_n = a_n \bmod b_n$,
and the sequence $(q_n)_{n\in\N}$ of the quotients, defined by $q_n=a_b/b_n$.
If the sequence $(a_n)_{n\in\N}$ becomes equal to zero from one rank, then the greatest $a_n$ so that $a_n \neq 0$ is the {\em greatest common divisor} ($\gcd$) of $a$ and $b$.
Otherwise, the $\gcd$ of $a$ and $b$ is not defined.

This algorithm can be geometrically implemented by a $3$-speed machine composed by seven meta-signals (three of speed $0$, two of speed $1$ and two of speed $-1$) and eight rules.
We give in \Fig{fig:SMeuclid-rules} the definition of such a machine \SMeuclid, and a run example is displayed in \Fig{fig:euclid-run_example}, 
which corresponds to the computation of $\gcd(8,3)$ 
(coded by the distance between the two remaining signals \SigEuclidWallZero at the very top of the diagram).

\newcounter{rulesnumbersgcd}
\setcounter{rulesnumbersgcd}{0}
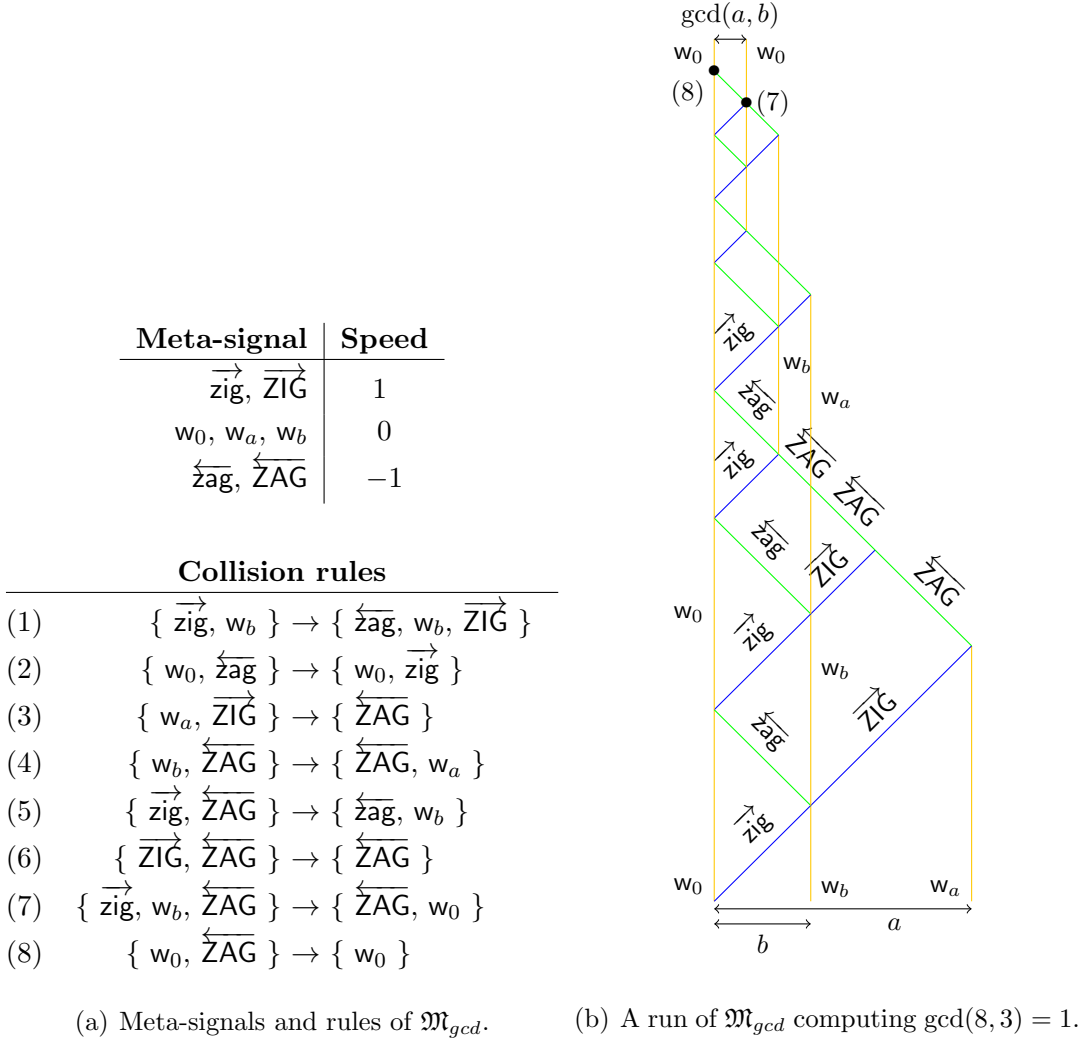
\begin{figure}[hbt]%
  \centering\small
  \subfigcapskip 1em
  \subfigure[Meta-signals and rules of \SMeuclid.\label{fig:SMeuclid-rules}]{%
    \begin{tabular}[b]{c}
      \begin{tabular}[b]{r|c}
	\bf Meta-signal & \bf Speed \\
	\hline
	\SigEuclidZigSmall, \SigEuclidZigBig & $1$ \rule{0pt}{5mm}\\[.2em]
	\SigEuclidWallZero, \SigEuclidWalla, \SigEuclidWallb & $0$\\[.2em]
	\SigEuclidZagSmall, \SigEuclidZagBig & $-1$\\[.2em]
      \end{tabular}\\[2em]%
      \begin{tabular}[b]{@{\scriptsize}>{(\refstepcounter{rulesnumbersgcd}\therulesnumbersgcd)}l r@{ $\rightarrow$ }l}
	\multicolumn{3}{c}{\bf Collision rules}\\\hline
	\label{rules:SMeuclid_1} & \{ \SigEuclidZigSmall, \SigEuclidWallb \} & \{ \SigEuclidZagSmall, \SigEuclidWallb, \SigEuclidZigBig \} \rule{0pt}{5mm}\\[.2em]
	\label{rules:SMeuclid_2} & \{ \SigEuclidWallZero , \SigEuclidZagSmall \} & \{ \SigEuclidWallZero , \SigEuclidZigSmall \} \\[.2em]
	\label{rules:SMeuclid_3} & \{ \SigEuclidWalla , \SigEuclidZigBig \} & \{ \SigEuclidZagBig \} \\[.2em]
	\label{rules:SMeuclid_4} & \{ \SigEuclidWallb , \SigEuclidZagBig \} & \{ \SigEuclidZagBig, \SigEuclidWalla \}\\[.2em]
	\label{rules:SMeuclid_5} & \{ \SigEuclidZigSmall , \SigEuclidZagBig \} & \{ \SigEuclidZagSmall, \SigEuclidWallb \}\\[.2em]
	\label{rules:SMeuclid_6} & \{ \SigEuclidZigBig, \SigEuclidZagBig \} & \{ \SigEuclidZagBig \}  \\[.2em]
	\label{rules:SMeuclid_7} & \{ \SigEuclidZigSmall, \SigEuclidWallb, \SigEuclidZagBig \} & \{ \SigEuclidZagBig, \SigEuclidWallZero\} \\[.2em]
	\label{rules:SMeuclid_8} & \{ \SigEuclidWallZero, \SigEuclidZagBig \} & \{ \SigEuclidWallZero\}\\[.2em]%
      \end{tabular}%
    \end{tabular}}%
  \subfigure[A run of \SMeuclid computing $\gcd(8,3) = 1$.\label{fig:euclid-run_example}]{\quad\qquad%
    \begin{tikzpicture}[x=1.1em,y=1.1em,font=\footnotesize]
\definecolor{couleurwall0}{RGB}{255,200,0}
\tikzstyle{drawwall0}=[draw=couleurwall0]
\tikzstyle{nodewall0}=[pos=0.5, left]
\definecolor{couleurwallx}{RGB}{255,200,0}
\tikzstyle{drawwallx}=[draw=couleurwallx]
\tikzstyle{nodewallx}=[pos=0.7,right=-0.2em]
\definecolor{couleurwally}{RGB}{255,200,0}
\tikzstyle{drawwally}=[draw=couleurwally]
\tikzstyle{nodewally}=[pos=0.45, right]
\definecolor{couleurzig}{RGB}{0,0,255}
\tikzstyle{drawzig}=[draw=couleurzig]
\tikzstyle{nodezig}=[pos=0.6, above, sloped]
\definecolor{couleurZIG}{RGB}{0,0,255}
\tikzstyle{drawZIG}=[draw=couleurZIG]
\tikzstyle{nodeZIG}=[pos=0.5, above, sloped]
\definecolor{couleurzag}{RGB}{0,255,0}
\tikzstyle{drawzag}=[draw=couleurzag]
\tikzstyle{nodezag}=[pos=0.6, above, sloped]
\definecolor{couleurZAG}{RGB}{0,255,0}
\tikzstyle{drawZAG}=[draw=couleurZAG]
\tikzstyle{nodeZAG}=[pos=0.5, above, sloped]
\draw[drawwall0] (0.000000,0.000000) -- node[left,pos=0.08]{\SigEuclidWallZero} (0.000000,6.000000);s
\draw[drawwallx] (3.000000,0.000000) -- node[right,pos=0.13]{\SigEuclidWallb} (3.000000,3.000000);
\draw[drawwally] (8.000000,0.000000) -- node[left,pos=0.05]{\SigEuclidWalla} (8.000000,8.000000);
\draw[drawzig] (0.000000,0.000000) -- node[nodezig]{ \SigEuclidZigSmall} (3.000000,3.000000);
\draw[drawzag] (3.000000,3.000000) -- node[nodezag]{ \SigEuclidZagSmall} (0.000000,6.000000);
\draw[drawZIG] (3.000000,3.000000) -- node[nodeZIG]{ \SigEuclidZigBig} (8.000000,8.000000);
\draw[drawwallx] (3.000000,3.000000) -- node[right,pos=0.7]{ \SigEuclidWallb} (3.000000,9.000000);
\draw[drawzig] (0.000000,6.000000) -- node[nodezig]{ \SigEuclidZigSmall} (3.000000,9.000000);
\draw[drawZAG] (8.000000,8.000000) -- node[nodeZAG]{ \SigEuclidZagBig} (5.000000,11.000000);
\draw[drawZIG] (3.000000,9.000000) -- node[nodeZIG]{ \SigEuclidZigBig} (5.000000,11.000000);
\draw[drawzag] (3.000000,9.000000) -- node[nodezag]{ \SigEuclidZagSmall} (0.000000,12.000000);
\draw[drawwall0] (0.000000,6.000000) -- node[nodewall0]{ \SigEuclidWallZero} (0.000000,12.000000);
\draw[drawZAG] (5.000000,11.000000) -- node[nodeZAG]{ \SigEuclidZagBig} (3.000000,13.000000);
\draw[drawwallx] (3.000000,9.000000) -- (3.000000,13.000000);
\draw[drawZAG] (3.000000,13.000000) -- node[nodeZAG]{ \SigEuclidZagBig} (2.000000,14.000000);
\draw[drawzig] (0.000000,12.000000) -- node[nodezig]{ \SigEuclidZigSmall} (2.000000,14.000000);
\draw[drawzag] (2.000000,14.000000) -- node[nodezag]{ \SigEuclidZagSmall} (0.000000,16.000000);
\draw[drawwall0] (0.000000,12.000000) --  (0.000000,16.000000);
\draw[drawwallx] (2.000000,14.000000) -- node[nodewallx]{ \SigEuclidWallb} (2.000000,18.000000);
\draw[drawzig] (0.000000,16.000000) -- node[nodezig]{ \SigEuclidZigSmall} (2.000000,18.000000);
\draw[drawwally] (3.000000,13.000000) -- node[nodewally]{ \SigEuclidWalla} (3.000000,19.000000);
\draw[drawZIG] (2.000000,18.000000) --  (3.000000,19.000000);
\draw[drawZAG] (3.000000,19.000000) -- (2.000000,20.000000);
\draw[drawwallx] (2.000000,18.000000) -- (2.000000,20.000000);
\draw[drawzag] (2.000000,18.000000) -- (0.000000,20.000000);
\draw[drawwall0] (0.000000,16.000000) -- (0.000000,20.000000);
\draw[drawZAG] (2.000000,20.000000) -- (1.000000,21.000000);
\draw[drawzig] (0.000000,20.000000) -- (1.000000,21.000000);
\draw[drawzag] (1.000000,21.000000) -- (0.000000,22.000000);
\draw[drawwall0] (0.000000,20.000000) -- (0.000000,22.000000);
\draw[drawwallx] (1.000000,21.000000) -- (1.000000,23.000000);
\draw[drawzig] (0.000000,22.000000) -- (1.000000,23.000000);
\draw[drawwally] (2.000000,20.000000) -- (2.000000,24.000000);
\draw[drawZIG] (1.000000,23.000000) -- (2.000000,24.000000);
\draw[drawzag] (1.000000,23.000000) -- (0.000000,24.000000);
\draw[drawwall0] (0.000000,22.000000) -- (0.000000,24.000000);
\draw[drawZAG] (2.000000,24.000000) -- (1.000000,25.000000);
\draw[drawwallx] (1.000000,23.000000) -- (1.000000,25.000000);
\draw[drawzig] (0.000000,24.000000) -- (1.000000,25.000000);
\draw[drawZAG] (1.000000,25.000000) -- (0.000000,26.000000);
\draw[drawwall0] (0.000000,24.000000) -- (0.000000,26.000000);
\draw[drawwall0] (0.000000,26.000000) -- node[left,pos=.5]{ \SigEuclidWallZero} (0.000000,27);
\draw[drawwall0] (1.000000,25.000000) -- node[right,pos=.75]{ \SigEuclidWallZero} (1.000000,27);
\draw[<->] (0,-\Sep) --node[below,pos=.7]{$a$} (8,-\Sep) ;
\draw[<->,yshift=-.2cm] (0,-\Sep) --node[below]{$b$} (3,-\Sep) ;
\draw[<->] (0,27) --node[above]{$\gcd(a,b)$} (1,27) ;
\draw (1.,25) node {$\bullet$}; \draw (1.,25) node[right,inner sep=.3em]{\footnotesize $(\ref{rules:SMeuclid_7})$} ;
\draw (0,26) node {$\bullet$}; \draw (0.,26) node[below left,inner sep=.3em]{\footnotesize $(\ref{rules:SMeuclid_8})$} ;
\end{tikzpicture}\quad\qquad}%
  \caption{Meta-signals, rules and a run of the machine \SMeuclid.}\label{fig:accu3-machine}
\end{figure}

As done in \Subsec{subsec:examples}, the stationary meta-signals (\SigEuclidWallZero, \SigEuclidWalla and \SigEuclidWallb) are used to encode two real numbers: 
the real value $a$ (resp. $b$) is the distance between signals \SigEuclidWallZero and \SigEuclidWalla (resp. \SigEuclidWallb).
In our geometrical version, the step $a_{n+1} = b_n$ is implemented by the rule (\ref{rules:SMeuclid_4}) of \Fig{fig:SMeuclid-rules}
and the step $b_{n+1} = a_n \bmod b_n$ by the rule (\ref{rules:SMeuclid_5}).
Rules (\ref{rules:SMeuclid_7}) and (\ref{rules:SMeuclid_8}) correspond to the two last collisions when the process halts, as shown in the top of \Fig{fig:euclid-run_example}.

We denote by $\Config^{a,b}$ ($a > b$) the configuration using only four signals (including one non-stationary signal) 
and respecting the previous encoding of the values between the stationary signals:
$\Config^{a,b} = \{\ \SigAt{\SigEuclidWallZero}{0},\ \SigAt{\SigEuclidZigSmall}{0},\ \SigAt{\SigEuclidWallb}{b},\ \SigAt{\SigEuclidWalla}{a}\ \}$.
The distance between \SigEuclidWallZero and \SigEuclidWalla (resp. \SigEuclidWallb) is indeed $a$ (resp. $b$).

\begin{remark}
  Starting from a configuration having two stationary signals and one moving (say to the right) from the first stationary signal to the second
  the time of a back-and-forth depends on the distance $d$ between the stationary signals and the speeds of the signals making the bounce. 
  This is the cases here with a configuration $\Config^{a,b}$, where \SigEuclidZigSmall and \SigEuclidZagSmall will make a back-and-forth between \SigEuclidWallZero and \SigEuclidWallb 
  (resp. \SigEuclidZigBig and \SigEuclidZagBig making a back-and-forth between \SigEuclidWallZero and \SigEuclidWalla).
  This time is equal to the sum of the time for the right and left signal to reach the opposite wall and is given by $(1+1)\cdot d= 2d$ (remember the speeds are $1$ and $-1$).
  Note that in the general case of non-null speeds $\nu_1$ and $-\nu_2$, the time of a back-and-forth is given by 
  $\frac{1}{\nu_1}d+\frac{1}{\nu_2}d=\left(\frac{1}{\nu_1}+ \frac{1}{\nu_2}\right)\cdot d=\uptau\cdot d$,
  where $\uptau=\frac{1}{\nu_1}+ \frac{1}{\nu_2}$ is the time a of unitary back-and-forth.
\end{remark}

Let us describe briefly the evolution of \SMeuclid on such a configuration. 
Consider first the case $a\bmod b \neq 0$.
After signals \SigEuclidZigBig (firstly \SigEuclidZigSmall) and \SigEuclidZagBig (firstly \SigEuclidZagSmall) 
have completed the back-and-forth between \SigEuclidWallZero and \SigEuclidWalla, 
that is after the duration $\uptau a$ according to the previous remark, 
the configuration has the same form than the initial configuration $\Config^{a,b}$, 
but now with the two walls \SigEuclidWalla and \SigEuclidWallb that have been moved to new positions. 
Indeed, with respect to the rules, the initial \SigEuclidWalla has now disappeared, the initial \SigEuclidWallb has be turned into
\SigEuclidWalla by \SigEuclidZagBig and the collision between \SigEuclidZigSmall 
(which was bouncing between signals \SigEuclidWallZero and \SigEuclidWallb, being turned alternatively into \SigEuclidZigSmall and \SigEuclidZagSmall)
and \SigEuclidZagBig has created a new \SigEuclidWallb, the first wall \SigEuclidWallZero remaining unchanged.
This process is displayed in \Fig{fig:accu3-euclid}.

If we call $a'$ (resp. $b'$) the distance between \SigEuclidWallZero and the new \SigEuclidWalla 
(resp. \SigEuclidWallb) as illustrated by \Fig{fig:accu3-euclid}, the new configuration at time 
$2 a$ is $\Config^{a',b'}=\{\SigEuclidWallZero@0,\ \SigEuclidZigSmall@0,\ \SigEuclidWallb@b',\ \SigEuclidWalla@a'\}$.
We have: $2 a = k \cdot 2 b + 2 r$,
where $k\in\N$ and $0\leq 2 r < 2 b$ (by definition of $2 r$ which is defined from the wall appearing between walls encoding $b$).
So $2 r$ is the remainder in the Euclidean division of $2 a$ by $2 b$:
$2 r = 2 a\bmod 2 b$, that is $r=a \bmod b$.
We have $a'=b$ and $b'=r=a\bmod b\neq 0$.
Finally, from the configuration $\Config^{a,b}$, we obtain after the duration $2 a$ the configuration
$\Config^{a',b'}=\Config^{b,a\bmod b}$.

For the case $r=0$ (\ie when $b$ divides $a$), 
the collision involving \SigEuclidZagBig and \SigEuclidWallb also involves \SigEuclidZigSmall and the rule (\ref{rules:SMeuclid_7}) is applied. 
The next collision necessarily happens between \SigEuclidWallZero and \SigEuclidZagBig, and after the application of the  rule (\ref{rules:SMeuclid_8}) 
only two signals \SigEuclidWallZero, spaced by the distance $b$ remains on the space.

By iterating the process and starting with $a=a_0$ and $b=b_0$, 
we can define the sequences $(a_n)_{n\in\N}$ and $(b_n)_{n\in\N}$ by identifying at each step of the process 
the distances $a,a',b$ and $b'$ respectively with the values $a_n, a_{n+1}, b_n$ and $b_{n+1}$.
The sequences thus obtained correspond to the ones defined by the double recursion of Euclid's algorithm with
initial conditions $a_0 = a$ and $b_0=b$: we have $a_{n+1} = b_n$ and $b_{n+1} = a_n \bmod b_n$.
If for some $n_0$, $a_{n_0}$ is a multiple of $b_{n_0}$, it means that \SigEuclidZagBig and \SigEuclidWallb will also collide with
\SigEuclidZigSmall into a triple collision. 
As mentionned above, rules (\ref{rules:SMeuclid_7}) and (\ref{rules:SMeuclid_8}) bring the process to a halt, leaving two signals \SigEuclidWallZero:
the distance between them is the last non-null remainder and so, it is the $\gcd$ of $a_0$ and $b_0$.

\begin{figure}[hbt]
  \Height 7.5cm
  \Width 0.5\Height
  \Sep 1mm 
  \ArrowSpacing 9mm 
  \centering
  \subfigure[A step of the geometrical Euclid's algorithm.\label{fig:accu3-euclid}]{\qquad\quad%
    \begin{tikzpicture}[x=\Width,y=\Height]
      \draw (0,0) node [above right,inner sep=0]{\includegraphics[height=\Height]{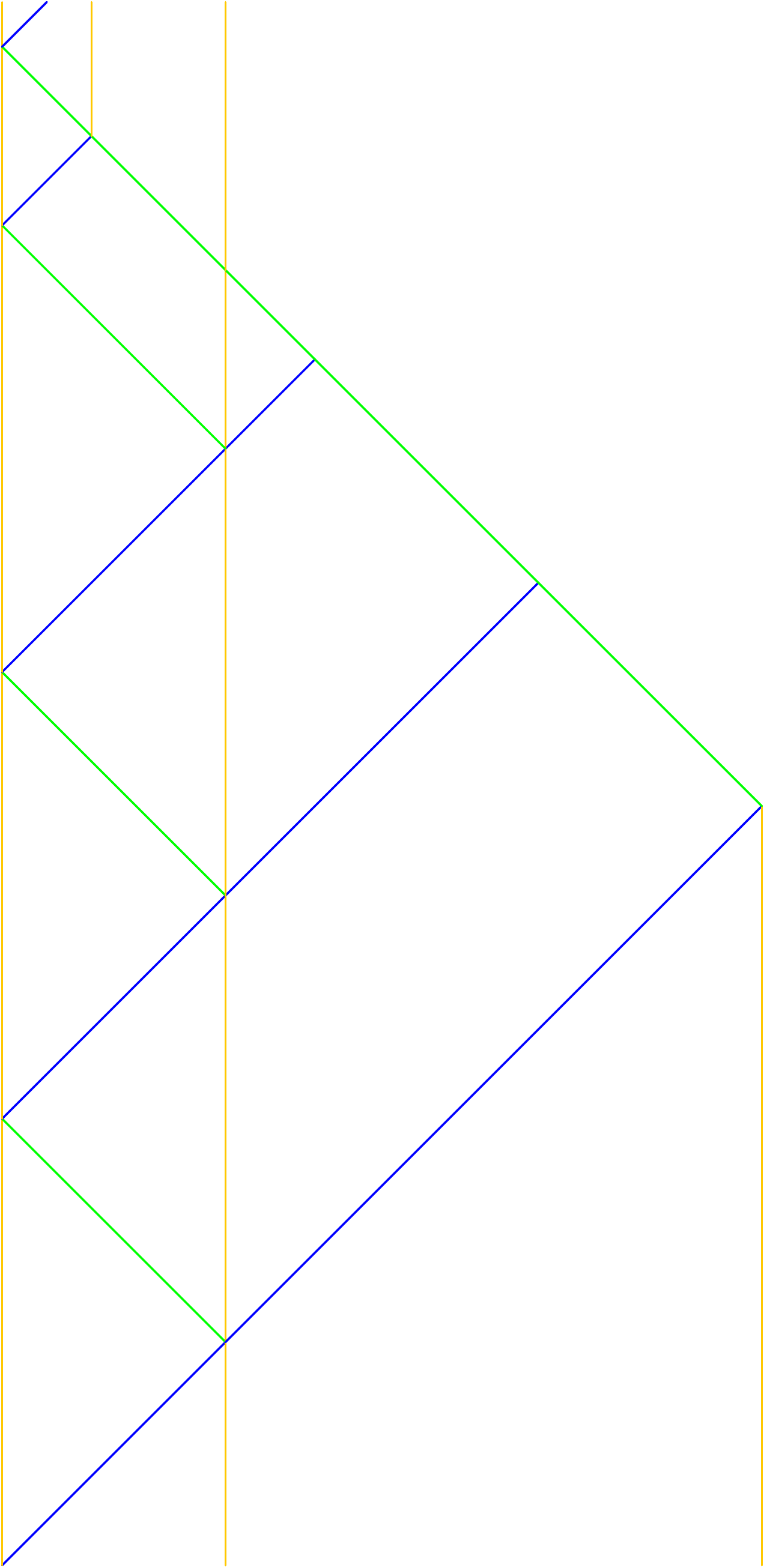}};
      \draw[<->][arrow] (-\Sep,0) -- node[left] {$2 b$} (-\Sep,0.29);
      \draw[<->][arrow] (-\ArrowSpacing,0) -- node[left] {$2 a$} (-\ArrowSpacing,.98);
      \draw[<->][arrow] (-\Sep,0.86) -- node[left] {$2 r$} (-\Sep,.97);
      \draw[<->][arrow] (0,-0.8\ArrowSpacing) -- node[below] {$a$} (.98\Width,-0.8\ArrowSpacing);
      \draw[<->][arrow] (0,-\Sep) -- node[below] {$b$} (0.3,-\Sep);
      \draw[<->][arrow,yshift=\Height] (0,\Sep) -- node[above] {$b'$} (0.12,\Sep);
      \draw[<->][arrow,yshift=\Height] (0,0.8\ArrowSpacing) -- node[above] {$a'$} (0.29,0.8\ArrowSpacing);
    \end{tikzpicture}\qquad\quad}%
  \hfill\Width 1.7cm
  \subfigure[An accumulation with $3$ speeds.\label{fig:accu3-phi}]{\qquad\qquad%
    \begin{tikzpicture}[x=\Width]
      \draw (0,0) node [above right,inner sep=0]{\includegraphics[width=\Width]{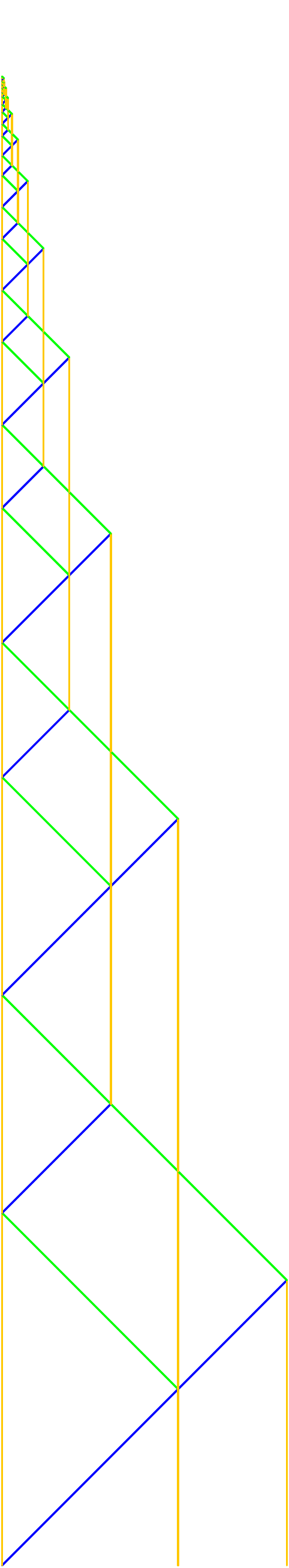}};
      \draw[<->][arrow] (0,-\Sep) -- node[below] {$1$} (0.618,-\Sep);
      \draw[<->][arrow] (0,-.8\ArrowSpacing) -- node[below] {$\varphi$} (1,-.8\ArrowSpacing);
    \end{tikzpicture}\qquad\qquad\qquad}%
  \caption{Achieving an accumulation with the machine \SMeuclid.}\label{fig:accu3}
\end{figure}

\paragraph{Non-termination of Euclid's algorithm}

To provide an accumulation, an infinite number of collisions should take place in a finite time, and so this process should be infinite. 
This means that we need to execute Euclid's algorithm on some values $a$, and $b$ such that it doesn't end.
To achieve this, it is enough to use two {\em incommensurate reals} \ie, two reals such that their ratio is irrational. 
Indeed, it is know from \cite[Th.~161, p.~136]{hardy+wright60book} that:
\[\text{{\em Euclid's algorithm halts on inputs $a$ and $b$}}\ \ \Longleftrightarrow\ \ \frac{a}{b}\in\Q\enspace.\]

To obtain an infinite run, we set the values $a=\varphi$ and $b=1$, where $\varphi=\frac{1+\sqrt 5 }{2}$ is the golden number.
The generated diagram is given by \Fig{fig:accu3-phi}, and
the corresponding initial configuration is
$\Config[0]^{\varphi,1} = \{\SigEuclidWallZero@0,\ \SigEuclidZigSmall@0,\ \SigEuclidWallb@1,\ \SigEuclidWalla@\varphi\}$.

Executed on these inputs, the algorithm doesn't terminate because $\frac{a}{b} = \varphi$ is irrational, 
and it produces a sequence $(r_n)_{n\in\N}$ of remainders wich is striclty positive and decreasing (by definition of the Euclidean division).
Also, as $\varphi$ satisfies $\varphi = 1 + \frac{1}{\varphi}$, the developement of $\varphi$ into a continued fraction is:
\[ \varphi = 1+ \frac{1}{1+\frac{1}{1+\frac{1}{\ldots}}}\]
and it provides that $q_n = 1$ for all $n$ (see \cite[p.~134]{hardy+wright60book} for more details on the link between continued fraction and Euclid's algorithm).

\paragraph{Convergence of sum of $a_n$ and limit of the sequence of collision times}

We prove here that the diagram of \Fig{fig:accu3-phi} indeed contains an accumulation.
Starting from $a_0=a$, $b_0=b$, by using the definition of $a_{n+1} = b_n$ and $b_{n+1} = r_n$ and the relation $a_n = b_n \cdot q_n + r_n $ for all $n\in\N$, 
we can prove in the general case that for all $n$ we have $a_{n+2} \leq \frac{1}{2}a_n$.\\
Indeed,  we have:
\begin{align*}
a_n &= b_n \cdot q_n + r_n \\
    &= a_{n+1} \cdot q_n + b_{n+1} &&\text{ since } a_{n+1} = b_n \text{ and } b_{n+1} = r_n\,, \\
    &= (b_{n+1}\cdot q_{n+1} + r_{n+1})\cdot q_n + b_{n+1} \\
    &= b_{n+1} (1 + q_{n+1}\cdot q_n) + r_{n+1}\cdot q_n \\
    &= a_{n+2} (1 + q_{n+1}\cdot q_n) + r_{n+1}\cdot q_n &&\text{ since } a_{n+2} = b_{n+1}\,, \\
    &\geq 2 a_{n+2} + r_{n+1} &&\text{ because }\forall n\ q_n\geq 1\,.   
\end{align*}
We know that starting from $a_0 = \varphi$ and $b_0 = 1$, since the ratio $a_0/b_0$ is irrational, 
the algorithm doesn't terminate and so we have $r_{n} > 0$ for all $n$. We also have for all $n$ $q_n =1$. 
So in the case of inputs  $\varphi$ and $1$, the previous inequality simply becomes:
\begin{equation*}
  a_{n+2} < \frac{1}{2} a_n\enspace.
\end{equation*}
By a simple induction on $n$, we obtain $\forall n\geq 1$, $a_{2n} < \frac{1}{2^n}a_0 $ and $a_{2n+1}<\frac{1}{2^n}a_1$.
Finally:
\begin{align*}
\sum \limits_{i=0}^{n} a_i &=\ \sum \limits_{i=0}^{\lfloor n/2\rfloor} a_{2i} + \sum \limits_{i=0}^{\lfloor n/2\rfloor} a_{2i+1} \\
			     &<\ \sum \limits_{i=0}^{\lfloor n/2\rfloor} \frac{1}{2^i}\cdot a_0 + \sum \limits_{i=0}^{\lfloor n/2\rfloor} \frac{1}{2^i}\cdot a_1\\
			     &<\ a_0 \cdot \sum \limits_{i=0}^{\lfloor n/2\rfloor} \frac{1}{2^i} + a_1 \cdot \sum \limits_{i=0}^{\lfloor n/2\rfloor} \frac{1}{2^i}
\end{align*}
and when $n \to \infty$, we get for the values $a_0=\varphi$ and $a_1=b_0=1$:
\begin{align*}
  \lim\limits_{n \to \infty} \sum \limits_{i=0}^{n} a_i &<\ 2\cdot(\varphi+1)
\end{align*}

Since the series $\Sigma a_i$ is upper-bounded by $2\cdot(\varphi+1)$, and has positive terms, the infinite sum $\sum \limits_{i=0}^{\infty} a_i$ converges to a finite limit.

Each time a recursive step of the algorithm starts, at least one collision has occured (in fact at least $3$), 
and so each duration $2 a_i$ contains at least one collision 
(remember that the value $2$ is the coefficient of a back-and-forth duration, and it only depends on the speeds, which values are here $1$ and $-1$).
More precisely, for all $n\in\N$, there is a collision occuring at coordinates $(0, 2\cdot\sum \limits_{i=0}^{n} a_i)$,
and between time $2\cdot\sum \limits_{i=0}^{n} a_i$ and $2\cdot\sum \limits_{i=0}^{n+1} a_i$, there is only a finite number of collisions.

The total sum of these durations is given by $\TLimit=\sum \limits_{i=0}^{\infty} 2 a_i$ and corresponds to the total height of the construction.
As we showed previsouly, the infinite sum  $\sum \limits_{i=0}^{\infty} a_i$ is bounded by $2\cdot(\varphi+1)$ and so \TLimit is upper-bounded by $4\cdot(\varphi+1)$.
We can conclude that before the finite time \TLimit, there is an infinite number of collisions,
because there is at least one collision occuring at each time $t=2\cdot\sum \limits_{i=0}^{n} a_i < \TLimit$.
This implies by definition of the sequence of collisions $(C_n)_{n\in\N}$ (\Def{def:dynamics}), that the sequence of collision times is converging to \TLimit.
In the general case, \TLimit is bounded by $2\uptau\cdot(a_0+b_0)$, where $\uptau=\frac{1}{\nu_1}+\frac{1}{\nu_2}$ (for non-null speeds $\nu_1$ and $\nu_2$)
and $a_0$ and $b_0$ are positive real numbers.

By \Lem{lemma:accu-dynamics}, the diagram of \SMeuclid started on the initial configuration $\Config[0]^{1,\varphi}$ contains an accumulation happening
at coordinates $(0,\TLimit)$ \ie it satisfies $\Config[\TLimit](0)=\Accu$.
More generally, by using the property that Euclid's algorithm halts on inputs $a$ and $b$ if and only if $\frac{a}{b}$ is irrational, 
we obtain the following:

\begin{lemma}\label{lemma:accu3-euclid_accumulation}
  The machine \SMeuclid produces an accumulation when executed from any initial configuration $\Config[0]^{a,b}$ satisfying $\frac{a}{b}\notin\Q$.
\end{lemma}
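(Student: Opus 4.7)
The plan is to generalize the argument already spelled out for the specific case $(a,b)=(\varphi,1)$ to an arbitrary incommensurate pair. The geometrical emulation is identical: starting from $\Config[0]^{a,b}$, after duration $2a$ (one unit back-and-forth between walls at distance $a$, with the factor $2=\uptau$ fixed by the speeds $\pm 1$), the configuration becomes $\Config^{a_1,b_1}=\Config^{b,\,a\bmod b}$, and more generally at time $t_n=2\sum_{i=0}^{n-1}a_i$ we find $\Config^{a_n,b_n}$ where $(a_n)$, $(b_n)$ are exactly the Euclid sequences. Invoking Hardy--Wright's theorem (already cited in the excerpt), the hypothesis $a/b\notin\Q$ guarantees non-termination, hence $r_n=b_{n+1}>0$ and $q_n\geq 1$ for every $n$.

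Next I would replay the algebraic estimate given in the excerpt just before the $\varphi$-specific step. From $a_n = a_{n+2}(1+q_{n+1}q_n)+r_{n+1}q_n$ together with $q_n,q_{n+1}\geq 1$ and $r_{n+1}>0$, one obtains the strict bound $a_{n+2}<\tfrac{1}{2}a_n$. An easy induction then yields $a_{2n}<2^{-n}a_0$ and $a_{2n+1}<2^{-n}a_1$ for all $n\geq 0$, so splitting a partial sum into even and odd indices and bounding by geometric series gives
\[
\sum_{i=0}^{N} a_i \;<\; (a_0+a_1)\sum_{i=0}^{\lfloor N/2\rfloor} 2^{-i} \;\leq\; 2(a_0+a_1) \;=\; 2(a+b).
\]
Hence $\sum_{i=0}^\infty a_i$ converges, and thus $\TLimit = 2\sum_{i=0}^\infty a_i$ is finite, bounded by $4(a+b)$.

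Finally I would conclude using the collision-time sequence. Each block of length $2a_n$ corresponds to a full recursive step of Euclid's algorithm and contains at least one collision (in fact the triple collision that produces the new \SigEuclidWallb and the \SigEuclidWalla). Since non-termination gives infinitely many such blocks and the sum of their durations is the finite number $\TLimit$, we obtain an infinite sequence $(C_n)_{n\in\N}$ of collisions whose times $t_n$ form a strictly increasing sequence converging to $\TLimit$. Each $C_n$ occurs on or to the right of \SigEuclidWallZero at position $0$, and the widths $a_n\to 0$ force the spatial coordinates to tend to $0$ as well. Applying \Lem{lemma:accu-dynamics} yields an accumulation at $(0,\TLimit)$, which is the desired conclusion.

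The only real obstacle is the strict factor-$\tfrac{1}{2}$ contraction: one must be careful to use that $r_{n+1}>0$ strictly (guaranteed by the irrationality hypothesis via Hardy--Wright), because the weak inequality $a_{n+2}\leq\tfrac{1}{2}a_n$ alone would not guarantee the geometric decay needed for summability. Everything else is a direct transcription of the $\varphi$-argument already written in \Sec{sec:3speeds_accu}, with $(a_0,b_0)$ replacing $(\varphi,1)$ throughout.
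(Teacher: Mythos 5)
Your proposal is correct and takes essentially the same route as the paper: the paper's own justification of this lemma is precisely the generalization of the $\varphi$-computation you describe (non-termination via Hardy--Wright, the bound $a_n\ge 2a_{n+2}+r_{n+1}q_n$ giving geometric decay of $(a_n)$, finiteness of $\TLimit=2\sum a_i$, and at least one collision per block of duration $2a_n$, concluding via \Lem{lemma:accu-dynamics}). One small correction to your closing remark: the weak inequality $a_{n+2}\le\frac{1}{2}a_n$ would already give the geometric decay and summability you need, so strictness is not the crux; what the irrationality hypothesis is genuinely indispensable for is non-termination, i.e.\ the existence of infinitely many collision blocks, which you do invoke correctly in your final paragraph.
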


\begin{remark}
  According to this lemma, $\varphi$ can be replaced by any other irrational value $x$ (greater than $1$): 
  the machine \SMeuclid run from $\Config[0]^{1,x}$ will also produce an accumulation. Here, $\varphi$ has been chosen because of the regularity of 
  the resulted diagram, coming from the fact that Euclid's algorithm started on $\varphi$ and $1$ satisfies $q_n = 1$ for all $n$.
  
  We can also provide an accumulation by using \Supp{\SM[3]^{-1,0,1}}, the support machine of \SMeuclid, using only three meta-signals,
  and by running it on \Supp{\Config[0]^{1,\varphi}}, the support configuration of $\Config[0]^{1,\varphi}$. 
  We obtain the diagram of \Fig{fig:accu3-full_phi}, in which an accumulation happens at position $x=1$ and strictly 
  before the time \TLimit computed previsouly. Another example is displayed by \Fig{fig:accu3-full_other}.
  The fact that these support diagrams contain an accumulation directly follow by \Cor{lemma:accu-transfert} and \Lem{lemma:accu3-euclid_accumulation}.
  \begin{figure}[hbt]
    \centering
    \subfigure[Support diagram for {$\Config[0]^{1,\varphi}$}.\label{fig:accu3-full_phi}]{%
      \includegraphics[height=4.3cm]{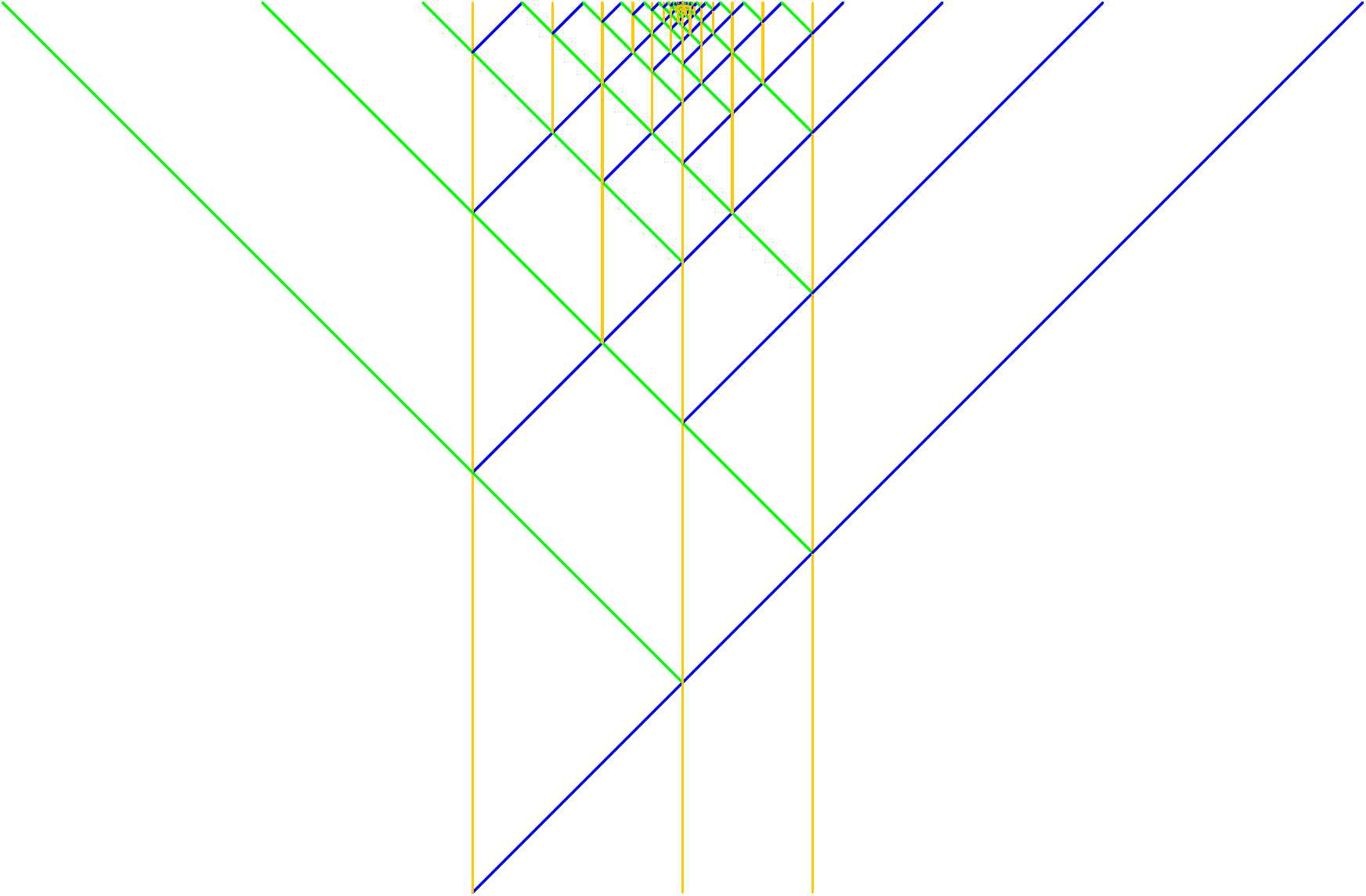}\quad}
    \subfigure[Another example.\label{fig:accu3-full_other}]{%
      \includegraphics[height=4.3cm]{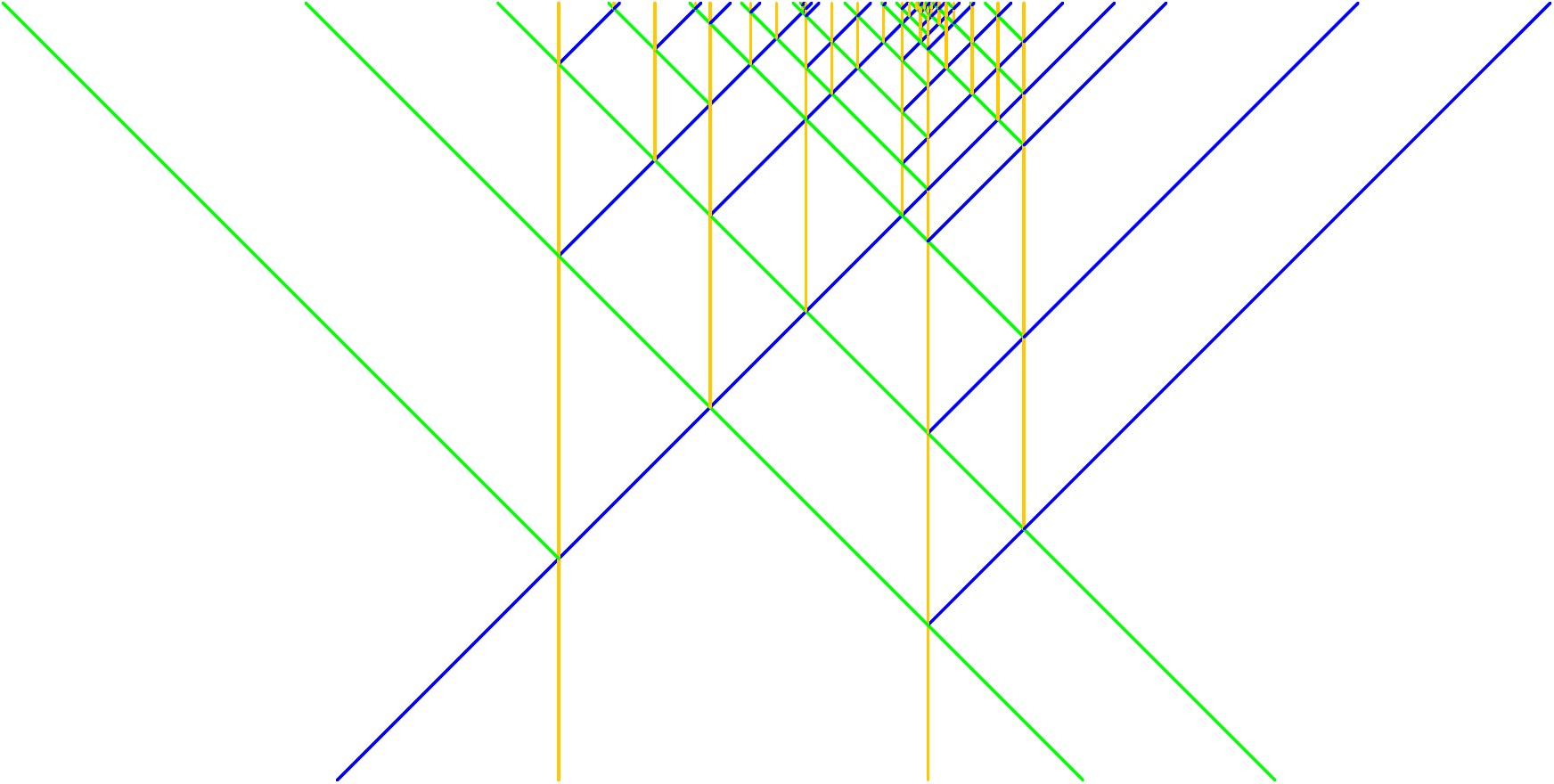}}
    \caption{Examples of supports diagrams for \SMeuclid with an accumulation.}\label{fig:accu3-full_diagrams}
  \end{figure}
\end{remark}

%
%
\paragraph{Accumulation with an irrational speed and a rational initial configuration}

We can also provide an accumulation starting with a rational initial configuration, but with a signal machine including meta-signals
of irrational speed, which can be used to create an irrational distance.

To achieve this, we start from the machine \SMeuclid in which all speeds $1$ are replaced by speeds $\varphi\in\R\smallsetminus\Q$.
This new machine \SMeuclidPhi is given in \Fig{fig:accu3-irrational_speed-diagram-machine}.
We can easily set an initial rational configuration so that, by using the fact that $\frac{-1}{\varphi}$
(the ratio between the two non-zero speeds) is irrational, 
we obtain after two collisions an irrational configuration wich can be used to run Euclid's algorithm.
We use a new meta-signal, \SigEuclidStart, of speed $\varphi$, and the rule 
\ColRule{\SigEuclidStart,\ \SigEuclidZagSmall}{\SigEuclidZagSmall,\ \SigEuclidWallb} to set a stationary signal \SigEuclidWallb at an irrational position.
Indeed, as illustrated by \Fig{fig:accu3-irrational_speed-start}, 
if we start from the rational configuration $\Config[0] = \{\SigEuclidWallZero @ 0,\ \SigEuclidStart @ 0,\ \SigEuclidZagSmall @ 1,\ \SigEuclidWalla @ 1\}$,
a simple computation shows that the first collision happens at the position $x= \frac{1}{1+\varphi} = \frac{1}{\varphi} = \varphi - 1$ 
and that the configuration at time $1$ has the form 
$\Config[1]^{1,\varphi - 1} = \{\SigEuclidWallZero @ 0,\ \SigEuclidZigSmall @ 0,\ \SigEuclidWallb @ \varphi - 1,\ \SigEuclidWalla @ 1 \}$,
which is the required form to start an infinite execution of the Euclid's algorithm explained previously.
In this initialization, the irrational speeds is used to create an irrational distance from a rational initial configuration.
After the initialization, the process never halts because the ratio $\frac{1}{\varphi-1}=\varphi$ is irrational, 
and because the rules of \SMeuclidPhi (except the one involving \SigEuclidStart) are the same than \SMeuclid.
The sequence $(a_n)_{n\in\N}$ is the same than the previous example, 
but shifted by one step because of the initialization which sets the first values at $a_0 = 1$ and $a_1 = \varphi - 1$
(respectively to the values $a_1$ and $a_2$ of the previous example). 
An accumulation is thus produced at position $x=0$ and time $\TLimit=1+\uptau\sum\limits_{i=1}^{\infty} a_n$,
where $\uptau$ is the time of a unitary back-and-forth: it depends only on speeds and is given by $\uptau=1+\frac{1}{\varphi}=\varphi$.
The whole diagram is displayed in \Fig{fig:accu3-irrational_speed-diagram}.

\begin{figure}[hbt]%
  \centering\small
  \subfigcapskip 1em
  \subfigure[Meta-signals and rules of \SMeuclidPhi.\label{fig:accu3-irrational_speed-diagram-machine}]{%
    \begin{tabular}[b]{c}
      \begin{tabular}[b]{r|c}
	\bf Meta-signal & \bf Speed \\
	\hline
	\SigEuclidZigSmall, \SigEuclidZigBig, \SigEuclidStart & $\varphi$ \rule{0pt}{5mm}\\[.2em]
	\SigEuclidWallZero, \SigEuclidWalla, \SigEuclidWallb & $0$ \\[.2em]
	\SigEuclidZagSmall, \SigEuclidZagBig & $-1$ \\[.2em]
      \end{tabular}\\[1em]%
      \begin{tabular}[b]{r@{ $\rightarrow$ }l}
	\multicolumn{2}{c}{\bf Collision rules}\\\hline
        \{ \SigEuclidStart,\ \SigEuclidZagSmall \} & \{ \SigEuclidZagSmall,\ \SigEuclidWallb \} \rule{0pt}{5mm}\\[.2em]
	\{ \SigEuclidZigSmall, \SigEuclidWallb \} & \{ \SigEuclidZagSmall, \SigEuclidWallb, \SigEuclidZigBig \}\\[.2em]
	\{ \SigEuclidWallZero , \SigEuclidZagSmall \} & \{ \SigEuclidWallZero , \SigEuclidZigSmall \}\\[.2em]
	\{ \SigEuclidWalla , \SigEuclidZigBig \} & \{ \SigEuclidZagBig \}\\[.2em]
	\{ \SigEuclidWallb , \SigEuclidZagBig \} & \{ \SigEuclidZagBig, \SigEuclidWalla \}\\[.2em]
	\{ \SigEuclidZigSmall , \SigEuclidZagBig \} & \{ \SigEuclidZagSmall, \SigEuclidWallb \}\\[.2em]
	\{ \SigEuclidZigBig, \SigEuclidZagBig \} & \{ \SigEuclidZagBig \}\\[.2em]
	\{ \SigEuclidZigSmall, \SigEuclidWallb, \SigEuclidZagBig \} & \{ \SigEuclidZagBig, \SigEuclidWallZero\}\\[.2em]
	\{ \SigEuclidWallZero, \SigEuclidZagBig \} & \{ \SigEuclidWallZero\}%
      \end{tabular}%
    \end{tabular}}
  \subfigure[Creating an irrational distance.\label{fig:accu3-irrational_speed-start}]{\quad%
    \begin{tikzpicture}[x=4.5cm,y=4.5cm]
\definecolor{couleurwall0}{RGB}{255,200,0}
\tikzstyle{drawwall0}=[draw=couleurwall0]
\tikzstyle{nodewall0}=[pos=0.5, right]
\definecolor{couleurwallx}{RGB}{255,200,0}
\tikzstyle{drawwallx}=[draw=couleurwallx]
\tikzstyle{nodewallx}=[pos=0.87, left]
\definecolor{couleurwally}{RGB}{255,200,0}
\tikzstyle{drawwally}=[draw=couleurwally]
\tikzstyle{nodewally}=[pos=0.9, left]
\definecolor{couleurstart}{RGB}{255,0,0}
\tikzstyle{drawstart}=[draw=couleurstart]
\tikzstyle{nodestart}=[pos=0.5, above, sloped]
\definecolor{couleurzig}{RGB}{0,0,255}
\tikzstyle{drawzig}=[draw=couleurzig]
\tikzstyle{nodezig}=[pos=0.5, below, sloped]
\definecolor{couleurZIG}{RGB}{0,0,255}
\tikzstyle{drawZIG}=[draw=couleurZIG]
\tikzstyle{nodeZIG}=[pos=0.5, above, sloped]
\definecolor{couleurzag}{RGB}{0,255,0}
\tikzstyle{drawzag}=[draw=couleurzag]
\tikzstyle{nodezag}=[pos=0.5, above, sloped]
\definecolor{couleurZAG}{RGB}{0,255,0}
\tikzstyle{drawZAG}=[draw=couleurZAG]
\tikzstyle{nodeZAG}=[pos=0.5, above, sloped]
\draw[drawzag] (1.000000,0.000000) -- node[nodezag]{\SigEuclidZagSmall} (0.618034,0.381966);
\draw[drawstart] (0.000000,0.000000) -- node[nodestart]{\SigEuclidStart} (0.618034,0.381966);
\draw[drawzag] (0.618034,0.381966) -- node[nodezag]{\SigEuclidZagSmall} (0.000000,1.000000);
\draw[drawwall0] (0.000000,0.000000) -- node[nodewall0]{\SigEuclidWallZero} (0.000000,1.000000);
\draw[drawwall0] (0.000000,1.000000) -- node[nodewall0]{\SigEuclidWallZero} (0.000000,1.200000);
\draw[drawzig] (0.000000,1.000000) -- node[nodezig]{\SigEuclidZigSmall} (0.323607,1.200000);
\draw[drawwallx] (0.618034,0.381966) -- node[nodewallx]{\SigEuclidWallb} (0.618034,1.200000);
\draw[drawwally] (1.000000,0.000000) -- node[nodewally]{\SigEuclidWalla} (1.000000,1.200000);
\draw[<->][arrow] (0,-\Sep) -- node[below] {$1$} (1,-\Sep);
\draw[<->][arrow] (0,1.25) -- node[above] {$\varphi-1$} (0.618,1.25);

\end{tikzpicture}}%
  \qquad
  \subfigure[Infinite run.\label{fig:accu3-irrational_speed-diagram}]{\quad%
    \includegraphics[width=.10\textwidth]{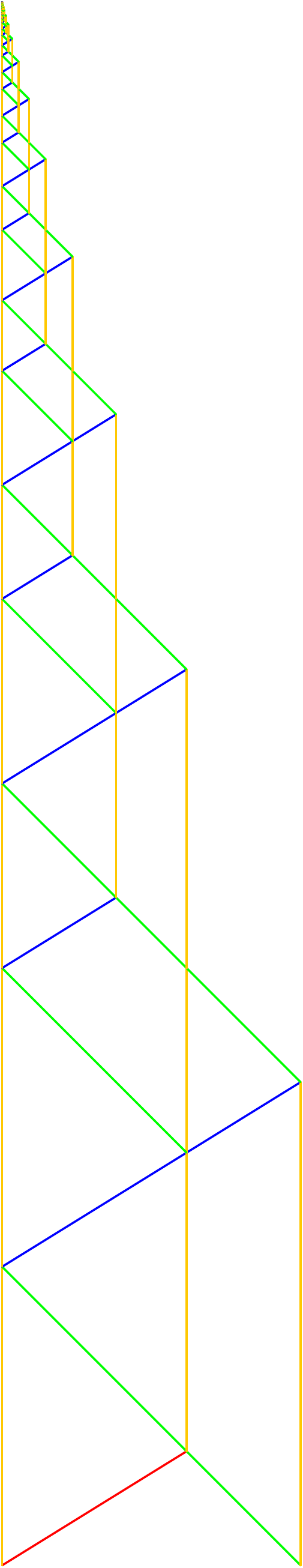}%
  \qquad}%
  \caption{Meta-signals, rules and an infinite run of the machine \SMeuclidPhi.}\label{fig:accu3-machine_irrat-speed}
\end{figure}

\medskip

With \Lem{lemma:accu3-euclid_accumulation}, we finally obtain:

\begin{proposition}\label{prop:accu3}
  There exists $3$-speed signal machines defined with an irrational ratio between either two speeds or two of its initial positions, that produce accumulations.
\end{proposition}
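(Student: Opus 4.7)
The plan is to exhibit two explicit $3$-speed machines, each witnessing one of the two irrationality cases stated in the proposition, and invoke the preceding analysis to conclude that each produces an accumulation.

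For the case of an irrational ratio between two initial positions, I would simply take the machine \SMeuclid defined in \Fig{fig:SMeuclid-rules}. Its three speeds are $-1$, $0$, $1$ (all rational), so the speed ratios are rational, and the machine has exactly three distinct speeds. I would then apply \Lem{lemma:accu3-euclid_accumulation} to the initial configuration $\Config[0]^{\varphi,1} = \{\SigEuclidWallZero@0,\ \SigEuclidZigSmall@0,\ \SigEuclidWallb@1,\ \SigEuclidWalla@\varphi\}$, which has $a/b=\varphi\notin\Q$. The lemma directly provides the desired accumulation, and the initial configuration has two distances ($1$ and $\varphi$) whose ratio is irrational.

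For the case of an irrational ratio between two speeds, I would use \SMeuclidPhi from \Fig{fig:accu3-irrational_speed-diagram-machine}. This machine has exactly three speeds, namely $-1$, $0$, $\varphi$, and the ratio $-1/\varphi$ is irrational. Starting from the rational initial configuration $\Config[0]=\{\SigEuclidWallZero@0,\ \SigEuclidStart@0,\ \SigEuclidZagSmall@1,\ \SigEuclidWalla@1\}$, the preceding discussion (summarized in \Fig{fig:accu3-irrational_speed-start}) shows that the initial collision between \SigEuclidStart and \SigEuclidZagSmall produces, at time $1$, the configuration $\Config[1]^{1,\varphi-1}$, which has the shape needed to launch Euclid's algorithm on inputs $1$ and $\varphi-1$. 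Since $1/(\varphi-1)=\varphi\notin\Q$, the geometric Euclidean process never halts, and the same convergence argument used to establish \Lem{lemma:accu3-euclid_accumulation} (the bound $a_{n+2}<\tfrac{1}{2}a_n$ giving a convergent series for the collision times) applies verbatim to the shifted sequence, yielding an accumulation at position $0$ and some finite time $\TLimit\leq 1+\varphi\sum_{i\geq 1}a_i$.

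The only steps that require genuine verification are (i) that \SMeuclidPhi indeed behaves like \SMeuclid once the \SigEuclidStart-induced initialization has produced the irrational configuration $\Config[1]^{1,\varphi-1}$ — this is clear by inspecting the collision rules, which coincide with those of \SMeuclid except for the initial rule involving \SigEuclidStart — and (ii) that the sum $\sum a_n$ still converges when the speeds are $\varphi$ and $-1$ rather than $1$ and $-1$; this only changes the unit-travel coefficient $\uptau$ from $2$ to $1+1/\varphi=\varphi$, a finite constant that does not affect convergence of the geometric bound. I expect no real obstacle here, since the heart of the argument (non-termination of Euclid on incommensurate inputs, plus the $\tfrac{1}{2}$-halving lemma on $(a_n)$) has already been carried out in the body preceding the proposition; the proof is essentially a two-line appeal to \Lem{lemma:accu3-euclid_accumulation} together with the checked initialization of \SMeuclidPhi.
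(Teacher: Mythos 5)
Your proposal is correct and follows exactly the paper's own route: the paper derives this proposition from \Lemma{lemma:accu3-euclid_accumulation} applied to \SMeuclid on $\Config[0]^{\varphi,1}$ for the irrational-distance case, and from the \SMeuclidPhi construction with the \SigEuclidStart initialization producing $\Config[1]^{1,\varphi-1}$ for the irrational-speed case. The two verification points you flag (that \SMeuclidPhi reduces to \SMeuclid after initialization, and that changing $\uptau$ from $2$ to $\varphi$ preserves convergence) are precisely the ones the paper addresses in the paragraphs preceding the proposition.
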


\subsection{Case of rational-like machines}
\label{sec:3speeds_no-accu}
We show now that the existence of incommensurate values is a necessary condition for accumulating with some $3$-speed machine.
This proof is done in two steps.
First, we bring back the study of any rational-like $3$-speed machine to the study of a rational $3$-speed support machine
(having speeds $-1$, $0$ and a third rational speed $\nu$).
Second, we show that the diagrams of a rational $3$-speed support machine are included on some regular structures 
---called {\em \StripMultiName[es]}--- that cannot contain any accumulation.
\subsubsection{Normalization of speeds}
\label{subsec:3speeds_no-accu_normalization}
%
%
In the same way that we did for $2$-speed machines,
we can reduce the problem of accumulating with $3$-speed machines to 
machines having speeds $-1$, $0$ and $\nu$ where $\nu$ is a real positive number.

Indeed, let $\SM^{a,b,c}$ be a signal machine having only three distinct speeds $a, b$ and $c$ with $a < b < c$. 
We define a machine $\SM^{-1,0,\nu}$, equivalent to $\SM^{a,b,c}$.
Consider the function $h : \R\to\R$ defined in \Subsec{subsec:properties}, with $c = -1$ and $d = 0$:
$h(x) = \frac{x}{b-a} - \frac{b}{b-a}\enspace$.
We have then $f(a) = -1$, $f(b) = 0$ and $f(c) = \frac{c-b}{b-a} > 0$.
We call $\nu = f(c)$.
By \Lem{lemma:affine-trans}, since $h$ is an affine function of strictly positive ratio, 
diagrams of  $\SM^{a,b,c}$ and $\SM^{-1,0,\nu}$ will be equivalent.
In particular, if $\SM^{a,b,c}$ produces a space-time diagram including an accumulation,
$\SM^{-1,0,\nu}$ will also produce a corresponding diagram including an accumulation.

In this section, we want to prove that no rational-like signal machine produces accumulation when started on rational-like initial configuration.
As done for $2$ speeds, for any rational-like machine $\SM^{a,b,c}$, we will only consider by now the support signal machine \Supp{\SM^{-1,0,\nu}}
of  the machine $\SM^{-1,0,\nu}$, equivalent to $\SM^{a,b,c}$, where $\nu = \frac{c-b}{b-a}$.
The support machine \Supp{\SM^{-1,0,\nu}} contains exactly $3$ meta-signals: one for each speed amoung $-1, 0$ and $\nu$.
Each collision rules of \Supp{\SM^{-1,0,\nu}} outputs the maximal possible number of meta-signals, that is the $3$ meta-signals of the machine.

Note that if $\SM^{a,b,c}$ is rational-like, then $\SM^{-1,0,\nu}$ is a rational machine: 
if all ratios between $a$, $b$ and $c$ are rational, then $\nu$ will also be rational.
Indeed, for any real non zero numbers $x, y$, the following holds:
$\frac{x}{y}\in\Q \Longleftrightarrow \frac{x-y}{y}\in\Q\enspace.$
Since $\nu = \frac{c-b}{b-a} = \frac{c}{b-a} - \frac{b}{b-a}$, 
we deduce that $\frac{a}{b}, \frac{a}{c} \in\Q\Rightarrow\nu\in\Q\enspace.$

So we will express $\nu$ in the form of a fraction $\nu=\frac{p}{q}$ where $p$ and $q$ are positive coprimes integer, 
so that the fraction is irreducible (and positive since $\nu$ is).
To simplify the notation, we will note \SMnu or \SMpq to designate \Supp{\SM^{-1,0,\nu}}, 
the support machine having the $3$ rational speeds $-1, 0$ and $\nu=\frac{p}{q}$.

\subsubsection{Meshes and diagrams}
\label{sec:3speeds_no-accu_proofs}
%
%
We construct a family of diagrams such that every diagram generated by the support machine \SMnu is included in one diagram of the family.
Since no diagram in the family contains any ccumulation, the machine \SMnu cannot produce an accumulation.
This is done in two steps. 
In a first time, we show that diagrams obtained from a special form of initial configurations eventually become periodic: 
we call such a diagram a {\em \StripMultiName}.
In a second time, we prove that any rational initial configuration \Config[0] of \SMnu can be extended into one of the configurations 
that produce \StripMultiName[es] and so there exists a \StripMultiName which includes the support diagram of the diagram started from \Config[0].

\paragraph{Notions of {\StripName[s]} and {\StripMultiName[es]}}

We first introduce {\em \StripName[s]}, which are used as the basic components of meshes.
Recall that in this section, the speed $\nu$ is given by the rational number $p/q$, where $p, q \in\N$ are coprimes.

\begin{definition}[Strip]\label{def:strip}
  Let be $\Posi, w\in\R^{+}$.
  We call {\em \Strip{p}{q}{\Posi}{w}} the diagram generated by \SMnu from the initial configuration:
  \[
  \left\{\ \SigAt{[\SigL, \SigR]}{\Posi},\ \SigAt{\SigS}{\Posi+\frac{i}{p+q}\cdot w},\ \SigAt{[\SigL,\SigR]}{\Posi+w}\ \middle|\ 0\leq i \leq p+q\right\}\enspace.
  \]
\end{definition}

\Figure{fig:noaccu3-strip_example} displays a \Strip{\frac{2}{3}}{5}{0}{1} (the diagram has been croped on both sides, 
signals leaving on both sides are supposed to propagate indefinitely).

The parameter $\Posi$ is the position of the begining of the \StripName \ie the position of the left-most stationary signal, 
and $w$ is the total width of the \StripName (so the position of the right-most stationary is $\Posi + w$).
Parameters $p$ and $q$ (remember that we have the rational positive speed $\nu=p/q$) provides $p+q$, which the number of subdivisions of same length of the interval $[\Posi; \Posi+w]$.
This is the number of such subdivisions that would have been created during the evolution of the machine  
without placing all the initial stationary signals between the two extremal stationary signals, as illustrated by \Fig{fig:noaccu3-strip_uncomplete}.
Intuitively, it corresponds to the minimal number of divisions of the strip in equal parts so that all 
collisions between two signals \SigR and \SigL occurs exactly at the position of a subdivision
So by placing a stationary signal \SigS at each such position in the initial configuration
every collision between \SigR and \SigL also involves \SigS and is a triple collision.

So there are $p+q-1$ stationary signals between the left-most and the right-most stationary signals,
and their position are $\Posi + \frac{i}{p+q} \cdot w$ for $1 \leq i \leq p+q-1$.
Including the walls, a stationary signal \SigS is set at each position $\Posi + \frac{i}{p+q}$ for $0 \leq i \leq p+q$.
\Figure{fig:noaccu3-strip_parameters} gives the geometrical meanings of a \StripName parameters.

\begin{figure}[hbt]
  \centering
  \subfigure[An uncomplete \StripName.\label{fig:noaccu3-strip_uncomplete}]{%
    \includegraphics[width=.3\textwidth]{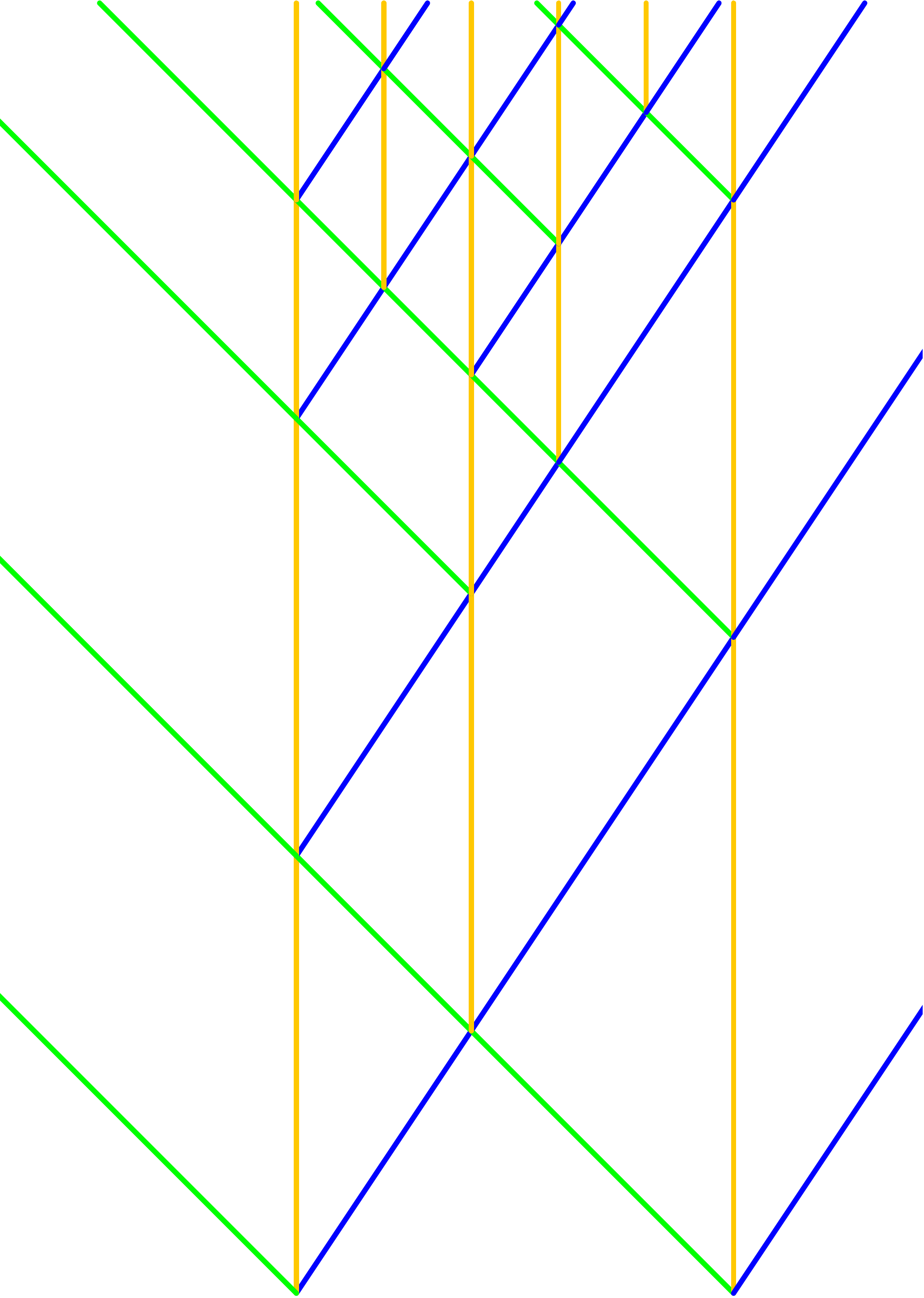}}%
  \hfill
  \subfigure[A \Strip{2}{3}{0}{1}.\label{fig:noaccu3-strip_example}]{%
    \includegraphics[width=.25\textwidth]{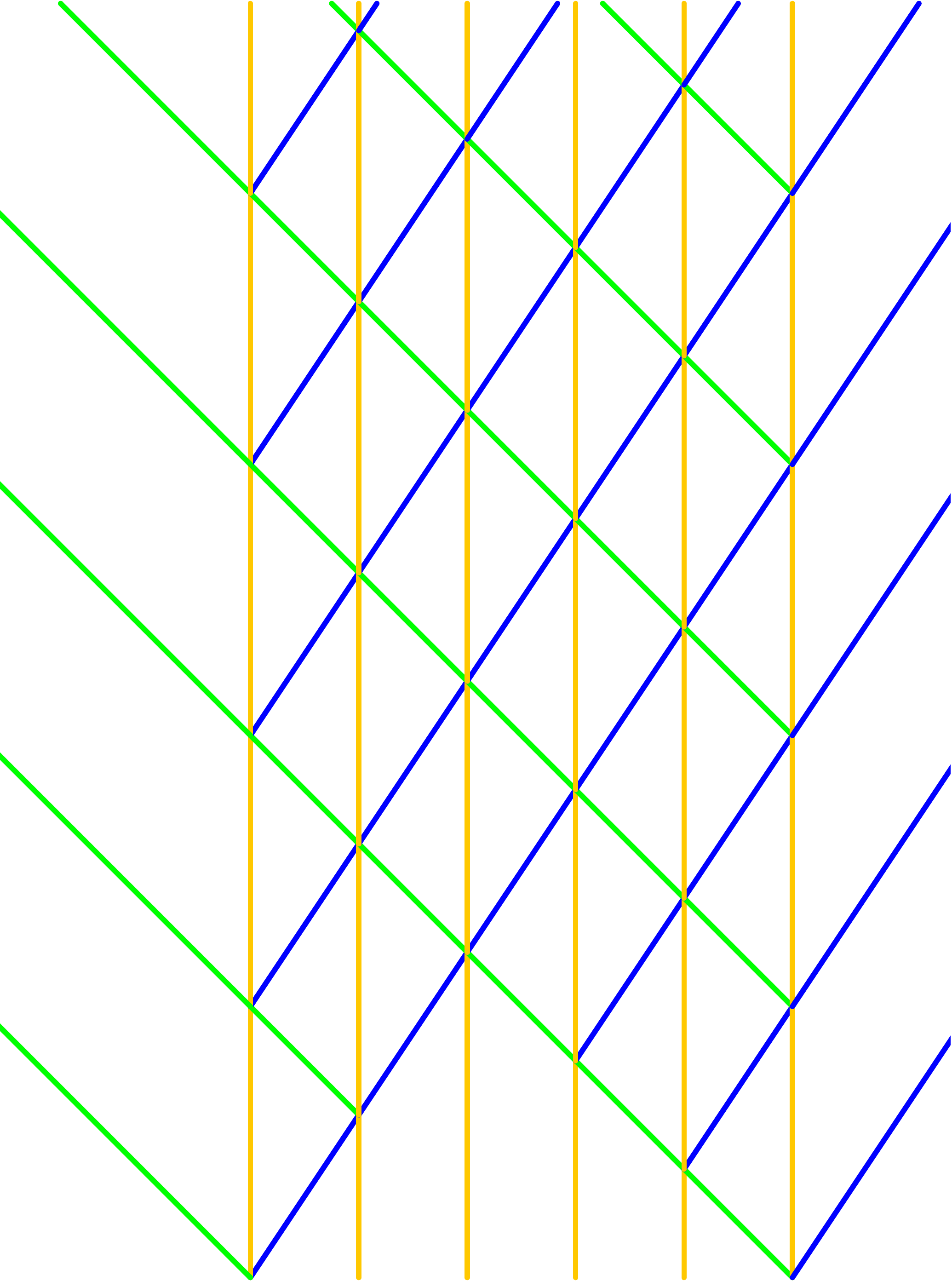}}%
  \hfill
  \subfigure[Zig-zag on one and two subdivisions.\label{fig:noaccu3-zig-zag_strip}]{\qquad%
    \includegraphics[width=.035\textwidth]{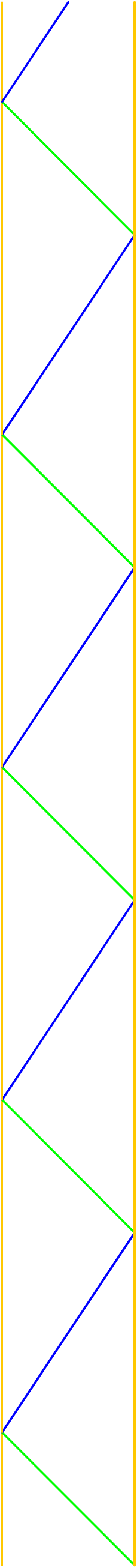}\qquad%
    \includegraphics[width=.07\textwidth]{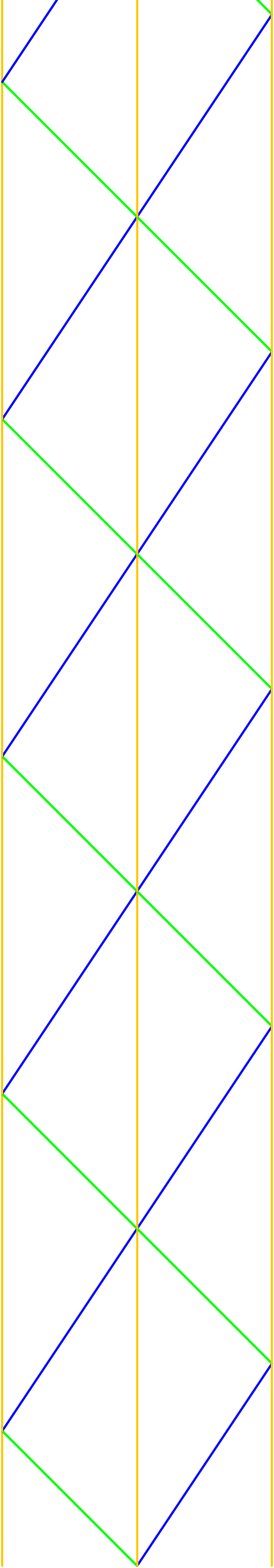}\qquad}%
  \hfill
  \caption{Basics structures for diagrams with $3$ speeds.}\label{fig:noaccu3-structures}
\end{figure}

\begin{figure}[hbt]
  \centering
  \Height 3.8cm
  \Width 0.5\Height
  \ArrowSpacing 9mm 
  \subfigure[Parameters of a \Strip{p}{q}{\Posi}{w}.\label{fig:noaccu3-strip_parameters}]{\qquad%
    \begin{tikzpicture}[x=\Width,y=\Height]
      \draw (0,0) node [above right,inner sep=0]{\includegraphics[height=\Height]{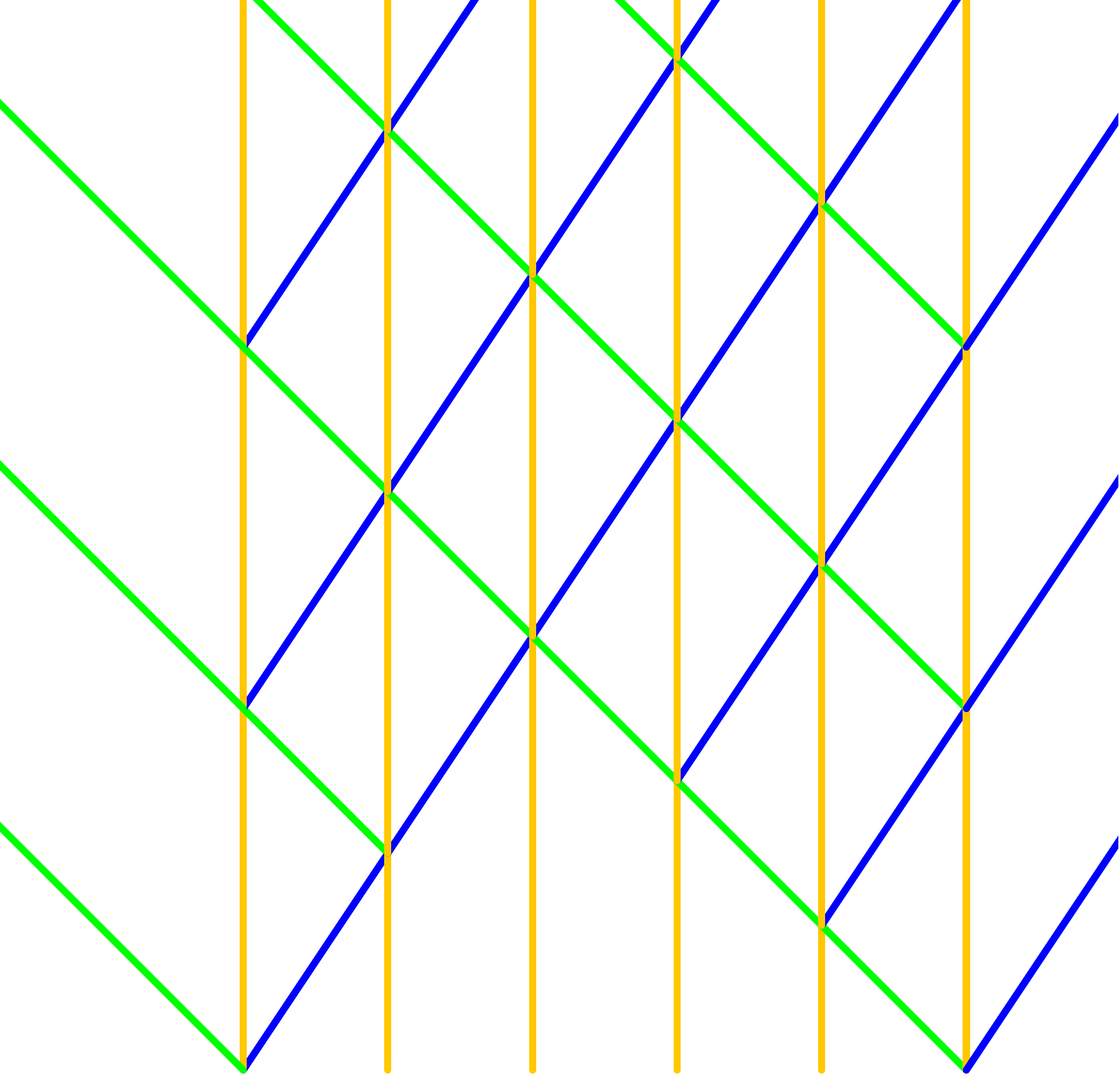}}; 
      \draw (0.45,0) node[inner sep=0em]{$\bullet$} ; \draw (0.45,0) node[below,inner sep=.4em]{$x_0$} ; 
      \draw[<->][arrow] (0.45,-.8\ArrowSpacing) -- node[below] {$w$} (1.8,-.8\ArrowSpacing); 
      \draw[-][dashed] (1.8,0) -- node[below] {$p$} (2.05,0); \draw[-][dashed] (2.05,0) -- node[right] {$q$} (2.05,.2); 
      \draw[decorate,decoration={brace,amplitude=10pt}](0.45,1.02\Height) -- node[above,yshift=.5cm]{$p+q$ subdivisions} (1.8,1.02\Height) ; 
    \end{tikzpicture}\qquad}
  \hfill
  \Height 3.8cm
  \Width 0.5\Height
  \subfigure[Parameters of a \StripMulti{p}{q}{\Posi}{w}{k}.\label{fig:noaccu3-multi-strip_parameters}]{%
    \begin{tikzpicture}[x=\Width,y=\Height]
      \draw (0,0) node [above right,inner sep=0]{\includegraphics[height=\Height]{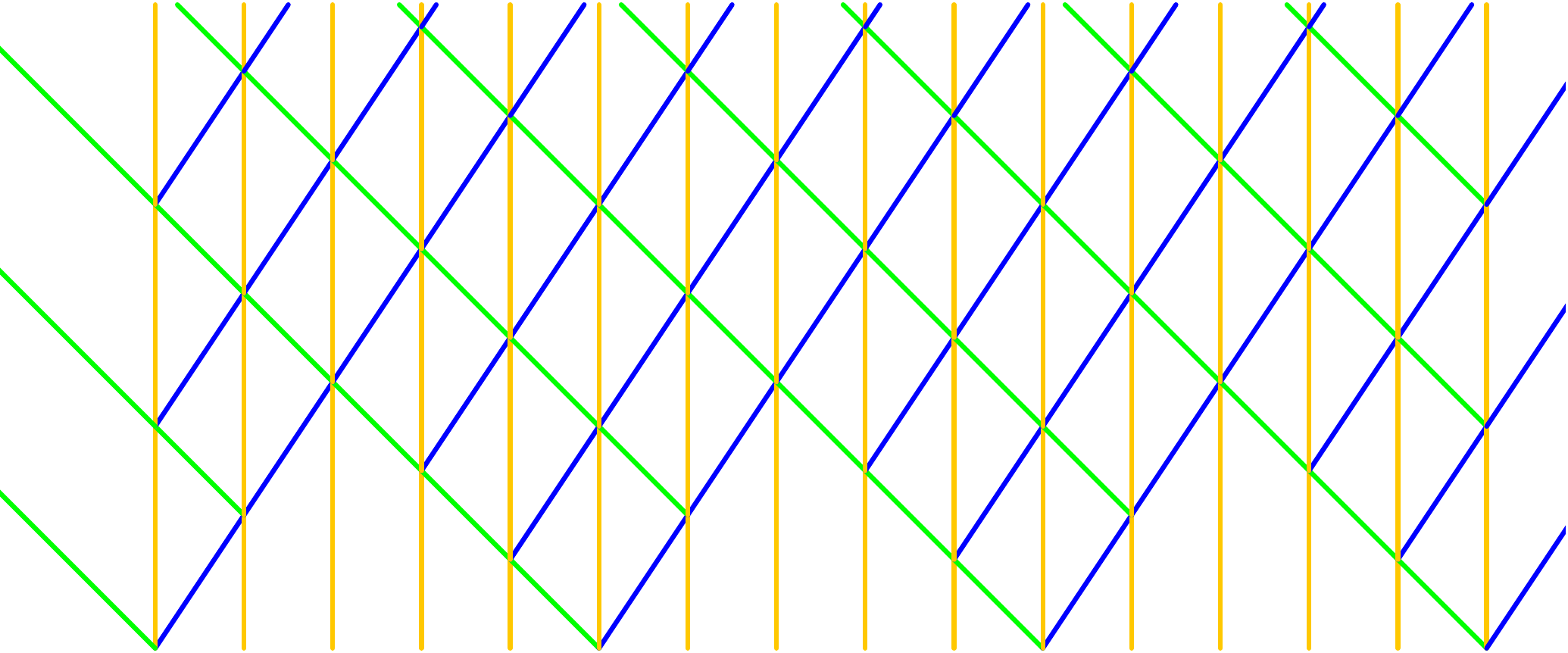}};
      \draw (0.47,0) node[inner sep=0em]{$\bullet$} ; \draw (0.47,0) node[below,inner sep=.4em]{$x_0$} ; 
      \draw[<->][arrow] (0.47,-.8\ArrowSpacing) -- node[below] {$w$} (1.84,-.8\ArrowSpacing); 
      \draw[<->][arrow] (1.85,-.8\ArrowSpacing) -- node[below] {$w$} (3.2,-.8\ArrowSpacing); 
      \draw[<->][arrow] (3.21,-.8\ArrowSpacing) -- node[below] {$w$} (4.55,-.8\ArrowSpacing); 
      \draw[-][dashed] (4.55,0) -- node[below] {$p$} (4.8,0); \draw[-][dashed] (4.8,0) -- node[right] {$q$} (4.8,.2); 
      \draw[decorate,decoration={brace,amplitude=10pt}](0.47,1.02\Height) -- node[above,yshift=.5cm]{$p+q$ subdivisions} (1.84,1.02\Height) ; 
      \draw[decorate,decoration={brace,amplitude=10pt}](1.85,1.02\Height) -- node[above,yshift=.5cm]{\ldots} (3.2,1.02\Height) ; 
      \draw[decorate,decoration={brace,amplitude=10pt}](3.21,1.02\Height) -- node[above,yshift=.5cm]{$p+q$ subdivisions} (4.55,1.02\Height) ; 
      \draw[decorate,decoration={brace,amplitude=10pt},yshift=1.3\Height](0.45,0) -- node[above,yshift=.5cm]{($k$ times)} (4.55,0) ; 
    \end{tikzpicture}}
  \caption{Parameters of \StripName[s] and \StripMultiName[es].\label{fig:noaccu3-strips_parameters}}  
\end{figure}

We first prove two lemmata, stating that the structure of a \StripName is {\em regular}:
in the central part of the \StripName, the diagram behaves with respect to a periodic pattern
and space outside the central part contains only parallel signals that will never collide.

\begin{lemma}\label{lemma:strip-no_collision_outside}
  No collision can happen on the space outside the interval $[\Posi; \Posi+w]$ in a \Strip{p}{q}{x}{w}.
\end{lemma}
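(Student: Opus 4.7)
The plan is to prove two invariants: (i) at every time $t\ge 0$, any signal located at a position $x<\Posi$ is an \SigL; and (ii) at every time $t\ge 0$, any signal located at a position $x>\Posi+w$ is an \SigR. Because all \SigL\ signals travel at the common speed $-1$ and all \SigR\ signals travel at the common speed $\nu$, two signals simultaneously present in the same outer half-line cannot meet, so invariants (i) and (ii) immediately forbid any collision in the open half-lines $x<\Posi$ and $x>\Posi+w$, which is exactly the statement of the lemma.

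I would prove (i) and (ii) simultaneously by contradiction, considering the first collision (in the ordering of the sequence of collision times from \Def{def:dynamics}) that occurs outside $[\Posi,\Posi+w]$. Say this collision has coordinates $(x^*,t^*)$ with $x^*<\Posi$; the case $x^*>\Posi+w$ is symmetric. By minimality of $t^*$, no collision has taken place in the region $x<\Posi$ strictly before $t^*$, so no new stationary signal can have been created there; combined with the fact that the initial configuration contains no \SigS\ at any position $x<\Posi$, this rules out the presence of an \SigS\ at $(x^*,t^*)$. Furthermore, every \SigR\ arising in the diagram is created either initially at $\Posi$ or $\Posi+w$, or at a collision whose coordinates lie in $[\Posi,\Posi+w]\times[0,t^*)$ by the minimality of $t^*$; since an \SigR\ drifts rightward with positive speed $\nu$, it cannot reach $x^*<\Posi$ from any such birthplace. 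The signals present at $(x^*,t^*)$ must therefore all be \SigL's, all of the common speed $-1$, which cannot form a collision~---~a contradiction.

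The argument explicitly allows the outer half-lines to carry arbitrarily many signals: the initial \SigL\ at $\Posi$ propagates into $x<\Posi$, the initial \SigR\ at $\Posi+w$ propagates into $x>\Posi+w$, and later collisions inside the strip continue to emit \SigL's and \SigR's that eventually cross the walls. The point is only that these escaping signals all share a single speed in their respective outer region, which prevents any collision there. The only delicate step is checking the two birthplace facts used above (no \SigR\ is ever born in $x<\Posi$, no \SigS\ is ever born outside $[\Posi,\Posi+w]$); both are direct consequences of the minimality of $t^*$ and of the initial placement of signals. Once (i) and (ii) are secured, the parallelism of same-speed signals immediately yields the impossibility of collisions outside $[\Posi,\Posi+w]$.
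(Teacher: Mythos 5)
Your proof is correct and follows essentially the same route as the paper's: both rest on the observation that the only signals ever present in the half-line $x<\Posi$ (resp.\ $x>\Posi+w$) are \SigL's (resp.\ \SigR's), which are parallel and hence cannot collide. Your first-bad-collision argument merely makes rigorous the birthplace facts that the paper's two-sentence proof leaves implicit.
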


\begin{proof}
  Any signal going crossing the left-most signal \SigS, initially placed in $\Posi$ is necessarily a signal \SigL.
  By the form of the configuration \Config[0], there is no signal placed before the position $\Posi$ and since signals going on the left
  side of the first stationary signal are all parallel, there is no possible collision on space before the position $\Posi$.
  The same happens symetrically on the other side.
\end{proof}

We show next that for proper parameters, a \StripName is a regular structure, composed by a vertical grid with parallel signals leaving on both sides.

\begin{lemma}\label{lemma:strip_periodicity}
  Any \Strip{p}{q}{\Posi}{w} becomes periodic on the interval $[\Posi;\Posi+w]$ after the time $t_T = \frac{q}{p+q}\cdot w$.
  After this time, the period is given by $T=\frac{w}{p}$.
\end{lemma}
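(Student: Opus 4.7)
The plan is to describe explicitly the contents of the strip during one period $T = w/p$ and to show that the configuration is invariant under $t \mapsto t+T$ once the transient ends at $t_T = qw/(p+q)$.

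I first set up the per-subdivision crossing times: $\tau_R = \frac{qw}{p(p+q)}$ for a \SigR (of speed $p/q$) and $\tau_L = \frac{w}{p+q}$ for a \SigL (of speed $-1$) across a subdivision $[x_i, x_{i+1}]$ of width $w/(p+q)$, where $x_i = \Posi + i w/(p+q)$. The key arithmetic identity is $\tau_R + \tau_L = w/p = T$. Since \SMnu is a support machine, every collision with an interior stationary \SigS regenerates all three meta-signals, so each original moving signal continues undisturbed at its speed through every interior boundary it meets while emitting a ``back'' signal at each such collision. Consequently, the original \SigR reaches $x_p = \Posi + pw/(p+q)$ at time $p\tau_R = t_T$ and the original \SigL reaches the same $x_p$ at time $q\tau_L = t_T$, producing a triple collision with the \SigS already at $x_p$.

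The core step is an induction on $i$ showing that inside each subdivision $[x_i, x_{i+1}]$ the back-emitted signals settle into a zigzag of period $T$: there exists a time $\sigma_i$ such that for every $k \geq 0$, a \SigR leaves $x_i$ at time $\sigma_i + kT$ and reaches $x_{i+1}$ at $\sigma_i + \tau_R + kT$, while simultaneously a \SigL is emitted from $x_{i+1}$ and reaches $x_i$ at $\sigma_i + (k+1)T$. The induction rests on the phase relation $\sigma_{i+1} \equiv \sigma_i + \tau_R \pmod{T}$, which makes the \SigR exiting subdivision $i$ at $x_{i+1}$ and the \SigL exiting subdivision $i+1$ at $x_{i+1}$ arrive simultaneously, so that every boundary event is a single triple collision with the already-present \SigS. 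Explicitly, $\sigma_i = i\tau_R$ for $0 \leq i \leq p-1$ (the subdivisions first reached from the left by the original \SigR) and $\sigma_i = (p+q-i) w /(p+q)$ for $p \leq i \leq p+q-1$ (those first reached from the right by the original \SigL); a direct computation verifies that $\sigma_i \equiv i\tau_R \pmod{T}$ in both cases.

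To conclude, for $i \leq p-1$ the zigzag starts at $\sigma_i \leq (p-1)\tau_R < t_T$, and for $i \geq p$ the transient (the original \SigL traversing the subdivision) starts at $(p+q-i-1)\tau_L \leq (q-1)\tau_L < t_T$. Hence for every $t \geq t_T$ and every $i$, the signal present in subdivision $i$ at time $t$ is determined entirely by the phase $(t - \sigma_i) \bmod T$, which is visibly invariant under $t \mapsto t+T$; combined with \Lem{lemma:strip-no_collision_outside}, this yields $\Config[t+T] = \Config[t]$ on $[\Posi, \Posi+w]$ for all $t \geq t_T$. The main obstacle is the inductive step, i.e.\ showing that no \SigR and \SigL ever meet strictly inside a subdivision (which would generate an unwanted interior \SigS and derail the zigzag); this is precisely what the choice of $p+q$ equal subdivisions enforces through the phase relation $\sigma_{i+1} = \sigma_i + \tau_R \pmod{T}$, which forces each signal meeting to land on a boundary $x_i$ where a \SigS is already present.
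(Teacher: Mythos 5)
Your proof is correct, and it follows the same geometric skeleton as the paper's: compute the crossing times across one subdivision, observe that the two original moving signals meet exactly at the grid point $x_p=\Posi+\tfrac{p}{p+q}w$ at time $t_T$, and argue that from then on every \SigR--\SigL meeting is a triple collision on a subdivision boundary, so the diagram repeats with period equal to one back-and-forth. Where you genuinely add value is in the synchronization step: the paper only proves that a triple collision regenerates a triple collision one period later and that the single ``central'' collision lands on the grid, and then asserts that all collisions after $t_C$ are triple collisions; it never tracks the secondary signals emitted while the original \SigR and \SigL sweep through the interior \SigS's during the transient. Your induction with the explicit phases $\sigma_i$ and the congruence $\sigma_{i+1}\equiv\sigma_i+\tau_R\pmod T$ (verified by the identity $\sigma_i-i\tau_R=(p-i)T$ on the right half) is precisely the missing bookkeeping that shows those transient-generated signals all fall into phase and never meet strictly inside a subdivision. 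One further point in your favour: your computation $\tau_R+\tau_L=\frac{qw}{p(p+q)}+\frac{w}{p+q}=\frac{w}{p}$ is the one consistent with the stated period $T=\frac{w}{p}$, whereas the paper's proof body takes the crossing time of a speed-$\nu$ signal over distance $l$ to be $\nu l$ rather than $l/\nu$ and consequently derives $(1+\tfrac{p}{q})\cdot\tfrac{w}{p+q}=\tfrac{w}{q}$, ending with ``$T=\tfrac{w}{q}$'' in contradiction with the lemma it is proving; your version resolves that discrepancy correctly.
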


This lemma means that for all $t\geq t_T$, for all $x\in [\Posi; \Posi + w]$ we have: $\Config[t+T](x)=\Config[t](x)$.
We cannot have a complete periodicity on the whole space because of signals leaving the \StripName on both sides.
That is why we restrict the study of a strip evolution to the interval defined by the extremal stationary signals, that is the interval $[\Posi; \Posi + w]$.

\begin{proof}
  Let us begin with a simple remark: for three signals \SigS with same distance between the two first and the two last signals, 
  times of back-and-forth of signals \SigL and \SigR starting both from the central \SigS are the same, as illustrated by \Fig{fig:noaccu3-zig-zag_strip}.
  Indeed, for two signals \SigS spaced by a distance $l$ and a signal \SigR of speed $\nu$ starting from the first \SigS, 
  it will take a time $\nu l$ to \SigR for going the second \SigS. After the collision, \SigR makes a bounce on \SigS and is turned
  into a signal \SigL (of speed $-1$), which will need a time $1\cdot l$ to go back to the first \SigS.
  The total time of the back-and-forth is $\nu l+l=(1+\nu)l$. 
  The symetric back-and-forth (when starting with \SigL from the right \SigS) requires the same amount of time $(1+\nu)l$.
  So for three signals \SigS so that the middle \SigS is at distance $l$ of the two other \SigS, if one signal \SigL and one signal \SigR
  leave the central \SigS at the same time, since the time of a back-and-forth is the same for both of them (because the distance to run is the same),
  they will collide with the central \SigS exactly at the same time and the collision involved will be a triple collision.
  By the rules of a support machine, all possible signals will be output from this collision: \SigR and \SigL will leave at the same time 
  and the previous computation will apply again. So from a triple collision, there will be a triple collisions after each duration $(1+\nu)l$.
  For the same reasons, this also holds for any number of signals \SigS if two successive \SigS are spaced by the same distance.
  In the case of a \Strip{p}{q}{\Posi}{w}, the speed is $\nu=\frac{p}{q}$,
  the distance between \SigS signals is given by $\frac{w}{p+q}$ and so the time of a back-and-forth in a subdivision is 
  $(1+\frac{p}{q})\cdot \frac{w}{p+q} = \frac{w}{q}$.

  To claim that all collisions between a \SigR signal and a \SigL signal also involves a stationary signal \SigS,
  it remains to show that the collision between the two non-stationary signals of the initial configuration 
  happens exactly at the position of an initial \SigS.
  Let us compute the coordinates of this ``central'' collision between the initial \SigR and \SigL signals, initially disposed at positions \Posi and $\Posi+w$.
  This coordinates $(x_C,t_C)$ satisfy $\nu\cdot t_C+\Posi=-t_C+\Posi+w$ et $x_C=\nu\cdot t_C+\Posi$, 
  that is $x_C=\frac{\nu}{\nu+1} w+\Posi$ and $t_C=\frac{w}{\nu+1}$.
  Since $\nu=\frac{p}{q}$, we get $x_C=\frac{p}{p+q} w+\Posi$ and $t_C=\frac{q}{p+q} w$.
  So the position $x_C$ is $\frac{p}{p+q}w+\Posi$ and can indeed be put in the form $\Posi+\frac{i}{p+q}w$ where $0\leq i\leq n$,
  which corresponds to the position of an initial signal \SigS.

  With the remark made at the beginning of the proof, we can conclude that, from the time $t_C$, 
  all collisions (except the ones happening on the two extremal walls) are triple collisions.
  Thus the diagram become periodic between the two walls (\ie between positions \Posi and $\Posi+ w$)
  after the time $t_T=t_C$, and the duration of a period is given by the time of a back-and-forth \ie $T=\frac{w}{q}$.
\end{proof}
 
Regarding the ``external parts'' of the \StripName, some signals are generated with a regular spacing and propagate indefinitely.
On the left part, signals \SigL (moving on the left) are all spaced by a distance $d = (1+\frac{p}{q})\times\frac{w}{p+q}$,
which is the distance covered by a signal \SigL during the time of a back-and-forth on a subdivision.
On the right part, signals \SigR are all spaced by a horizontal distance $d=\frac{(1+\frac{p}{q})\cdot w\cdot p}{q(p+q)}$.

In fact, the number $p+q$ of subdivisions corresponds to $1/gcd\left(\frac{\nu}{\nu+1},1\right)$ 
(note that when $\nu$ is rationnal, this number is indeed an integer).
This value is deduced from the study of the coordinates of the collision $C$.
Any other multiple of $1/gcd\left(\frac{\nu}{\nu+1},1\right)$ for the number of subdivisions also allows to get a \StripName 
which becomes perdiodic after the time $t_T$ (the difference being that the subdivisions are more or less narrow in function of the multiple chosen).

\bigskip

From \StripName[s] that we use as elementary structures, we can now define a more general and regular structure:

\begin{definition}[Mesh]\label{def:multistrip}
  Let be $k \in\N$ and $\Posi, w\in\R^{+}$.
  We call {\em \StripMulti{p}{q}{\Posi}{w}{k}} a diagram generated by \SMnu from the initial configuration:
  \[
  \left\{\SigAt{[\SigL,\SigS,\SigR]}{\Posi+l\cdot w},\ \SigAt{\SigS}{\Posi+\left(\frac{i}{p+q}+j\right)\cdot w}\ \middle|\ 0\leq l \leq k,\ 0<i<p+q\text{ and }0\leq j<k\right\}\enspace.
  \]
\end{definition}

A \StripMulti{p}{q}{\Posi}{w}{k} corresponds to $k$ copies of a \Strip{p}{q}{\Posi}{w} juxtaposed side by side so that the walls of two glued copies are superposed.
Since all copies have the same parameters $w$ (the width) and $p+q$ (the number of equal subdivisions of $w$, also the number of zig-zag substrips), 
their substrips have all the same width given by $\frac{w}{p+q}$. 

Note that a \Strip{p}{q}{\Posi}{w} is a \StripMulti{p}{q}{\Posi}{w}{1}.
A \StripMulti{2}{3}{0}{1}{3} is given in \Fig{fig:noaccu3-multi-strip_example}: it corresponds to three copies of the \StripName given in \Fig{fig:noaccu3-strip_example}.

\begin{figure}[hbt]
  \centering
  \includegraphics[width=.7\textwidth]{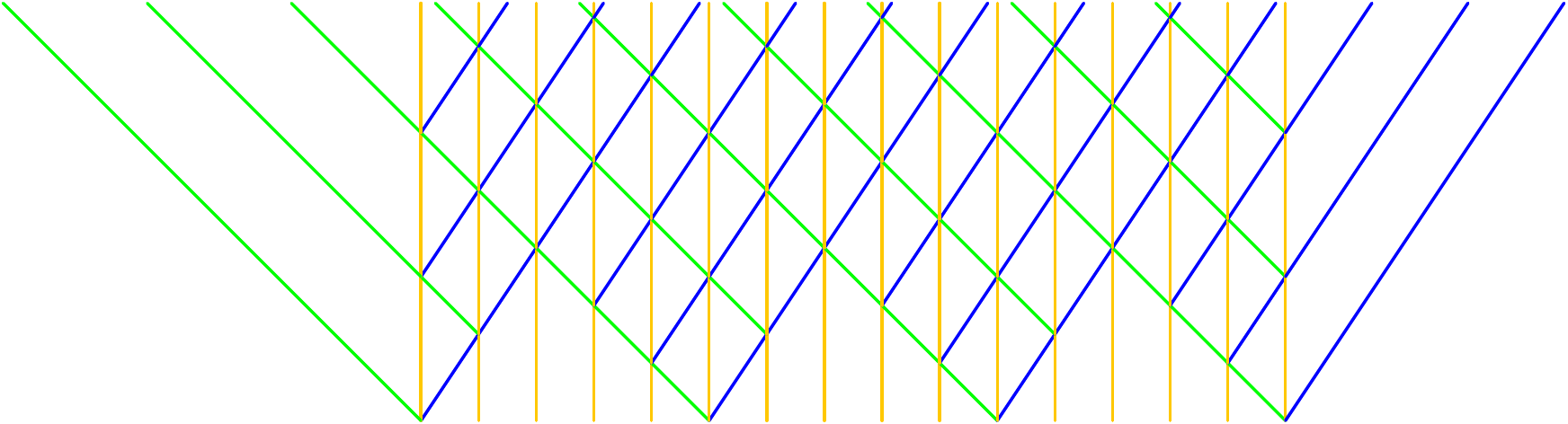}%
  \caption{A  \StripMulti{2}{3}{0}{1}{3}.}\label{fig:noaccu3-multi-strip_example}
\end{figure}

\medskip 

The regularity of \StripMultiName[es] can be deduced from the one of \StripName[s]:

\begin{lemma}\label{lemma:multi-strip_periodicity}
  Any \StripMulti{p}{q}{\Posi}{w}{k} becomes periodic in the interval $[x_0; x_0+k\cdot w]$, 
  with a period $T = \frac{w}{p}$ and after the time $t_T = \frac{q}{p+q}\cdot w$.
\end{lemma}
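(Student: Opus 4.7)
The plan is to reduce the periodicity of a mesh to the periodicity of each of its constituent strips, already established in \Lem{lemma:strip_periodicity}, by arguing that the $k$ copies glued together evolve exactly as $k$ independent strips would, plus some harmless additional incoming signals at the shared walls.

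First, I would observe that the initial configuration of a \StripMulti{p}{q}{\Posi}{w}{k}, when restricted to each sub-interval $I_l = [\Posi + l\cdot w, \Posi + (l+1)\cdot w]$ for $0 \leq l < k$, coincides exactly with the initial configuration of the single strip \Strip{p}{q}{\Posi + l\cdot w}{w}. In particular, at every internal wall $\Posi + l\cdot w$ (for $0 < l < k$) one has the three signals $[\SigL, \SigS, \SigR]$, i.e.\ the leftmost signals of strip $l$ and the rightmost signals of strip $l-1$ are superposed on a single \SigS; no information is lost by viewing the wall as belonging simultaneously to both adjacent strips.

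Next I would argue that the evolution of the mesh inside each $I_l$ coincides with that of the corresponding standalone strip. By \Lem{lemma:strip-no_collision_outside}, in a standalone strip no collision occurs outside $[\Posi; \Posi + w]$; consequently the only way adjacent strips can interact in the mesh is through signals arriving at the shared walls. By the translation symmetry of the initial configuration (each strip is an exact copy of its neighbour, shifted by $w$, started at the same time $0$), the \SigR emitted by the left neighbour and the \SigL emitted by the right neighbour arrive at a shared wall at exactly the same instants as the \SigL and \SigR that a standalone strip would itself produce at that wall. Since we work with the support machine \SMnu, whose unique collision rule outputs the full signal set $\SigSet'$ regardless of the number of incoming signals (\Def{def:support-machine}), the outcome of such enriched collisions at the walls is identical to the outcome of the standalone-strip collisions. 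Hence the restriction of the mesh diagram to $I_l$ is exactly the standalone diagram of \Strip{p}{q}{\Posi + l\cdot w}{w}.

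Applying \Lem{lemma:strip_periodicity} to each of the $k$ strips, each $I_l$ becomes periodic after time $t_T = \frac{q}{p+q}\cdot w$ with period $T$; taking the union over $l$ gives periodicity of the mesh on $[\Posi; \Posi + k\cdot w]$ with the same time threshold and the same period. The main subtle point, and the only place where something could go wrong, is the synchronization argument at the shared walls: one must check that incoming signals from one side arrive precisely when their mirror counterparts would be emitted, so that no spurious out-of-phase collision is created. This follows from the exact symmetry of the initial data and the fact that the support-machine collision rules are insensitive to the multiplicity of the incoming signal set, but it is the step that requires the most care to state cleanly.
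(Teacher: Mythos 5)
Your proposal is correct and follows essentially the same route as the paper's proof: decompose the \StripMultiName into $k$ \StripName[s] with identical parameters, check that the collisions at the shared walls are synchronized so that each \StripName evolves as it would in isolation, and then invoke \Lem{lemma:strip_periodicity}. If anything you are more explicit than the paper about the junction synchronization; the only refinement worth adding is that the symmetry needed there is not merely translation between neighbouring \StripName[s] but also the fact that the left and right walls of a single \StripName share the same collision schedule, which holds because the back-and-forth time over one subdivision, $\frac{w}{p+q}\bigl(1+\frac{1}{\nu}\bigr)=\frac{w}{p}$, is the same in both directions.
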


\begin{proof}
  For all $j\in\llbracket 0;k-1\rrbracket$, the configuration
  \[\left\{\ \SigAt{\SigS}{\Posi+\left(\frac{i}{p+q}+j\right)\cdot w}, \SigAt{\SigL}{\Posi+j\cdot w}, \SigAt{\SigR}{\Posi+j\cdot w}\ \middle|\ 0\leq i\leq p+q\ \right\}\]
  generates a \Strip{p}{q}{\Posi+jw}{w} (by definition of a \StripName).
  The initial configuration corresponding to the \StripMulti{p}{q}{\Posi}{w}{k} can be decomposed into such configurations (with a junction at each position $x=\Posi+j\cdot w$).
  Since all these configurations correspond to \StripName[s] having all the same parameters, back-and-forths in all subdivisions take the same duration. 
  It follows from the proof of \Lem{lemma:strip_periodicity} that all collisions between signals \SigL and \SigR also involve stationary signals \SigS, 
  including the ones at the junction of two \StripName[s] of the \StripMultiName.
  Since all \StripName[s] are periodic with the same period from the same time $t_C$, the \StripMultiName is also periodic.
  Its period is the same that the \StripName[s] and is given by \Lem{lemma:strip_periodicity}: 
  the \StripMulti{p}{q}{\Posi}{w}{k} is perdiodic in the interval $[x_0; x_0+k\cdot w]$ with a period $T=\frac{w}{p}$ 
  and from the time $t_T=\frac{q}{p+q}\cdot w$.
\end{proof}

One essential consequence of this lemma is the following corollary:

\begin{corollary}\label{coro:multi-strip_no-accu}
  No accumulation can occur on a \StripMulti{p}{q}{\Posi}{w}{k}.
\end{corollary}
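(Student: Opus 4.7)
The plan is to deduce the corollary essentially for free from the periodicity established in \Lem{lemma:multi-strip_periodicity}, by observing that a periodic pattern can only contain countably many collisions with no accumulation point. I would not attempt to re-analyse collisions dynamically; instead I would structurally split the space-time diagram into three regions according to \Lem{lemma:multi-strip_periodicity} and \Lem{lemma:strip-no_collision_outside} (the latter extended to meshes, which is immediate since a mesh is a juxtaposition of strips).

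First, I would argue that outside the interval $[\Posi,\Posi+k\cdot w]$ no collision can ever occur: all signals leaving the mesh laterally do so in parallel families (either \SigL moving left or \SigR moving right with the same speed within each family), by the same argument as in \Lem{lemma:strip-no_collision_outside} applied to the leftmost and rightmost walls. Thus any accumulation would have to lie in the closed slab $[\Posi,\Posi+k\cdot w]\times\Time$.

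Second, I would handle the transient phase $[0,t_T]$ with $t_T = \frac{q}{p+q}\cdot w$: the initial configuration is finite, and in this bounded time no accumulation has yet built up (one can bound the number of collisions crudely by a finite count, since the set of signals stays finite until the first triple-collisions of the periodic regime are formed). Hence no accumulation point of collisions can lie in $[\Posi,\Posi+k\cdot w]\times[0,t_T]$.

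Third, and this is the heart of the argument, I would consider the periodic regime $t\ge t_T$. By \Lem{lemma:multi-strip_periodicity}, for every $t\ge t_T$ and every $x\in[\Posi,\Posi+k\cdot w]$ we have $\Config[t+T](x)=\Config[t](x)$. In particular the set of collisions contained in one fundamental domain $[\Posi,\Posi+k\cdot w]\times[t_T,t_T+T]$ is finite (it is the trace of a periodic tiling with finitely many stationary signals and uniformly spaced triple collisions), and the whole set of collisions in $[\Posi,\Posi+k\cdot w]\times[t_T,+\infty)$ is its translate by integer multiples of $T$. Such a set is locally finite in $\SpaceTime$, so it has no accumulation point. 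Combining the three regions, the entire set of collisions of the mesh is locally finite; by \Lem{lemma:accu-dynamics}, no \Accu can appear anywhere in the diagram. The only subtle point to make rigorous is the claim that one period of the periodic regime contains finitely many collisions, which follows directly from the description inside the proof of \Lem{lemma:strip_periodicity}: each of the $p+q$ subdivisions carries exactly one triple collision per back-and-forth of duration $w/q$, so in fact one period of length $T=w/p$ carries only $k(p+q)\cdot\frac{T}{w/q}=kq$ triple collisions within the mesh.
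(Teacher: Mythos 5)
Your proof follows essentially the same route as the paper's: restrict attention to the slab $[\Posi;\Posi+k\cdot w]$ via \Lem{lemma:strip-no_collision_outside}, observe that the transient phase before $t_T$ contains only finitely many collisions, and use the periodicity from \Lem{lemma:multi-strip_periodicity} to conclude that each period contains only finitely many collisions, so no accumulation can form. (Your closing head-count is off --- $k(p+q)\cdot\frac{T}{w/q}=kq(p+q)/p$, not $kq$ --- but only finiteness per period is needed, so this does not affect the argument.)
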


\begin{proof}
  Because of \Lem{lemma:strip-no_collision_outside}, we just need to show that no accumulation occurs in the interval $[\Posi ; \Posi + k\cdot w]$.
  By \Lem{lemma:multi-strip_periodicity}, any \StripMultiName becomes periodic of period $T=\frac{w}{p}$, after a finite time $t_T$.
  There is only a finite number of collisions before the time $t_T$.
  For any time $t > t_T$, there can be only of finite number of collisions happening during the time interval $[t; t + T]$.
  Since a necessary condition for having an accumulation is the existence of a time interval during 
  which an infinite number of collision occurs, no accumulation can appear in a \StripMultiName.
\end{proof}

\paragraph{Inclusion in a \StripMultiName}

We show here that for any finite initial configuration \Config[0] having only rational ratio between distances,
there is a configuration $\Config[0]^{\StripMultiName}$ of a \StripMultiName $\mathcal{S}$ so that
\Config[0] is included in $\Config[0]^{\StripMultiName}$.
It will follow directly that the whole diagram generated from \Config[0] is entirely included in $\mathcal{S}$.
\Figure{fig:noaccu3_random-diagram-and-mesh} illustrates this idea, 
by providing an example of an arbitrary diagram and a \StripMultiName  that includes the whole arbitrary diagram
(extremal initial positions of the diagram of \Fig{fig:no-accu3_random-ci} match with those of \Fig{fig:no-accu3_random-ci_mesh}).

\begin{figure}[hbt]
  \centering
  \subfigure[Diagram \STD from an arbitrary initial configuration.\label{fig:no-accu3_random-ci}]{\quad%
    \includegraphics[width=.5\textwidth]{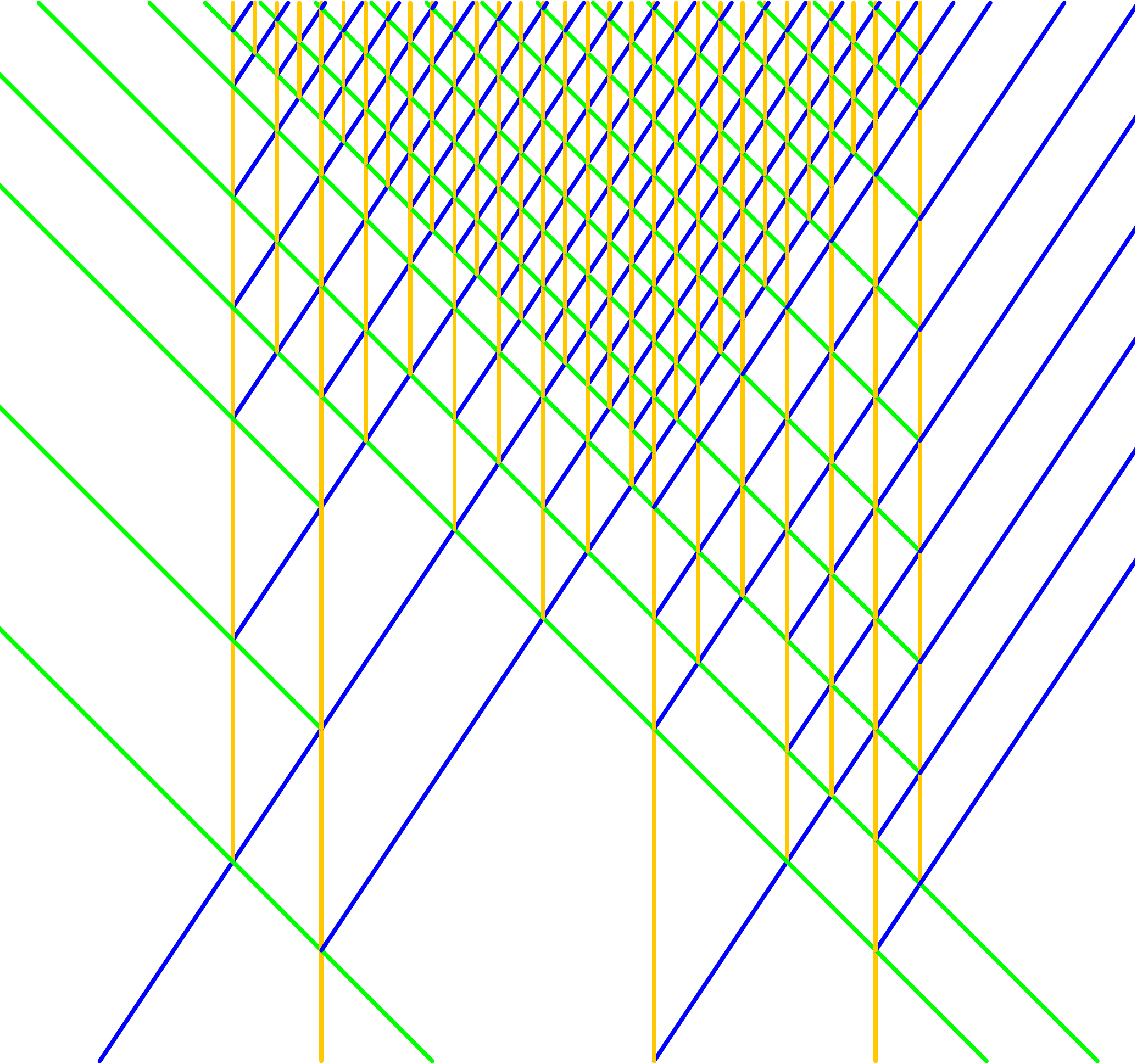}\quad}\\%
  \subfigure[The corresponding \StripMultiName $S$.\label{fig:no-accu3_random-ci_mesh}]{%
    \includegraphics[width=.65\textwidth]{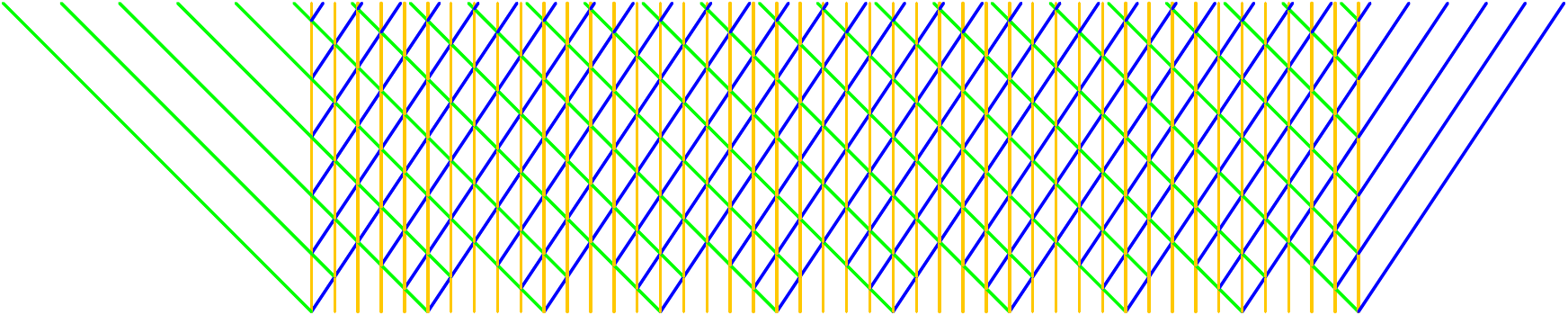}}%
  \caption{A diagram of \SMnu and a \StripMultiName which includes it.}\label{fig:noaccu3_random-diagram-and-mesh}
\end{figure}

\begin{lemma}\label{lem:diagram-to-multi-strip}
  For every space-time diagram \STD generated by \SMnu from a finite initial configuration having only rational ratio between distances,
  there exist $k \in\N$ and $\Posi, w \in\R$ so that \STD is included in a \StripMulti{p}{q}{\Posi}{w}{k}.
\end{lemma}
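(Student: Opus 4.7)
My plan is to exhibit a specific mesh whose initial configuration contains, at each non-\Void position of $\Config[0]$, at least the signals present there; the inclusion $\STD\STDinclu\mathcal{S}$ will then follow at once from the remark that closes \Subsec{subsec:properties}.

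Let $y_1<y_2<\ldots<y_n$ be the non-\Void positions of $\Config[0]$. Because $\STD$ is a diagram of the support machine $\SMnu$, each $y_i$ carries a non-empty subset of $\{\SigL,\SigS,\SigR\}$. The first step is to find a common rational grid on which all the $y_i$ lie. The rational-like assumption on $\Config[0]$ guarantees that every pairwise distance is commensurate with every other one, so every ratio $(y_i-y_1)/(y_2-y_1)$ is rational (the case $n\leq 1$ is trivial and handled by taking $k=1$). Writing each such ratio in lowest terms as $p_i/q_i$ and letting $Q=\mathrm{lcm}(q_2,\ldots,q_n)$, one obtains a positive real $u=(y_2-y_1)/Q$ and non-negative integers $a_1=0,a_2,\ldots,a_n$ with $y_i=y_1+a_i u$ for each $i$.

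With this common unit in hand, I set $x_0=y_1$, $w=u$ and $k=\max_i a_i$, and consider the mesh $\mathcal{S}=\StripMulti{p}{q}{x_0}{w}{k}$. Its initial configuration, read from \Def{def:multistrip}, places the full triple $[\SigL,\SigS,\SigR]$ at every position $x_0+\ell w$ for $0\leq\ell\leq k$, and in particular at each $y_i=x_0+a_i w$. Therefore every signal appearing in $\Config[0]$ is also present in the initial configuration of $\mathcal{S}$ at the same position.

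Both $\STD$ and $\mathcal{S}$ are diagrams of the same support machine $\SMnu$, so the remark stated at the end of \Subsec{subsec:properties} applies verbatim and delivers $\STD\STDinclu\mathcal{S}$, which is exactly the desired conclusion. The only nontrivial step of the plan is the commensurability argument that produces the common unit $u$; the rest reduces to unpacking the definition of a \StripMultiName{} and invoking the inclusion principle for support machines.
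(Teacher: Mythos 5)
Your proof is correct and follows essentially the same route as the paper: both reduce the problem to finding a common real unit $w$ dividing all the pairwise distances in $\Config[0]$, anchor the mesh at the leftmost occupied position with $k$ large enough to reach the rightmost one, observe that the mesh's initial configuration places the full triple $[\SigL,\SigS,\SigR]$ at every $y_i$, and conclude via the inclusion remark for support machines. The only cosmetic difference is that the paper takes $w$ to be the (real) $\gcd$ of the consecutive differences while you take the possibly finer unit $(y_2-y_1)/\mathrm{lcm}(q_i)$; both yield valid mesh parameters.
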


\begin{proof}
  Let \Config[0] be a finite rational-like initial configuration \ie
  $\Config[0] = \{ \sigma_i @ x_i \ | \ x_i \in\R \}_{1\leq i \leq m}$
  where $x_1 \leq x_2 \leq \ldots \leq x_m$ and so that for all $i, j, l\in\llbracket 0;m\rrbracket$ the value $\frac{x_i-x_j}{x_j-x_l}$ $(x_j\neq x_l)$ is rational.
  Let \STD be a diagram generated by \SMnu from \Config[0].
  Let us show that \STD is included in $\mathcal{S}$, the \StripMulti{p}{q}{\Posi}{w}{k} with parameters:
  \begin{itemize}
  \item $w = gcd(x_1-x_0, x_2-x_1,\ldots, x_m-x_{m-1})$,
  \item $k=\frac{x_m-x_0}{w}$.
  \end{itemize}

  These parameters are indeed valid parameters for a \StripMultiName: 
  $w\in\R^{+}$ is well-defined because all ratios between distances in \Config[0] are rational (and is positive);
  and $k\in\N$ by definition of $w$.
  Note that $\mathcal{S}$ starts from the configuration $\Config[0]^{\mathcal{S}}$ defined by:

  \noindent$\left\{\SigAt{[\SigL,\SigS,\SigR]}{\Posi+l\cdot w},\ \SigAt{\SigS}{\Posi+\left(\frac{i}{p+q}+j\right)\cdot w}\ \middle|\  0\leq l \leq k,\ 0<i<p+q,\ 0\leq j<k\right\}.$

  It is enough to show that \Config[0] is included in $\Config[0]^{\mathcal{S}}$ 
  \ie any signal $\sigma_i$ of \Config[0] at position $x_i$ is also present in $\Config[0]^{\mathcal{S}}$ at the same position $x_i$.
  To prove this fact, we show that the position $x_i$ can be written $x_i=\Posi+l\cdot w$, where $0\leq l \leq k$.
  Since there are always the three signals (\SigL, \SigS, \SigR) at each of these positions, we don't have to distinguish some subcases, 
  according to $\sigma$ is either a signal \SigS, \SigL or \SigR.

  The position $x_i$ can be written $x_i = x_0+ l\cdot w$.
  Indeed, the value $w$ divides $x_i-x_0$ because $w = gcd(x_1-x_0, x_2-x_1,\ldots, x_m-x_{m-1})$, and it follows
  that there exists $l\in\N$ such that $x_i-x_0=l\cdot w$, that is: $x_i = x_0 + l\cdot w$.
  Now let us show that $l$ satisfies the inequality of the definition of $\Config[0]^{\StripMultiName}$ \ie that we have $0\leq l \leq k$.
  For $i\in\llbracket 0;m\rrbracket$, we have:
  $$
  \begin{array}{c r@{\ \leq\ } c r@{\ \leq\ } c r}
  0 & & x_i - x_0 & & x_m - x_0 \\
  0 & & l\cdot w  & & x_m - x_0 \\
  0 & & l         & & \frac{x_m - x_0}{w} & \text{ since $w>0$,}\\
  0 & & l         & & k & \text{ by definition of $k$.} \\
  \end{array}
  $$

  So any signal $\sigma$ at position $x_i$ in $\Config[0]$ is also an initial signal in $\Config[0]^{\mathcal{S}}$.
  Since \SMnu is a support machine, it follows directly that $\STD$ is included in the \StripMultiName $\mathcal{S}$.
\end{proof}

We finally obtain:   

\begin{theorem}\label{theo:noaccu3}
  No $3$-speed rational-like signal machine can produce accumulation when started from a rational-like initial configuration.
\end{theorem}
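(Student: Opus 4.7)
My plan is to chain together the reduction lemmata and inclusion results that have been carefully assembled in the preceding subsections, so that the theorem follows essentially as an orchestration of prior results rather than a fresh calculation.

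First, I would reduce to a normalized setting. Given a rational-like $3$-speed machine $\SM^{a,b,c}$ with $a<b<c$, I apply \Lem{lemma:affine-trans} with the affine function $h(x) = (x-b)/(b-a)$ of strictly positive ratio, which maps the speed triple to $(-1,0,\nu)$ where $\nu = (c-b)/(b-a)$. Since all pairwise ratios among $a,b,c$ are rational, $\nu \in \Q$, so I can write $\nu = p/q$ with coprime $p,q \in \N^*$. This gives a rational $3$-speed machine $\SM^{-1,0,\nu}$ producing diagrams topologically equivalent to those of $\SM^{a,b,c}$; in particular, the existence of an accumulation is preserved. In the same way I apply \Lem{lemma:config_affine-transformation} to the initial configuration: since all distances in the configuration are pairwise commensurate, an affine change of variables sends it to a configuration whose pairwise distances are rational.

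Next, I would pass to the support machine. By \Cor{lemma:accu-transfert}, it suffices to show that the support diagram $\Supp{\STD}$ generated by $\Supp{\SM^{-1,0,\nu}} = \SMnu$ from $\Supp{\Config[0]}$ contains no accumulation. Note that $\Supp{\Config[0]}$ still consists of finitely many signals placed at positions with pairwise rational distances.

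Then I invoke the mesh machinery. By \Lem{lem:diagram-to-multi-strip}, there exist $\Posi \in \R$, $w \in \R^+$, and $k \in \N$ such that $\Supp{\STD}$ is included (in the sense of \Def{def:diagram-inclusion}) in the mesh $\mathcal{S} = \StripMulti{p}{q}{\Posi}{w}{k}$. By \Cor{coro:multi-strip_no-accu}, no accumulation occurs in $\mathcal{S}$. Applying \Lem{lemma:accu-diagram_inclusion} to the inclusion $\Supp{\STD} \STDinclu \mathcal{S}$, I conclude that $\Supp{\STD}$ has no accumulation. Transferring back along \Cor{lemma:accu-transfert} and the affine equivalences of \Lem{lemma:affine-trans} and \Lem{lemma:config_affine-transformation}, the original diagram generated by $\SM^{a,b,c}$ from its rational-like initial configuration has no accumulation either.

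The main potential difficulty is making sure the normalization of the initial configuration is compatible with the hypotheses of \Lem{lem:diagram-to-multi-strip}, which requires a finite configuration whose pairwise distances have rational ratios (so that the $\gcd$ defining the parameter $w$ is well-defined and positive). This is exactly what ``rational-like configuration'' provides, and the affine transformation of \Lem{lemma:config_affine-transformation} preserves commensurability of distances. Everything else is bookkeeping: tracking that the affine and support reductions preserve both the finiteness of the configuration and the existence/absence of accumulations, which has already been established by the preceding lemmata.
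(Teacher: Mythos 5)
Your proposal is correct and follows essentially the same route as the paper: normalize the speeds to $-1$, $0$, $\nu=p/q$ via \Lem{lemma:affine-trans}, pass to the support machine \SMnu via \Cor{lemma:accu-transfert}, include the resulting diagram in a \StripMultiName by \Lem{lem:diagram-to-multi-strip}, and conclude with \Cor{coro:multi-strip_no-accu}. The only (harmless) extra step is your affine normalization of the initial configuration, which is not needed since \Lem{lem:diagram-to-multi-strip} only requires rational ratios between distances, exactly the rational-like hypothesis.
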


\begin{proof}
  As mentionned in the paragraph on normalizations, every $3$-speed rational-like signal machine \SM can be reduced 
  to a $3$-speed machine having only rational speeds ($-1$, $0$ and $\nu$) and whom support machine is \SMnu.
  It holds by \Cor{lemma:accu-transfert} that if \SMnu started from a configuration doesn't generate accumulations, then neither does \SM (started from the same configuration).
  By \Lemma{lem:diagram-to-multi-strip}, every diagram generated by \SMnu started from a rational-like configuration is included in a \StripMultiName of \SMnu, 
  and since a \StripMultiName doesn't contain accumulation by \Cor{coro:multi-strip_no-accu},  
  it follows that no accumulation can appear in any diagram of \SMnu, and so, also in any diagram of \SM started from a rational-like configuration.
\end{proof}

%

%
%
%

\section{Conclusion}
\label{sec:conclusion}

We have shown that $2$-speed signal machines can't have accumulations and that $4$-speed can freely create accumulations.
Three-speed signal machines can only accumulate if there is an irrational ratio between speeds or between distances in the initial configuration.

The computing power of $2$-speed signal machines is very limited since the length of a computation is at most quadratic in the number of signals in the initial configuration.
The last constructions in \cite{durand-lose11tcs}, provides a Turing-universal signal machine with four speeds.
The case of $3$ speeds has been studied in \cite{durand-lose13cie}: the same dichotomy arises.
In the rational-like case, the dynamics is cyclic with bounded transient time and period; otherwise, any Turing machine could be simulated.
We conjecture that there is a similar dichotomy for hypercomputation.


%
\small
\bibliographystyle{apalike}
\bibliography{biblio}

\begin{thebibliography}{}

\bibitem[Blum et~al., 1989]{blum+shub+smale89}
Blum, L., Shub, M., and Smale, S. (1989).
\newblock On a theory of computation and complexity over the real numbers:
  {NP}-completeness, recursive functions and universal machines.
\newblock {\em Bulletin of the American Mathematical Society}, 21(1):1--46.

\bibitem[Bournez, 1997]{bournez97icalp}
Bournez, O. (1997).
\newblock Some bounds on the computational power of piecewise constant
  derivative systems.
\newblock In {\em 24th International Colloquium on Automata, Languages and
  Programming (ICALP~'97)}, number 1256 in LNCS, pages 143--153.

\bibitem[Cook, 2004]{cook04}
Cook, M. (2004).
\newblock Universality in elementary cellular automata.
\newblock {\em Complex Systems}, 15(1):1--40.

\bibitem[Duchier et~al., 2010]{duchier+durand-lose+senot10isaac}
Duchier, D., Durand-Lose, J., and Senot, M. (2010).
\newblock Fractal parallelism: Solving {SAT} in bounded space and time.
\newblock In Cheong, O., Chwa, K.-Y., and Park, K., editors, {\em 21st
  International Symposium on Algorithms and Computation (ISAAC~'10)}, number
  6506 in LNCS, pages 279--290. Springer.

\bibitem[Duchier et~al., 2012]{duchier+durand-lose+senot12tamc}
Duchier, D., Durand-Lose, J., and Senot, M. (2012).
\newblock Computing in the fractal cloud: modular generic solvers for {SAT} and
  {Q-SAT} variants.
\newblock In Agrawal, M., Cooper, S.~B., and Li, A., editors, {\em 9th
  International Conference on Theory and Applications of Models of Computation
  (TAMC~'12)}, number 7287 in LNCS, pages 435--447. Springer.

\bibitem[Durand-Lose, 2008a]{durand-lose08cie}
Durand-Lose, J. (2008a).
\newblock Abstract geometrical computation with accumulations: {B}eyond the
  {B}lum, {S}hub and {S}male model.
\newblock In Beckmann, A., Dimitracopoulos, C., and L{\"o}we, B., editors, {\em
  Logic and Theory of Algorithms, 4th International Conference on Computability
  in Europe (CiE~'08) (abstracts and extended abstracts of unpublished
  papers)}, pages 107--116. University of Athens.

\bibitem[Durand-Lose, 2008b]{durand-lose08jac}
Durand-Lose, J. (2008b).
\newblock The signal point of view: From cellular automata to signal machines.
\newblock In Durand, B., editor, {\em Journ{\'e}es Automates cellulaires 2008
  ({JAC~'08})}, pages 238--249.

\bibitem[Durand-Lose, 2009a]{durand-lose09nc}
Durand-Lose, J. (2009a).
\newblock Abstract geometrical computation~3: Black holes for classical and
  analog computing.
\newblock {\em Nat. Comput.}, 8(3):455--572.

\bibitem[Durand-Lose, 2009b]{durand-lose09uc}
Durand-Lose, J. (2009b).
\newblock Abstract geometrical computation and computable analysis.
\newblock In Costa, J. and Dershowitz, N., editors, {\em 8th International
  Conference on Unconventional Computation 2009 (UC~'09)}, number 5715 in LNCS,
  pages 158--167. Springer.

\bibitem[Durand-Lose, 2011]{durand-lose11tcs}
Durand-Lose, J. (2011).
\newblock Abstract geometrical computation~4: small {T}uring universal signal
  machines.
\newblock {\em Theoret. Comp. Sci.}, 412:57--67.

\bibitem[Durand-Lose, 2012]{durand-lose12nc-uc}
Durand-Lose, J. (2012).
\newblock Abstract geometrical computation 7\string: Geometrical accumulations
  and computably enumerable real numbers.
\newblock {\em Nat.\ Comput.}, to appear.
\newblock Special issue on Unconv. Comp.~'11.

\bibitem[Durand-Lose, 2013]{durand-lose13cie}
Durand-Lose, J. (2013).
\newblock Irrationality is needed to compute with signal machines with only
  three speeds.
\newblock In Bonizzoni, P., Brattka, V., and L{\"o}we, B., editors, {\em The
  Nature of Computation, 9th International Conference on Computability in
  Europe (CiE~'13)}, LNCS. Springer.
\newblock To appear.

\bibitem[Hardy and Wright, 1960]{hardy+wright60book}
Hardy, G.~H. and Wright, E.~M. (1960).
\newblock {\em {An Introduction to the Theory of Numbers (4$^{th}$ edition)}}.
\newblock Oxford University Press.

\bibitem[Huckenbeck, 1989]{huckenbeck89tcs}
Huckenbeck, U. (1989).
\newblock Euclidian geometry in terms of automata theory.
\newblock {\em Theoret. Comp. Sci.}, 68(1):71--87.

\bibitem[Jacopini and Sontacchi, 1990]{jacopini+sontacchi90}
Jacopini, G. and Sontacchi, G. (1990).
\newblock Reversible parallel computation\string: an evolving space-model.
\newblock {\em Theoret. Comp. Sci.}, 73(1):1--46.

\bibitem[Margenstern, 2000]{margenstern00tcs}
Margenstern, M. (2000).
\newblock Frontier between decidability and undecidability: a survey.
\newblock {\em Theor. Comput. Sci.}, 231(2):217--251.

\bibitem[Mazoyer, 1996]{mazoyer96}
Mazoyer, J. (1996).
\newblock Computations on one-dimensional cellular automata.
\newblock {\em Annals of Mathematics and Artificial Intelligence}, 16:285--309.

\bibitem[Mazoyer and Terrier, 1999]{mazoyer+terrier99}
Mazoyer, J. and Terrier, V. (1999).
\newblock Signals in one-dimensional cellular automata.
\newblock {\em Theoret. Comp. Sci}, 217(1):53--80.

\bibitem[Mycka et~al., 2006]{dacosta06}
Mycka, J., Coelho, F., and Costa, J.~F. (2006).
\newblock {The Euclid Abstract Machine: Trisection of the Angle and the Halting
  Problem}.
\newblock In Calude, C.~S., Dinneen, M.~J., Paun, G., Rozenberg, G., and
  Stepney, S., editors, {\em 5th International Conference on Unconventional
  Computation (UC~'06)}, number 4135 in LNCS, pages 195--206. Springer.

\bibitem[Ollinger and Richard, 2011]{ollinger+richard11tcs}
Ollinger, N. and Richard, G. (2011).
\newblock Four states are enough!
\newblock {\em Theoret. Comp. Sci.}, 412(1-2):22--32.

\bibitem[Rogozhin, 1996]{rogozhin96}
Rogozhin, Y. (1996).
\newblock Small universal turing machines.
\newblock {\em Theoret. Comp. Sci.}, 168(2):215--240.

\bibitem[Weihrauch, 2000]{weihrauch00}
Weihrauch, K. (2000).
\newblock {\em {C}omputable {A}nalysis: an introduction}.
\newblock Springer Verlag.

\bibitem[Woods and Neary, 2007]{neary+woods07mcu}
Woods, D. and Neary, T. (2007).
\newblock Four small universal turing machines.
\newblock In Durand-Lose, J. and Margenstern, M., editors, {\em 5th
  International Conference on Machines, Computations, and Universality
  (MCU~'07)}, number 4664 in LNCS, pages 242--254. Springer.

\end{thebibliography}
\end{document}